\newif\ificml
\theoremstyle{plain}
\newtheorem{theorem}{Theorem}[section]
\newtheorem{lemma}[theorem]{Lemma}
\newtheorem{corollary}[theorem]{Corollary}
\theoremstyle{definition}
\newtheorem{definition}[theorem]{Definition}
\newtheorem{assumption}[theorem]{Assumption}
\theoremstyle{remark}
\newtheorem{observation}[theorem]{Observation}
\newtheorem{claim}[theorem]{Claim}
\newcommand{\cX}{\mathcal{X}}
\newcommand{\cM}{\mathcal{M}}
\newcommand{\cA}{\mathcal{A}}
\newcommand{\cI}{\mathcal{I}}
\newcommand{\ind}{\mathbf{1}}
\newcommand{\eps}{\varepsilon}
\newcommand{\N}{\mathbb{N}}
\newcommand{\R}{\mathbb{R}}
\newcommand{\E}{\mathbb{E}}
\newcommand{\cP}{\mathcal{P}}
\newcommand{\cS}{\mathcal{S}}
\newcommand{\cU}{\mathcal{U}}
\newcommand{\cH}{\mathcal{Y}}
\newcommand{\by}{\mathbf{y}}
\newcommand{\bz}{\mathbf{z}}
\newcommand{\bx}{\mathbf{x}}
\renewcommand{\phi}{\varphi}
\newcommand{\cost}{\mathrm{cost}}
\newcommand{\tih}{\tilde{h}}
\newcommand{\tD}{\tilde{D}}
\newcommand{\hD}{\hat{D}}
\newcommand{\hath}{\hat{h}}
\newcommand{\tn}{\tilde{n}}
\newcommand{\tO}{\tilde{O}}
\newcommand{\cC}{\mathcal{C}}
\newcommand{\cO}{\mathcal{O}}
\DeclareMathOperator{\Lap}{Lap}
\DeclareMathOperator{\Exp}{Exp}
\DeclareMathOperator{\argmin}{argmin}
\newcommand{\subs}{\mathrm{s}}
\newcommand{\EM}{\textsc{ExpMech}}
\newcommand{\smcalg}{\textsc{DPSubmodGreedy}}
\newcommand{\setcovalg}{\textsc{DPGreedyScaling}}
\newcommand{\rem}{\textsc{Repeated-EM}}
\newcommand{\rat}{\textsc{Repeated-AT}}
\newcommand{\optsc}{\mathrm{SetCov}}
\newcommand{\setcov}{\mathrm{SetCov}}
\newcommand{\costsc}{\mathrm{CostSetCov}}
\newcommand{\threshmon}{\textsc{ThreshMonitor}}
\newcommand{\mult}{\mathrm{mult}}
\newcommand{\swapr}{\mathrm{SwapRound}}
\newcommand{\contgreedy}{\mathrm{PrivContGreedy}}
\newcommand{\rank}{\mathrm{rank}}
\newcommand{\opt}{\mathrm{OPT}}
\newcommand{\RETURN}{\STATE \textbf{return}~}
\renewcommand{\setminus}{\smallsetminus}
\newcommand{\PARAMETERS}{\STATE \hspace{-3.5mm}{\bf Parameters:}\xspace}
\definecolor{Gred}{RGB}{219, 50, 54}
\definecolor{Ggreen}{RGB}{60, 186, 84}
\definecolor{Gblue}{RGB}{72, 133, 237}
\definecolor{Gyellow}{RGB}{247, 178, 16}
\definecolor{ToCgreen}{RGB}{0, 128, 0}
\definecolor{myGold}{RGB}{231,141,20}
\definecolor{myBlue}{rgb}{0.19,0.41,.65}
\definecolor{myPurple}{RGB}{175,0,124}
\providecommand{\Comments}{0}
\newcommand{\mytodo}[1]{\ifnum\Comments=1{#1}\fi}
\newcommand{\pritishimp}[1]{\ifnum\Comments<4\todo[linecolor=Gred,backgroundcolor=Gred!25,bordercolor=Gred]{Pritish: #1}\fi}
\newcommand{\pritish}[1]{\ifnum\Comments<3\todo[linecolor=myGold,backgroundcolor=myGold!25,bordercolor=myGold]{Pritish: #1}\fi}
\newcommand{\pritishinfo}[1]{\ifnum\Comments<2\todo[linecolor=Ggreen,backgroundcolor=Ggreen!25,bordercolor=Ggreen]{Pritish: #1}\fi}
\newcommand{\tableoftodos}{\ifnum\Comments=1 \listoftodos[Comments/To Do's] \fi}
\icmltitlerunning{Individualized Privacy Accounting via Subsampling}
\begin{document}

\twocolumn[
\icmltitle{Individualized Privacy Accounting via Subsampling \\ with Applications in Combinatorial Optimization}




\begin{icmlauthorlist}
\icmlauthor{Badih Ghazi}{gmtv}
\icmlauthor{Pritish Kamath}{gmtv}
\icmlauthor{Ravi Kumar}{gmtv}
\icmlauthor{Pasin Manurangsi}{gth}
\icmlauthor{Adam Sealfon}{gny}
\end{icmlauthorlist}

\icmlaffiliation{gmtv}{Google Research, Mountain View, USA}
\icmlaffiliation{gny}{Google Research, New York, USA}
\icmlaffiliation{gth}{Google Research, Thailand}

\icmlcorrespondingauthor{Pasin Manurangsi}{pasin@google.com}


\vskip 0.3in
]



\printAffiliationsAndNotice{\icmlEqualContribution} 

\begin{abstract}
In this work, we give a new technique for analyzing individualized privacy accounting via the following simple observation: if an algorithm is one-sided add-DP, then its subsampled variant satisfies two-sided DP. From this, we obtain several improved algorithms for private combinatorial optimization problems, including decomposable submodular maximization and set cover. Our error guarantees are asymptotically tight and our algorithm satisfies \emph{pure-}DP while previously known algorithms~\cite{GLMRT10,CNZ21} are \emph{approximate-}DP. We also show an application of our technique beyond combinatorial optimization by giving a pure-DP algorithm for the shifting heavy hitter problem in a stream; previously, only an approximate-DP algorithm was known~\cite{KMS21,CL23}.
\end{abstract}

\section{Introduction}

In combinatorial optimization, we typically wish to select a discrete object to minimize or maximize  certain objective functions subject to certain constraints. 
%
%
In several settings, such objective functions or constraints may depend on sensitive information of users.
For example, clustering and facility location tasks may involve taking users' location information as part of the objectives or constraints. Similarly, data summarization may require user-produced examples as part of the objective~\cite{MirzasoleimanBK16}. Due to this, several works have considered studying combinatorial optimization problems under \emph{differential privacy (DP)}~\cite{dwork2006calibrating,dwork2006our}---a widely-used and rigorous notion to quantify privacy properties of an algorithm. To state the definition, we use $\cX$ to denote the domain of each user's data. Two datasets $D, D' \subseteq \cX^*$ are said to be \emph{add-remove neighbors} if $D$ is a result of adding an element to $D'$ or removing an element from $D'$.

\begin{definition}[Add-remove DP,~\citet{dwork2006calibrating}] \label{def:dp}
A randomized algorithm $\cM: \cX^* \to \cO$ is \emph{$(\eps, \delta)$-DP} if, for every add-remove neighboring datasets $D, D'$ and every set $S \subseteq \cO$ of outcomes, we have \\
\centerline{$\Pr[\cM(D) \in S] \leq e^{\eps} \cdot \Pr[\cM(D') \in S] + \delta.$}
\end{definition}

When $\delta > 0$, we say that the algorithm satisfies \emph{approximate-DP}; when $\delta = 0$, we say that the algorithm satisfies \emph{pure-DP}. The latter is preferable as it provides stronger privacy protection; more specifically, it does not allow for ``catastrophic failure'' where the sensitive input data is leaked in the clear.

\textbf{Submodular Maximization.}
In submodular maximization, we are given a submodular\footnote{A set function $F$ is \emph{submodular} if, for every $S \subseteq T \subseteq [m]$ and $v \in [m]$, we have $F(S \cup \{v\}) - F(S) \geq F(T\cup\{v\})-F(T)$.} set function $F: 2^{[m]} \to \R$, where $[m] = \{1, \dots, m\}$ is the universe. The goal is to find a subset $T \subseteq [m]$ that maximizes $F(T)$ under certain constraints. In this work, we consider two types of constraints:
\begin{itemize}[nosep]
\item \emph{Cardinality Constraint}: Here we want $|T| = k$.
\item \emph{Matroid Constraint}: Given a rank-$k$ matroid $M = ([m], \cI)$, we want $T$ to be an independent set of the matroid (i.e., $T \in \cI$). Note that this generalizes the cardinality constraint (when $M$ is the uniform matroid).   
\end{itemize}
Submodular maximization is among the most well-studied problems in combinatorial optimization; several algorithms date
back to the 70's~\cite{NemhauserWF78} and many variants continue to be studied to this day (e.g.,~\citet{DuettingFLNZ23,BanihashemBGHJM23}).

In this work, we consider the $1$-decomposable\footnote{More generally, \emph{$\lambda$-decomposable} refers to the same definition but with $f_x: 2^{[m]} \to [0, \lambda]$. All results discussed in this paper applied to $\lambda$-decomposable functions as well by appropriately scaling the functions.} monotone submodular maximization problem\footnote{Aka the \emph{Combinatorial Public Project} problem~\cite{PSS08}.}. Here each $x \in \cX$ is associated with a monotone submodular function $f_x: 2^{[m]} \to [0, 1]$ and the goal is to maximize $F_D := \sum_{x \in D} f_x$. 

For DP submodular maximization under cardinality constraint, \citet{GLMRT10} gave a polynomial-time $(\eps, \delta)$-DP algorithm that achieves\footnote{An output $T^*$ is said to achieve \emph{$\alpha$-approximation}  and \emph{$\kappa$-error} if $F(T^*) \geq \alpha \cdot \opt - \kappa$, where $\opt$ is the optimum.} $\left(1 - \frac{1}{e}\right)$-approximation and $O\left(\frac{k \log m \log(1/\delta)}{\eps}\right)$-error. They also show a lower bound of $\Omega\left(\frac{k \log m}{\eps}\right)$ on the error. Since $\left(1 - \frac{1}{e} + o(1)\right)$-approximation is NP-hard \cite{Feige98}, Gupta et al.'s result is tight up to the $O(\log(1/\delta))$ factor  in the error.
The matroid constraint case was first studied by \citet{MBKK17}, who gave an efficient $(\eps,\delta)$-DP $\frac{1}{2}$-approximation and the same error bound. Recently, \citet{CNZ21} improved the approximation ratio to $\left(1 - \frac{1}{e} - \eta\right)$ for any constant $\eta > 0$ while retaining the same error bound. Again, this is tight up to a factor of $O\left(\log(1/\delta)\right)$ in the error.


\textbf{Set Cover.}
In the \emph{Set Cover} problem, we are given a set system $(\cU, \cS = (S_1, \dots, S_m))$. The goal is to output as few sets as possible that cover the universe, i.e.,   $S_{i_1}, \dots, S_{i_k}$ such that $S_{i_1} \cup \cdots \cup S_{i_k} = \cU$. We use $\optsc(\cU, \cS)$ to denote the optimal size of the set cover. Set Cover can be viewed as a ``dual'' version of submodular maximization under cardinality constraint, since the coverage function\footnote{Maximizing the coverage function among $k$-size $I$ is known as the \emph{max $k$-Coverage} problem, which is a special case of submodular maximization with cardinality constraint and is also well studied in the literature.} $F(I) = |\bigcup_{i \in I} S_i|$ is submodular. Set Cover is a classic combinatorial optimization problem, being one of the 21 original NP-complete problems~\cite{Karp72}.

For private Set Cover, DP is w.r.t. adding or removing an item from the universe (and all the sets)\footnote{More precisely, we can define $\cX = \{0, 1\}^m$, where $x \in \cX$ belongs to all $S_i$ such that $x_i = 1$.}. 
Unfortunately, \citet{GLMRT10} show that no non-trivial approximation is possible for the setting where we output the indices $i_1 , \dots, i_k$ directly. Instead, they proposed what is now sometimes referred to as the \emph{open set} setting, where we instead output a permutation $\pi: [m] \to [m]$. Each element $x \in \cU$ then chooses the first set containing it in the sequence (i.e., $\min\{i \mid x \in S_{\pi(i)}\}$). 
The cost is then defined as the number of sets that are chosen; we use $\costsc_{\cU, \cS}(\pi)$ to denote the cost of $\pi$. We work in this model throughout the paper. 
Under this model, they provide an $(\eps, \delta)$-DP algorithm with an expected approximation ratio $O\left(\ln n + \frac{\ln m \log(1/\delta)}{\eps}\right)$ for the problem where $n$ denotes the size of the input dataset $|\cU|$. This is nearly tight as it is NP-hard to achieve an $o(\ln n)$-approximation~\cite{DinurS14,Moshkovitz15}, and \citet{GLMRT10} show that no $\eps$-DP algorithm can achieve an $o\left(\frac{\ln m}{\eps}\right)$-approximation.


\textbf{\boldmath Metric $k$-Means and $k$-Median.}
In the \emph{metric $(k, q)$-clustering} problem, there is a metric space $([m], d)$ whose diameter\footnote{The diameter is defined as $\max_{a, b \in [m]} d(a, b)$.} is at most one, each user input is a point in this metric space (i.e., $\cX = [m]$),  and the goal is to output a subset $S \subseteq [m]$ of size $k$ that minimizes $\cost^q(S; D) := \sum_{x \in D} \min_{c \in S} d(c, x)^q$.
When $q = 1$ and $q = 2$, this problem is referred to as \emph{metric $k$-median} and \emph{metric $k$-means} respectively. In the non-private setting, even though constant-factor approximation algorithms for these problems have long been known~\cite{CGTS99}, tight (hardness of) approximation ratios are not yet known and this remains a challenging and active area of research (see, e.g., \citet{anand2024separating} and the references therein).

For private $k$-median, \citet{GLMRT10} gave an $\eps$-DP algorithm with approximation ratio\footnote{They only claim an approx. ratio of 6 but it is straightforward to see that this can be extended to any approx. ratio greater than 5.} $(5 + \eta)$ for any constant $\eta > 0$ with error\footnote{We note that in both \cite{GLMRT10} and \cite{JNN21}, it was assumed that $n \leq m$. } $O\left(\frac{k^2 \log^2(mn)}{\eps}\right)$. On the lower bound front, they showed that any $\eps$-DP algorithm must incur error at least $\Omega(k \log n)$. Later, the error bound was improved by \citet{JNN21} to $O\left(\frac{k \log(mn) \log(1/\delta)}{\eps}\right)$, albeit with an $(\eps,\delta)$-DP algorithm. The approximation ratio of \citet{JNN21} is a constant, but for simplicity is not explicit.

\textbf{Shifting Heavy Hitters.} Our framework  will also apply to the ostensibly unrelated \emph{shifting heavy hitters} problem~\cite{KMS21}.
Here, each user $i$'s data $\bx_i$ is a stream $(x_{i,1}, \dots, x_{i, T}) \in \cH^T$ where $x_{i, t}$ is the ``bucket'' that the user contributes to at time $t$. For $\tau \geq 0, t \in [T]$, a \emph{$\tau$-heavy hitter} at time $t$ is an element $y \in \cH$ that appears at least $\tau$ times, i.e., $w_t(y) \geq \tau$ where $w_t(y) := |\{i \in [n] \mid x_{i, t} = y\}|$. 
Following \citet{KMS21}, an algorithm is said to have error $\tau$ with probability (w.p.) $1 - \beta$ if the following holds w.p. $1 - \beta$ for all $t \in [T]$:
\begin{itemize}[nosep]
\item Every reported element $x$ satisfies $w_t(x) > 0$.
\item Every $\tau$-heavy hitter is reported.
\end{itemize}

Without any additional assumption, the best error one can achieve with $(\eps, \delta)$-DP is $\tO(\sqrt{T \log(1/\delta)})$. The main result of \citet{KMS21} is that, under the assumption that each user contributes to at most $k$ heavy hitters, the error can be reduced to $\tO(\sqrt{k} \cdot \log(1/\delta) \log T)$.


All state-of-the-art algorithms we have discussed so far are \emph{approximate-DP}. Meanwhile, known pure-DP algorithms have significantly worse error guarantees; in fact, for some problems such as private set cover and shifting heavy hitters, no non-trivial pure-DP algorithms are known. This leads us to the main question of this work: \emph{Are there pure-DP algorithms that achieve similar (or even better) bounds?}

\subsection{Our Results}

We answer the above question positively by giving pure-DP algorithms for all above problems via a unified framework. As explained below, our error bounds are all nearly tight. In fact, for the optimization problems, we even improve on the error bounds  from previous approximate-DP algorithms. 

We note that all of our algorithms run in polynomial time and we will not state this explicitly below for brevity.

\textbf{Monotone Submodular Maximization.}
For monotone submodular maximization with a cardinality constraint, we can get an approximation ratio arbitrarily close to $1 - \frac{1}{e}$ while having an error $O\left(\frac{k \log m}{\eps}\right)$, as stated more precisely below. The former matches \cite{GLMRT10} where the latter improves on their bound by a factor of $O(\log(1/\delta))$ and is tight due to their $\Omega\left(\frac{k \log m}{\eps}\right)$ lower bound.

\begin{theorem} \label{thm:smc}
For any $0 < \eps, \beta, \eta < 1$, there is an $\eps$-DP algorithm for monotone submodular maximization under a cardinality constraint that achieves $\left(1 - \frac{1}{e} - \eta\right)$-approximation and $O\left(\frac{k \log (m/\beta)}{\eta \eps}\right)$-error w.p. $1 - \beta$.
\end{theorem}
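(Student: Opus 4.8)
The plan is to combine the classical cardinality-constrained greedy with the exponential mechanism, show the resulting mechanism is \emph{one-sided} add-DP with a \emph{constant} privacy parameter (here $1$-decomposability is essential), and then upgrade this to two-sided pure-DP by subsampling the users. \textbf{Base mechanism and its one-sided privacy.} Let $\cM_0$ run $k$ greedy rounds: with $S_0 = \emptyset$, in round $i$ sample $v_i$ with probability $\propto \exp(\eps_0\cdot(F_D(S_{i-1}\cup\{v\}) - F_D(S_{i-1})))$, set $S_i = S_{i-1}\cup\{v_i\}$, and return $(v_1,\dots,v_k)$, where $\eps_0 = \Theta(1)$ is to be fixed. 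I would show that if $D' = D\cup\{x\}$ then $\Pr[\cM_0(D') = \mathbf v] \le e^{\eps_0}\Pr[\cM_0(D) = \mathbf v]$ for every output $\mathbf v$. Writing $\Delta_i(v) := f_x(S_{i-1}\cup\{v\}) - f_x(S_{i-1}) \in [0,1]$, the round-$i$ scores under $D'$ are those under $D$ shifted up by $\Delta_i(\cdot)\ge 0$, so the round-$i$ partition function can only increase and the round-$i$ conditional likelihood ratio (given the shared prefix $S_{i-1}$) is at most $e^{\eps_0\Delta_i(v_i)}$. Multiplying over rounds, the overall ratio is at most $e^{\eps_0\sum_{i=1}^k\Delta_i(v_i)}$, and the exponent telescopes: $\sum_{i=1}^k \Delta_i(v_i) = f_x(S_k) - f_x(\emptyset) \le 1$ since $f_x$ is $[0,1]$-valued. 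This telescoping is exactly what replaces the factor $k$ that naive composition would pay; crucially it is one-sided, since in the removal direction the partition functions move unfavorably.

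\textbf{From one-sided to two-sided pure-DP by subsampling.} Let $\mathrm{Samp}_p$ keep each user independently with probability $p$ and set $\cM := \cM_0\circ\mathrm{Samp}_p$. For add/remove neighbors $D' = D\cup\{x\}$, conditioning on whether $x$ survives gives $\Pr[\cM(D')\in S] = (1-p)\Pr[\cM(D)\in S] + p\,\E_{\tilde D\sim\mathrm{Samp}_p(D)}[\Pr[\cM_0(\tilde D\cup\{x\})\in S]]$; applying the one-sided bound inside the expectation yields $\Pr[\cM(D')\in S] \le (1 + p(e^{\eps_0}-1))\,\Pr[\cM(D)\in S]$, while trivially $\Pr[\cM(D)\in S] \le \tfrac{1}{1-p}\Pr[\cM(D')\in S]$. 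Choosing $p = \eps/2$ and $\eps_0 = \ln 2$ makes both ratios at most $e^\eps$ for $\eps\le 1$, so $\cM$ is $\eps$-DP. I would isolate the two displayed inequalities as a standalone lemma (``one-sided add-DP $\Rightarrow$ subsampled variant is two-sided DP''), as it is exactly the device reused for the other applications.

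\textbf{Utility.} On $\tilde D$, the round-$i$ exponential mechanism selects an element whose marginal gain is within $\gamma = O(\ln(mk/\beta)/\eps_0) = O(\ln(mk/\beta))$ of the maximum, simultaneously for all $i$ with probability $1-\beta/3$, and the textbook greedy analysis then gives $F_{\tilde D}(S_k) \ge (1-\tfrac1e)\opt(\tilde D) - O(k\ln(mk/\beta))$, where $\opt(\cdot)$ denotes the best value of $F$ over size-$k$ sets. To transfer back to $D$: for the true optimizer $T^\star$ of $F_D$, Bernstein's inequality gives $\opt(\tilde D) \ge F_{\tilde D}(T^\star) \ge p\,\opt(D) - O(\sqrt{p\,\opt(D)\ln(1/\beta)} + \ln(1/\beta))$, and AM-GM absorbs the square-root term at the cost of a $(1-\eta)$ factor plus a lower-order additive term; symmetrically, a union bound over the $\le m^k$ size-$k$ sets controls the data-dependent output, giving $F_D(S_k) \ge \tfrac1p F_{\tilde D}(S_k) - \eta F_D(S_k) - O(k\ln(m/\beta)/(\eta\eps))$ with probability $1-\beta/3$. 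Chaining the three inequalities, dividing through by $p = \Theta(\eps)$, rescaling $\eta$, and using $k\le m$ yields $F_D(S_k) \ge (1-\tfrac1e-\eta)\opt(D) - O(k\ln(m/\beta)/(\eta\eps))$.

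\textbf{Main obstacle.} The privacy side is short once the telescoping identity is spotted. I expect the real work to be the utility transfer across subsampling: privacy forces $p = \Theta(\eps)$, so the greedy error on $\tilde D$ is inflated by $1/p = \Theta(1/\eps)$, and keeping the final additive error at $O(k\ln(m/\beta)/(\eta\eps))$ — rather than picking up an extra $k\ln m$ factor, or a $1/\eta^2$ — requires the AM-GM balancing that is precisely the source of the $-\eta$ loss in the approximation ratio, together with the union-bounded concentration over all size-$k$ subsets so that the random output $S_k$ is handled uniformly.
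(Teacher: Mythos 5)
Your proposal is correct and follows essentially the same route as the paper: the Gupta et al.\ greedy-with-exponential-mechanism base algorithm, the telescoping argument showing it is $O(1)$-add-DP (one-sided), the subsampling lemma converting one-sided add-DP into two-sided pure DP with $p = \Theta(\eps)$, and a utility transfer via multiplicative-plus-additive concentration with a union bound over all size-$\le k$ subsets. The only (immaterial) differences are your parameter choice $p = \eps/2$ in place of $p = 1 - e^{-\eps}$ and your use of Bernstein plus AM--GM where the paper invokes a Chernoff bound already stated in multiplicative-plus-additive form.
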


For a matroid constraint, we get almost the same bound except for a slightly worse dependency on the parameter $\eta$:


\begin{theorem} \label{thm:main-submod-matroid}
For any $0 < \eps, \beta, \eta < 1$, there is an $\eps$-DP algorithm for monotone submodular maximization under matroid submodular maximization that achieves $\left(1 - \frac{1}{e} - \eta\right)$-approximation and $O\left(\frac{k \log\left(\frac{m}{\eta\beta}\right)}{\eta \eps}\right)$-error w.p. $1 - \beta$.
\end{theorem}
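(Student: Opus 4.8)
The plan is to privatize the continuous greedy algorithm on the multilinear extension $\widetilde{F}_D(\bx) := \E_{S \sim \bx}[F_D(S)]$, paying for privacy via the subsampling technique of this paper rather than by composition (this parallels the treatment of the cardinality case in \Cref{thm:smc}). Concretely, I would run a \emph{single-coordinate} discretized continuous greedy: maintain a fractional point $\bx$ in the matroid polytope $P(M)$, and for $T' = \Theta(k/\eta)$ rounds, privately select --- via the exponential mechanism (\EM) with a \emph{fixed} parameter $\eps_0 = \Theta(\eta)$, over the coordinates $i$ for which $\bx + \gamma e_i$ stays in $P(M)$, with scores a sampling-based estimate of $\partial_i \widetilde{F}_D(\bx)$ --- a coordinate $i_s$, and set $\bx \gets \bx + \gamma e_{i_s}$ for step size $\gamma = \Theta(\eta)$; after $T'$ rounds, apply $\swapr$ to round $\bx$ to an integral basis of $M$. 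The entire procedure is run on a random subsample $D'$ of $D$ in which each user is kept independently with probability $q = \Theta(\eps)$. Since swap rounding is post-processing, only the continuous-greedy part requires a privacy argument, and (given the public randomness of the subsampling and of the gradient samples) its output is determined by the sequence $i_1, \dots, i_{T'}$ of \EM\ outputs.

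\textbf{Accuracy.} I would first analyze the non-private continuous greedy on $D'$: a fractional matroid-exchange argument gives $\max_{i\text{ feasible}} \partial_i \widetilde{F}_{D'}(\bx) \ge \tfrac1k(\opt_{D'} - \widetilde{F}_{D'}(\bx))$ up to an $O(\eta)$ multiplicative discretization loss, so with per-round additive error $\zeta := O(\log(T'm/\beta)/\eps_0) = O(\log(m/(\eta\beta))/\eta)$ (from \EM\ and from sufficiently accurate gradient estimates), the standard potential recursion over $T' = k/\gamma$ rounds yields $\widetilde{F}_{D'}(\bx_{T'}) \ge (1 - \tfrac1e - O(\eta))\opt_{D'} - O(k\zeta)$; swap rounding preserves the multilinear value in expectation and concentrates, so w.p.\ $1-\beta$ the rounded basis is this good. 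Undoing the subsampling by Chernoff bounds ($\widetilde{F}_{D'} \approx q\widetilde{F}_D$ and $\opt_{D'} \gtrsim q\opt$ up to $O(\sqrt{\opt\log(1/\beta)/\eps})$, which is absorbed into either the $\eta\opt$ term or the additive term) then gives $F_D(\text{output}) \ge (1 - \tfrac1e - O(\eta))\opt - O(\tfrac{k\log(m/(\eta\beta))}{\eta\eps})$, i.e.\ the claimed bound after rescaling $\eta$.

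\textbf{Privacy (the crux).} Conditioned on the output sequence $i_1, \dots, i_{T'}$ --- hence on the entire trajectory $\bx_1, \dots, \bx_{T'}$, which it determines --- the likelihood ratio of running on $D'\cup\{x\}$ versus $D'$ is $\prod_s e^{\eps_0\,\partial_{i_s}\widetilde{f}_x(\bx_s)}$ (adding $x$ only increases the scores, so each \EM\ normalizer moves favorably), and since $\widetilde{f}_x$ is \emph{linear} along each single coordinate, $\gamma\sum_s \partial_{i_s}\widetilde{f}_x(\bx_s) = \widetilde{f}_x(\bx_{T'}) - \widetilde{f}_x(\bzero) \le 1$, so this ratio is at most $e^{\eps_0/\gamma} = e^{O(1)}$. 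Thus the continuous-greedy mechanism is \emph{one-sided} $O(1)$-add-DP --- adding a user blows up output probabilities by only $e^{O(1)}$ --- while its removal direction is bounded merely by $e^{\Theta(k)}$ and is useless on its own. This is exactly the hypothesis of the paper's main lemma: subsampling each user independently w.p.\ $q = \Theta(\eps)$ converts a one-sided $O(1)$-add-DP algorithm into a two-sided \emph{pure} $\eps$-DP algorithm, since after subsampling the removal direction is controlled by $\ln\tfrac1{1-q} = O(\eps)$ and the add direction by $\ln(1 + q(e^{O(1)}-1)) = O(\eps)$, with $\delta = 0$.

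\textbf{Main obstacles.} I expect two. (i) Using the \emph{single-coordinate} update rather than a basis update is essential: a basis step makes a user's per-round influence $\langle\ind_B, \nabla\widetilde{f}_x(\bx)\rangle$, which only telescopes to $\le k$ rather than $\le 1$ and would cost an extra factor $k$ in the error; but the single-coordinate rule also forces the more delicate fractional matroid-exchange analysis for the $(1-\tfrac1e)$ ratio, which is where the $O(\eta)$ discretization loss (and the need for $T'=\Theta(k/\eta)$ rounds, hence $\eps_0=\Theta(\eta)$) enters. (ii) Because $\partial_i\widetilde{F}_D$ must be estimated by sampling $S \sim \bx$, the clean identity $\gamma\sum_s\partial_{i_s}\widetilde{f}_x(\bx_s) \le 1$ holds only for the true gradients; I would therefore instrument the algorithm with a per-user cap on cumulative \emph{estimated} influence at $\Theta(1/\gamma)$ so that the one-sided bound holds \emph{unconditionally}, and then argue by a union bound over users that the cap is w.h.p.\ never binding, hence harmless for accuracy --- this union bound is what introduces the extra $\log(1/\eta)$ inside the logarithm relative to \Cref{thm:smc}.
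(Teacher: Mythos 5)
Your privacy core is exactly the paper's: the realized (per-user) sensitivity of the exponential-mechanism scores telescopes along the continuous-greedy trajectory to at most $1$, giving one-sided $O(1)$-add-DP, and subsampling with $p = 1 - e^{-\eps}$ (\Cref{lem:subsample}, \Cref{cor:subsampling}) converts this to pure $\eps$-DP; the Chernoff transfer from $D_{\subs}$ back to $D$ is also as in the paper. The genuine gap is in the accuracy of your specific variant. The pointwise inequality you rely on---that at any $\bx \in \cP(\cM)$ there is a coordinate $i$ with $\bx + \gamma e_i \in \cP(\cM)$ and $\partial_i \widetilde{F}(\bx) \geq \frac{1}{k}\left(\opt - \widetilde{F}(\bx)\right)$---is false once matroid constraints other than the total budget are tight. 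For a partition matroid with blocks $\{a,b\}$ and $\{c\}$ (each of capacity $1$) and $f$ supported on $b$, the point $\bx = \ind_a$ has $\opt - \widetilde{F}(\bx) = \opt > 0$, yet the only feasible coordinate is $c$, whose gradient is $0$; such points are reachable, so the potential recursion cannot be run unconditionally. The $\left(1-\frac{1}{e}\right)$ analysis needs the elementary steps to be organized so that an exchange argument always leaves an element of the optimum available, which is exactly why \citet{CNZ21}---and the paper, which simply invokes their utility theorem (\Cref{thm:cont-greedy-util-cnn})---group the $\approx k/\eta$ steps into $1/\eta$ rounds, each building a basis element-by-element with candidates restricted to extensions of the current partial independent set. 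This also defuses your obstacle (i): since elements are selected one at a time and $y$ is updated before the next score is computed, the realized sensitivity still telescopes to $g^{\bz}_x(y^{T,k}) - g^{\bz}_x(y^{1,0}) \leq 1$ (\Cref{obs:marginal-sensitivity-contgreedy}); no factor of $k$ is lost, so the single-coordinate-versus-basis dichotomy you pose is not the real trade-off.

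Two smaller deviations. For the sampled gradients, the paper's fix is cleaner than your per-user influence cap: fix the samples $z^1,\dots,z^s$ once and take the scores to be increments of the fixed proxy $G^{\bz}_D = \sum_{x \in D} g^{\bz}_x$ with $g^{\bz}_x \in [0,1]$; then the telescoping bound holds exactly for the \emph{estimated} scores, whereas your cap would need its own verification (that it preserves the monotonicity used in the one-sided argument, and that it never binds) on top of the accuracy analysis. For rounding, a single $\swapr$ only guarantees the multilinear value in expectation; the lower-tail concentration you gesture at requires $F^{\mult}_D(y) \gtrsim \log(1/\beta)/\eta^2$ to give failure probability $\beta$ at an $O(\eta \cdot \opt)$ loss, which need not hold in the relevant regime. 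The paper instead repeats $\swapr$ $h = O(\log(1/\beta)/\eta)$ times and selects among the candidates with one more exponential mechanism, composed into the add-DP budget (\Cref{lem:integral-submod}).
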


\textbf{Set Cover.}
For the private set cover problem, we give a pure-DP algorithm that improves the approximation ratio from \citet{GLMRT10} by a factor of $O(\log(1/\delta))$. This is tight due to the aforementioned $\Omega\left(\frac{\log m}{\eps}\right)$ from~\citet{GLMRT10} and the NP-hardness of factor $\Omega(\log n)$~\cite{DinurS14,Moshkovitz15}.

\begin{theorem} \label{thm:set-cover-main}
For any $0 < \eps, \beta < 1$, there is an $\eps$-DP algorithm for Set Cover that achieves $O\left(\log n + \frac{\log(m/\beta)}{\eps}\right)$-approximation w.p. $1 - \beta$.
\end{theorem}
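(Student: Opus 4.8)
The plan is to privatize the classical greedy Set Cover algorithm with the exponential mechanism and to bound its privacy loss via the paper's subsampling/individualized-accounting framework, which is precisely what lets each greedy step spend a \emph{constant} (rather than $1/\mathrm{poly}$) fraction of the privacy budget. In this sense the argument parallels that of \Cref{thm:smc}, with the coverage function $F(I)=|\bigcup_{i\in I}S_i|$ playing the role of the submodular objective, but now iterated over enough rounds to cover the entire universe.

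\textbf{Algorithm and utility.} We build the output permutation $\pi$ one set at a time: in round $j$, with $U_j$ the set of still-uncovered elements, pick $\pi(j)$ by the exponential mechanism over the $m$ sets, scoring $S_i$ by its marginal coverage $|S_i\cap U_j|$ (sensitivity $1$ under adding/removing a universe element), with privacy parameter $\eps_0=\Theta(\eps)$; continue until all sets are placed. No knowledge of $\optsc$ is needed. Write $k:=\optsc(\cU,\cS)$. With budget $\eps_0$ the exponential mechanism picks, except with probability $\beta/m$, a set whose marginal coverage is within $O\big(\tfrac{\log(m/\beta)}{\eps}\big)$ of the best, so by a union bound this holds in all rounds with probability $1-\beta$. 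Since $\optsc=k$, as long as $|U_j|\ge c\cdot\tfrac{k\log(m/\beta)}{\eps}$ the best set covers $\ge|U_j|/k$ uncovered elements, hence the chosen set covers $\ge|U_j|/(2k)$ and $|U_j|$ drops by a factor $1-\tfrac{1}{2k}$; so after $O(k\log n)$ rounds $|U_j|=O\big(\tfrac{k\log(m/\beta)}{\eps}\big)$, and a further $O\big(\tfrac{k\log(m/\beta)}{\eps}\big)$ rounds (each covering at least one new element) finish the cover. Hence $\cU$ is covered by the first $R:=O\big(k\log n+\tfrac{k\log(m/\beta)}{\eps}\big)$ sets, so $\costsc_{\cU,\cS}(\pi)\le R=O\big(\log n+\tfrac{\log(m/\beta)}{\eps}\big)\cdot\optsc$.

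\textbf{Privacy.} The structural point is that a universe element $x$ affects the run only while it is uncovered — once a chosen set covers it, it contributes $0$ to every later score and removing it changes nothing — and that adding or removing $x$ perturbs the scores only \emph{monotonically} (it can only raise, resp.\ lower, by at most $1$, the scores of sets containing $x$), so a single round incurs only a \emph{one-sided} multiplicative change $e^{\eps_0}$ between neighboring output distributions. This is exactly the hypothesis of the paper's framework: the mechanism is one-sided add-DP, and composing it with the appropriate subsampling step converts this into a two-sided pure $\eps$-DP guarantee \emph{without} dividing $\eps_0$ across the rounds, which is what keeps $\eps_0=\Theta(\eps)$, the per-round error at $O\big(\tfrac{\log(m/\beta)}{\eps}\big)$, and hence the greedy ``tail'' short.

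\textbf{Main obstacle.} The delicate step is precisely this privacy argument for the \emph{worst-case} element — the one covered only in the last relevant round — whose raw privacy loss over the process far exceeds $\eps$, so that neither naive composition nor naive subsampling amplification suffices; controlling it needs the individualized-accounting-via-subsampling reasoning (per-element loss tracking, the footprint being concentrated in the few rounds before coverage, and one-sidedness of the per-round divergence) inherited from the framework. A minor additional point is checking that the union bound over the (up to $m$) rounds costs only the stated $\log(m/\beta)$ factor and nothing more.
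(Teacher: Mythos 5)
There is a genuine gap, and it is exactly the one the paper flags as the reason it does \emph{not} prove \Cref{thm:set-cover-main} this way. Your algorithm is the subsampled repeated exponential mechanism (Gupta et al.'s greedy privatized with EM, then run on a $p$-subsample); the paper analyzes precisely this in \Cref{app:set-cov-non-opt} and shows it only yields an $O\bigl(\tfrac{\log n \log(m/\beta)}{\eps}\bigr)$-approximation, i.e.\ the two terms multiply rather than add. Your utility analysis hides this because it is carried out as if the greedy steps act on the full universe $D$ with per-round EM error $O(\log(m/\beta)/\eps)$. But to get two-sided pure DP from the framework you must actually run the mechanism on a Poisson subsample $D_{\subs}$ with rate $p = 1-e^{-\eps} = \Theta(\eps)$ and constant per-round budget (running on $D$ with $\eps_0=\Theta(\eps)$ is only add-DP, and \Cref{app:counterexample} shows it is not $o(L\eps)$-two-sided-DP). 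The greedy progress and the ``tail'' bound then hold only for the subsampled universe: after the prefix of length $r=O(k\log n)$ every set covers $O(\log(m/\beta))$ elements of $D_{\subs}$ not yet covered, so $|D_{\subs}\setminus S_{\pi(\le r)}| = O(k\log(m/\beta))$. The cost in the open-set model, however, is charged by \emph{all} of $D$, so you must transfer this to $|D\setminus S_{\pi(\le r)}|$. Since the chosen prefix is correlated with the subsample, this transfer needs a Chernoff bound uniform over the possible prefixes, i.e.\ a union bound over $T\in\binom{[m]}{\le r}$ with $r=\Theta(k\log n)$, which forces an additive slack $\zeta=\Theta(r\log(m/\beta))$; dividing by $p=\Theta(\eps)$ turns this into the multiplicative $O\bigl(k\,\tfrac{\log n\log(m/\beta)}{\eps}\bigr)$ term. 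Nothing in your sketch addresses this, and ``the union bound over the rounds costs only $\log(m/\beta)$'' is exactly the step that fails. (A smaller slip: the claim that each tail round ``covers at least one new element'' is unjustified, since the EM may pick a zero-gain set; this happens to be harmless in the open-set cost model, unlike the subsample-transfer issue.)

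The paper's actual proof of \Cref{thm:set-cover-main} uses a different algorithm, $\setcovalg$ (\Cref{alg:set-cov}): the greedy-scaling algorithm of \citet{KMVV15} run in $R=O(\log n)$ rounds with geometrically decreasing thresholds $\tau_r$, implemented through the repeated above-threshold mechanism $\rat$ rather than repeated EM, with a \emph{fresh} subsample $D_{\subs,r}$ drawn in each round and a geometrically increasing budget split $\eps_r=\eps/(4\cdot 2^{R-r})$. The inductive claim (\Cref{clm:inductive-sc}) shows each round selects only $O(\optsc)$ sets and shrinks the uncovered set geometrically; because each round's concentration argument only needs a union bound over $\binom{[m]}{\le q}$ with $q=O(\optsc)$ (not over prefixes of length $\Theta(\optsc\log n)$), the $\log n$ and $\log m/\eps$ contributions stay additive. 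If you want to salvage your plan, you would need either this round-wise re-subsampling structure or some other device to decouple the chosen prefix from the subsample; as written, the proposal proves at best \Cref{thm:set-cov-non-opt}, not \Cref{thm:set-cover-main}.
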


\textbf{\boldmath Metric $k$-Means and $k$-Median}
For private metric $k$-means and $k$-median, we provide a pure-DP algorithm with approximation ratio $O(1)$ and error $O\left(\frac{k \log(mn)}{\eps}\right)$, as stated below. The error bound improves upon the approximate-DP algorithm of \citet{JNN21} by a factor of $O(\log(1/\delta))$. For $n \leq m^{O(1)}$, our error bound is tight due to the aforementioned lower bound $\Omega\left(\frac{k \log m}{\eps}\right)$ from \citet{GLMRT10}. Similar to \citep{JNN21}, we choose to keep our analysis simple, and thus we do not compute the approximation ratio explicitly. As we mentioned earlier, the tight approximation ratio is not known even in the non-private setting.

\begin{theorem} \label{thm:clustering}
For any $0 < \eps, \beta < 1$, there is an $\eps$-DP algorithm for metric $k$-median and $k$-means that achieves an $O(1)$-approximation and $O\left(\frac{k \log(mn/\beta)}{\eps}\right)$-error w.p. $1 - \beta$.
\end{theorem}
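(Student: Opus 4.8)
\emph{Proof plan.} The plan is to reduce private metric $k$-median and $k$-means to private monotone submodular maximization under a cardinality constraint, so that \Cref{thm:smc} can be used as a black box; the reduction will be essentially that of \citet{JNN21}, the only substantive change being that their approximate-DP maximum-coverage subroutine is replaced by our pure-DP algorithm of \Cref{thm:smc}. This substitution is what removes the $\log(1/\delta)$ factor from their error bound, since the rest of their argument concerns approximation quality and is privacy-agnostic.

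The naive attempt is to observe that, for $q\in\{1,2\}$ and $S\subseteq[m]$, the per-user function $f_x(S):=1-\min_{c\in S}d(c,x)^q$ is monotone submodular with range $[0,1]$ — it is a coverage-type function, $f_x(S)=\max_{c\in S}(1-d(c,x)^q)$ — and $\sum_{x\in D}f_x(S)=n-\cost^q(S;D)$. Feeding $\sum_x f_x$ to \Cref{thm:smc} returns $S$ with $\cost^q(S;D)\le(\tfrac1e+\eta)n+(1-\tfrac1e)\opt+O\big(\tfrac{k\log(m/\beta)}{\eta\eps}\big)$, which is not an $O(1)$-approximation once $\opt\ll n$. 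The fix, following \citet{JNN21}, is to run coverage at a geometric family of scales $r_0>r_1>\cdots>r_\ell$ with $r_0$ the diameter, $r_{i+1}=r_i/2$, and $\ell=O(\log(mn))$, which suffices (distances below $r_\ell$ contribute a negligible additive term): at scale $r_i$ one solves a maximum-$k$-coverage instance in which center $c$ covers the users within distance $O(r_i)$ of it, and the per-scale solutions are combined so that, after all scales, every user is served within a constant factor of its distance to the nearest optimal center. A standard argument then converts an $(1-\tfrac1e-\eta)$-approximation with additive error $\kappa$ at each scale into an $O(1)$-approximation with additive error $O(\ell\,\kappa)=O(\kappa\log(mn))$ for the clustering objective, with the output consolidated to exactly $k$ centers as in \citet{JNN21}. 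Taking $\eta$ to be a small constant and $\kappa=O\big(\tfrac{k\log(m/\beta)}{\eps}\big)$ from \Cref{thm:smc} then yields the claimed $O\big(\tfrac{k\log(mn/\beta)}{\eps}\big)$ error.

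The part that needs care — and the reason this is more than a syntactic swap — is the privacy accounting across the $\Theta(\log(mn))$ scales. Invoking \Cref{thm:smc} once per scale and composing naively would force budget $\eps/\log(mn)$ per scale, and hence an extra $\log(mn)$ factor in the error (giving $O(k\log(m/\beta)\log(mn)/\eps)$ instead of the claimed bound). Avoiding this is exactly what the individualized accounting developed in this paper provides: the coverage subroutine inside \Cref{thm:smc} is a sequence of one-sided add-DP selection steps in which a user ceases to contribute once it is covered, so a user's total privacy loss, aggregated over all scales and all greedy steps, is controlled by how long it remains uncovered; bounding this aggregate, rather than paying $\ell$ times the worst-case per-step cost, is what keeps the overall budget at $\Theta(\eps)$. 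I expect the main work to be making this bookkeeping precise for the concrete multi-scale schedule and verifying that the combined algorithm still outputs exactly $k$ centers; the approximation analysis is inherited from the non-private reduction of \citet{JNN21} with essentially no change.
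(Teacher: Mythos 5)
Your proposal has a genuine gap at exactly the point you yourself flag as ``the main work'': the cross-scale privacy accounting. \Cref{thm:smc} is a \emph{two-sided} $\eps$-DP algorithm (its subsampling is already baked in), so invoking it once per scale and composing over $\ell=\Theta(\log(mn))$ scales necessarily costs either $\ell\eps$ in privacy or, after splitting the budget, an extra $\log(mn)$ factor in error; the individualized accounting of this paper cannot be applied to a black-box composition of two-sided DP mechanisms. To avoid this you would have to open the box: view the entire multi-scale greedy as a single add-DP mechanism satisfying \Cref{assum:marginal-sensitivity} and subsample \emph{once}. But then bounded realized sensitivity across \emph{all} scales must be verified, and it does not follow from ``a user ceases to contribute once it is covered'': the coverage instances at different scales are separate, and in the natural formulation a user can be covered (and hence charged) once per scale, giving $\Delta=\Theta(\log(mn))$ and reintroducing the very factor you are trying to remove. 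One would need to argue that the specific combination in \citet{JNN21} removes a user from all subsequent instances once it is served within a constant factor of its optimal distance; this is the crux and is left unproven in your write-up. There is also an internal inconsistency in the error accounting: you state that the per-scale additive errors sum to $O(\ell\kappa)=O(\kappa\log(mn))$, which with $\kappa=O\bigl(\tfrac{k\log(m/\beta)}{\eps}\bigr)$ gives $O\bigl(\tfrac{k\log(m/\beta)\log(mn)}{\eps}\bigr)$, not the claimed $O\bigl(\tfrac{k\log(mn/\beta)}{\eps}\bigr)$; to get the claimed bound the per-scale coverage errors would have to be weighted by $r_i^q$ and summed geometrically (or scales avoided altogether), which you do not establish.

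For comparison, the paper sidesteps multi-scale coverage entirely, which is precisely why no such bookkeeping is needed. It runs \rem{} (\Cref{alg:repeated-em}) directly on the clustering cost-reduction scores $q_i(c;D)=\cost^q(\{c^*_1,\dots,c^*_{i-1}\};D)-\cost^q(\{c^*_1,\dots,c^*_{i-1},c\};D)$ for $L=O(k\log n)$ rounds: these scores are monotone and their realized sensitivity telescopes to at most $1$, so the result is a one-sided DP \emph{bicriteria} solution with $O(k\log n)$ centers and additive error $O\bigl(\tfrac{k\log(m/\beta)}{\eps}\bigr)$. It then reduces to exactly $k$ centers by snapping points to the bicriteria centers, adding exponential noise to the counts, and running a non-private $O(1)$-approximation on the resulting synthetic data, and only at the end applies a single subsampling step (\Cref{thm:subsample-dp}) together with a Chernoff comparison between $\cost^q(\cdot;D_{\subs})$ and $\cost^q(\cdot;D)$ over all $k$-subsets. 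If you wish to salvage the reduction-to-coverage route, you must either prove the $O(1)$ cross-scale realized-sensitivity bound for the concrete multi-scale schedule or switch to the paper's direct cost-reduction scores.
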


\textbf{Shifting Heavy Hitters.}
We provide a pure-DP algorithm for the problem, which is stated informally below; for a formal version, see \Cref{thm:shifting-hh}.
\begin{theorem}[Informal]
Assuming that each user contributes to $\leq k$ heavy hitters, there is an $\eps$-DP shifting heavy hitter algorithm with error $O\left(\frac{k \log(T|\cH|/\beta)}{\eps}\right)$ w.p. $1 - \beta$.
\end{theorem}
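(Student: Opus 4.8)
The plan is to reduce the shifting heavy hitter problem to a sequence of $T\cdot|\cH|$ threshold queries---one per (time step, bucket) pair $(t,y)$ asking ``is $w_t(y)$ large?''---and to push this sequence through the one-sided-DP-plus-subsampling template that drives the rest of the paper. The naive route, answering each time step with an independent pure-DP private histogram, composes to error $\Theta(T)$; the whole point is that, by hypothesis, any fixed user is ``responsible'' for at most $k$ of the positive answers (the $\le k$ heavy hitters it belongs to), so with individualized accounting the per-user privacy cost should scale with $k$ rather than $T$. First I would fix a base mechanism $\cM_0$ that, at every $(t,y)$, forms a noisy count $\tih_{t,y}=w_t(y)+Z_{t,y}$ with i.i.d.\ geometric noise $Z_{t,y}\ge 0$ of scale $\Theta(k)$ and reports $y$ at time $t$ iff $\tih_{t,y}\ge\theta$ for a threshold $\theta=\Theta\!\left(k\log(T|\cH|/\beta)\right)$. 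Non-negativity of $Z$ makes $\cM_0$ monotone in the dataset (adding a user only turns non-reports into reports) and guarantees that every bucket with true count $\ge\theta$ is reported with certainty.

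Next I would show $\cM_0$ is one-sided add-DP with a constant parameter $\eps_0$. Because it is monotone, the only direction that needs bounding is the factor by which adding a user can inflate the probability of an upward-closed outcome set, and a user affects the decision at $(t,y)$ only when $y=x_{i,t}$. For such a query, a single unit change in $w_t(y)$ moves the report probability by a factor of at most $e^{1/\Theta(k)}$; and---crucially---one wants to argue that only the time steps at which the user lies in a bucket that is heavy (at the threshold scale used in the hypothesis) can contribute meaningfully, of which there are at most $k$. Multiplying, the user's one-sided loss is $O(1)=\eps_0$, uniformly over users.

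Then I would invoke the subsampling lemma: including each user independently with probability $p=\Theta(\eps)$ converts the one-sided $\eps_0$-DP mechanism $\cM_0$ into a (two-sided) \emph{pure} $\eps$-DP mechanism $\cM$, since subsampling mixes the ``bad'' direction with the identity at weight $1-p$ and dampens the one-sided loss to $p(e^{\eps_0}-1)=O(\eps)$. Subsampling scales all counts by $p$, so the effective count a genuine $\tau$-heavy hitter must clear becomes $\theta/p=\Theta\!\left(\tfrac{k}{\eps}\log(T|\cH|/\beta)\right)$; a Chernoff bound (with a union bound over the $\le T|\cH|$ heavy hitters) shows its subsampled count concentrates above $\theta$, while a union bound over the $T|\cH|$ pairs shows no empty bucket is reported. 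Setting $\tau$ to this value yields exactly the claimed $O\!\left(\tfrac{k\log(T|\cH|/\beta)}{\eps}\right)$ error with probability $1-\beta$, and all steps run in polynomial time.

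The step I expect to be the main obstacle is the second one: making rigorous that the per-user one-sided privacy loss of $\cM_0$ is $O(k)$ rather than $O(T)$. A user touches one count at each of the $T$ time steps, so one must exploit a gap between the reporting threshold $\theta$ and the heavy-hitter scale appearing in the hypothesis to certify that buckets comfortably below threshold contribute negligibly, and then control the (unlikely, because of the geometric tail) event that a below-threshold bucket is nonetheless reported---this is precisely where the analysis is specific to heavy hitters and not a black-box use of the template, and it parallels the ``each user is relevant to at most $k$ atomic decisions'' structure behind \Cref{thm:smc,thm:set-cover-main}.
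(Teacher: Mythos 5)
There is a genuine gap at exactly the step you flagged, and the repair you sketch would not work for \emph{pure} one-sided DP. Your base mechanism $\cM_0$ answers all $T|\cH|$ threshold queries with noise of scale $\Theta(k)$ and no further modification, and you then try to argue an $O(1)$ add-DP bound by saying that only the (at most $k$) time steps where the user sits in a genuinely heavy bucket ``contribute meaningfully,'' with the light-bucket contributions controlled via the geometric tail. This fails for two reasons. First, the privacy guarantee cannot invoke the hypothesis that each user contributes to at most $k$ heavy hitters: that is a utility assumption about the data, and DP (one-sided or not) must hold for every input dataset, as the paper stresses right after \Cref{thm:shifting-hh}. Second, even granting the assumption, a pure-DP bound must hold for \emph{every} output sequence, not just likely ones. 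Consider the output in which the user's bucket $x_{i,t}$ is reported at all $T$ time steps even though those buckets are far below threshold; each such report has likelihood ratio up to $e^{\Theta(1/k)}$ between $D$ and $D'=D\smallsetminus\{x_i\}$, and the product over $T$ steps is $e^{\Theta(T/k)}$. The fact that this output is exponentially unlikely only helps you discard it into a $\delta$, i.e.\ it yields approximate add-DP, not the pure add-DP needed to feed into the subsampling lemma.

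The paper closes this gap not by analysis but by changing the algorithm: $\threshmon$ (\Cref{alg:threshold-monitor}) keeps a per-user counter $C_i$ of how many reported heavy hitters user $i$ has contributed to, and once $C_i=k$ the user is removed from all future counts. This structurally enforces the bounded \emph{realized} sensitivity condition (\Cref{assum:marginal-sensitivity-at}) with $\Delta=k$ for all datasets and all output sequences, so \Cref{thm:subsample-dp-rat} applies and the subsampled mechanism is $\eps$-DP unconditionally. \Cref{assum:shifting-hh} then enters only in the utility analysis, where (together with the Chernoff and exponential-tail bounds you also use) it guarantees that reported elements have true count above $\tau^*(k)$, hence no user is ever dropped and accuracy is unaffected. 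Your subsampling step, threshold calibration $\theta/p=\Theta\bigl(\tfrac{k}{\eps}\log(T|\cH|/\beta)\bigr)$, and concentration argument match the paper; the missing ingredient is this in-algorithm contribution cap, without which the one-sided privacy loss is $\Theta(T/k)$ rather than $O(1)$.
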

In comparison, the error bound of  \citet{KMS21} is $\tO\left(\frac{\sqrt{k} \log(1/\delta) \log(T|\cH|/\beta)}{\eps}\right)$, which can be smaller for large $k$. However, their algorithm is approximate-DP and it can be easily seen (via a packing lower bound) that our bound is the best possible for pure-DP; see \Cref{app:hh-lb}.

\subsection{Technical Overview}

There are two slightly different settings to which our techniques apply: \emph{repeated exponential mechanism} and \emph{repeated above threshold}. In this overview, we will focus on the repeated exponential mechanism and only briefly mention the repeated above threshold mechanism at the end.

\textbf{Repeated Exponential Mechanism.} \citet{GLMRT10} proposed the following algorithm for Set Cover: repeatedly use the $\eps_0$-DP exponential mechanism~\cite{MT07}\footnote{See \Cref{sec:prelim} for more detail.} to find the next set that covers the maximum number of (uncovered) element. Since the exponential mechanism is applied $m$ times here, if we were to apply a composition theorem, the error would grow polynomially in $m$ (either $m$ for basic composition or $\sqrt{m}$ for advanced composition~\cite{DworkRV10}). Perhaps surprisingly, they instead show that this algorithm is $(\eps, \delta)$-DP for $\eps = O(\eps_0 \cdot \log(1/\delta))$, i.e., independent of $m$.

The intuition behind this is roughly that an element ``causes'' only a single set to be picked: the first one in the permutation that contains it. The sets picked before this set  have their scores (of the exponential mechanism) completely independent from the element. On the other hand, for the sets picked after this set, the element is already covered and does not factor into the scores at all. Thus, we should be able to ``charge the privacy budget'' only once when this particular set is picked. While \citet{GLMRT10} show that such an \emph{individualized privacy accounting} works for approximate-DP, it unfortunately fails for pure-DP: this mechanism is not $\eps$-DP for any $\eps = o(m\eps_0)$. (See \Cref{app:counterexample}.)

\textbf{One-Sided DP.} This is a notion of DP where the ``neighboring relationship'' can be asymmetric~\cite{KDHM20}. (See also~\citep{TKCY22}, who call this \emph{asymmetric DP}.) In particular, we consider the following notion of one-sided DP with respect to adding an element\footnote{\citet{KDHM20} actually define one-sided DP with respect to \emph{replacing} a sensitive record. However, we are defining it with respect to adding an element.}. 

\begin{definition}[One-Sided DP,~\citet{KDHM20}]
A mechanism $\cM$ is said to be \emph{$\eps$-add-DP} iff, for every pair $D, D'$ of datasets such that $D$ results from adding an item to $D'$ and every possible subset $S$ of outputs, we have
$\Pr[\cM(D) \in S] \leq e^{\eps} \cdot \Pr[\cM(D') \in S]$.
\end{definition}

To emphasize the differences, we will refer to add-remove DP (\Cref{def:dp}) as ``two-sided DP''.

It turns out that, while it fails for two-sided DP, the above intuition does apply to one-sided DP: the repeated exponential mechanism is $\eps_0$-add-DP. The proof of this fact is also relatively straightforward (see \Cref{sec:rem}).

\textbf{From One-Sided to Two-Sided DP.} While the above observation is nice, we have not accomplished our goal yet, since we wish to design a two-sided DP mechanism, not a one-sided DP one. This brings us to our second observation: we can turn any one-sided DP mechanism into a two-sided one by subsampling with probability $p = 1 - e^{-\eps}$ (see \Cref{lem:subsample}). With these two ingredients, we arrive at our result by just using the subsampled version of the existing---e.g., \citet{GLMRT10}'s---algorithm!

\textbf{Repeated Above Threshold.} Our technique also applies to a slightly different setting where we consider a stream and wish to detect if a query is above a certain threshold at each time step. Again, we achieve an ``individualized privacy accounting'', where the amount of noise required to achieve one-sided DP scales only with the number of times an individual contributes to above threshold results. The subsampled version of this then satisfies two-sided DP and provides our algorithm for the shifting heavy hitter problem. 

Lastly, we note that, while we use the repeated exponential mechanism for submodular maximization results, we actually do \emph{not} use it for Set Cover. This is due to a technical barrier that only allows us to get $O(\log n \log m / \eps)$ via this approach. (See \Cref{app:set-cov-non-opt} for more details.) For \Cref{thm:set-cover-main}, we actually use the repeated above threshold mechanism applied to a (non-private) streaming approximation algorithm for Set Cover~\cite{KMVV15}.

\section{Preliminaries}\label{sec:prelim}


\textbf{Subsampling.}
Subsampling is a standard technique in DP~\cite{BBG18,WBK20}. We will use the so-called Poisson subsampling where each user is kept with probability $p$. More precisely, we write $\cS^p: \cX^* \to \cX^*$ as a subsampling operator, i.e., $\cS_p(D)$ outputs a random subset of $D$ such that each user is kept independently with probability $p$. For any mechanism $\cM$, we write $\cM^{\cS_p}$ to denote the mechanism $D \mapsto \cM(\cS_p(D))$.

\textbf{Concentration Inequalities.}
It will be convenient to use a version of the Chernoff bound that includes both multiplicative and additive terms, as stated below.

\begin{theorem}[Chernoff bound; Corollary 2.11 of \citealt{CNZ-arxiv})] \label{thm:chernoff}
Let $Z_1, \dots, Z_m$ be i.i.d. random variables such that $Z_i \in [0, 1]$ and let $\mu = \E[Z_1 + \cdots + Z_m]$. Then, for $\alpha \in [0, 1]$ and $\zeta \geq 0$, we have
\begin{align*}
\Pr[Z_1 + \cdots + Z_m < (1 - \alpha)\mu - \zeta] &\leq \exp\left(-\alpha\zeta\right), \\
\Pr[Z_1 + \cdots + Z_m > (1 + \alpha)\mu + \zeta] &\leq \exp\left(-\alpha\zeta/3\right). 
\end{align*}
\end{theorem}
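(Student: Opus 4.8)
The plan is to prove both inequalities by the standard exponential-moment (Chernoff) method, taking a negative exponent for the lower tail and a positive one for the upper tail. Write $S := Z_1 + \cdots + Z_m$ and assume $\alpha \in (0,1)$ (the boundary cases being trivial, e.g.\ for $\alpha=1$ the lower-tail event is empty since $S \ge 0$). For $t > 0$, Markov's inequality applied to $e^{-tS}$ gives $\Pr[S < a] \le e^{ta}\,\E[e^{-tS}]$, while applied to $e^{tS}$ it gives $\Pr[S > a] \le e^{-ta}\,\E[e^{tS}]$; by independence $\E[e^{\pm tS}] = \prod_{i} \E[e^{\pm t Z_i}]$, so the task reduces to a per-coordinate moment-generating-function bound.

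For the per-coordinate bound I would use convexity of $z \mapsto e^{\pm tz}$ on $[0,1]$: a convex function lies below the chord through its endpoints, so $e^{-tz} \le 1 - z(1 - e^{-t})$ and $e^{tz} \le 1 + z(e^{t} - 1)$ for $z \in [0,1]$. Taking expectations and using $1 + x \le e^{x}$ yields $\E[e^{-t Z_i}] \le \exp\bigl(-\E[Z_i](1 - e^{-t})\bigr)$ and $\E[e^{t Z_i}] \le \exp\bigl(\E[Z_i](e^{t} - 1)\bigr)$, hence $\E[e^{-tS}] \le \exp(-\mu(1 - e^{-t}))$ and $\E[e^{tS}] \le \exp(\mu(e^{t} - 1))$.

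It then remains to optimize $t$. For the lower tail, take $a = (1-\alpha)\mu - \zeta$ and $t = -\ln(1-\alpha)$, so that $e^{-t} = 1-\alpha$; the exponent becomes $\mu\bigl(-(1-\alpha)\ln(1-\alpha) - \alpha\bigr) + \zeta\ln(1-\alpha)$. The first term is $\le 0$ because the bracketed function vanishes at $\alpha = 0$ and has derivative $\ln(1-\alpha) \le 0$, and since $\ln(1-\alpha) \le -\alpha$ the exponent is at most $-\alpha\zeta$. For the upper tail, take $a = (1+\alpha)\mu + \zeta$ and $t = \ln(1+\alpha)$; the exponent becomes $\mu\bigl(\alpha - (1+\alpha)\ln(1+\alpha)\bigr) - \zeta\ln(1+\alpha)$, whose first term is $\le 0$ by the same monotonicity argument (derivative $-\ln(1+\alpha) \le 0$, value $0$ at $\alpha = 0$), while the elementary inequality $\ln(1+\alpha) \ge \alpha/3$ on $[0,1]$ turns $-\zeta\ln(1+\alpha)$ into $\le -\alpha\zeta/3$.

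The proof is essentially routine; the only points needing care — and the nearest thing to an obstacle — are the two scalar estimates $\ln(1-\alpha) \le -\alpha$ and $\ln(1+\alpha) \ge \alpha/3$ for $\alpha \in [0,1]$ (the latter holding because $\ln(1+\alpha) - \alpha/3$ vanishes at $0$ and has derivative $\tfrac{1}{1+\alpha} - \tfrac13 \ge 0$ on $[0,1]$), together with checking the signs of the two $\mu$-coefficients. The constant $3$ in the upper-tail exponent is exactly what is needed to cover the full range $\alpha \in [0,1]$.
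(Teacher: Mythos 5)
Your proposal is correct. The paper itself does not give a proof of this theorem; it is imported as a black box by citation to Corollary 2.11 of the CNZ arXiv reference, so there is no ``paper proof'' to compare against line by line. What you have written is a self-contained derivation by the standard exponential-moment method, and all the steps check out: the chord bound $e^{\pm tz} \leq 1 + z(e^{\pm t}-1)$ on $[0,1]$ by convexity, followed by $1+x \leq e^x$, gives $\E[e^{-tS}] \leq \exp(-\mu(1-e^{-t}))$ and $\E[e^{tS}] \leq \exp(\mu(e^t-1))$; plugging in $t=-\ln(1-\alpha)$ (resp.\ $t=\ln(1+\alpha)$) makes the $\mu$-coefficient $-(1-\alpha)\ln(1-\alpha)-\alpha$ (resp.\ $\alpha-(1+\alpha)\ln(1+\alpha)$), each of which vanishes at $\alpha=0$ and has nonpositive derivative $\ln(1\mp\alpha)$ of the appropriate sign, hence is $\leq 0$; and the scalar bounds $\ln(1-\alpha)\leq-\alpha$ and $\ln(1+\alpha)\geq\alpha/3$ on $[0,1]$ finish the two exponents. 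The only stylistic remark is that $\ln(1+\alpha)\geq\alpha/2$ also holds on $[0,1]$, so the same argument actually yields the slightly stronger exponent $-\alpha\zeta/2$ on the upper tail; the constant $3$ in the statement is not forced by your method, it is just what the cited corollary advertises.
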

\textbf{Exponential Mechanism.}
Given a candidate set $\cC$ and a scoring function $q: \cC \times \cX^* \to \R$, the \emph{exponential mechanism} $\EM_{\eps}(\cC, q; D)$ outputs  each candidate $c \in \cC$ with probability proportional to $\exp\left(\eps \cdot q(c; D) \right)$. Its guarantee is as follows:
\begin{theorem}[\citet{MT07}] \label{thm:em}
If $q$ has sensitivity at most $\Delta$ (w.r.t. $D$), then the exponential mechanism is $(2\eps\Delta)$-DP. Furthermore, w.p. $1 - \beta$, the output $c^*$ satisfies $q(c^*; D) \geq \max_{c \in \cC} q(c; D) - O(\log(|\cC|/\beta)/\eps)$.
\end{theorem}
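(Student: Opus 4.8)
The plan is to establish the two claims—$(2\eps\Delta)$-DP and the high-probability utility bound—separately, each by a short direct computation with the explicit output density $c \mapsto \exp(\eps \cdot q(c;D)) / \sum_{c' \in \cC}\exp(\eps \cdot q(c';D))$ of $\EM_\eps(\cC, q; D)$.

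For privacy, fix a neighboring pair $D, D'$ and a single candidate $c \in \cC$; since $\cC$ is discrete it suffices to bound the ratio of point masses $\Pr[\EM_\eps(\cC,q;D)=c]\,/\,\Pr[\EM_\eps(\cC,q;D')=c]$ and then sum over $c$ in an arbitrary event $S$. Writing out the density, this ratio equals
\begin{align*}
\exp\!\big(\eps(q(c;D)-q(c;D'))\big)\cdot\frac{\sum_{c'\in\cC}\exp(\eps\, q(c';D'))}{\sum_{c'\in\cC}\exp(\eps\, q(c';D))}.
\end{align*}
The first factor is at most $e^{\eps\Delta}$ by the sensitivity bound $|q(c;D)-q(c;D')|\le\Delta$. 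For the second factor, apply the same bound termwise inside the numerator sum, $\exp(\eps\,q(c';D'))\le e^{\eps\Delta}\exp(\eps\,q(c';D))$, so the ratio of normalizing sums is also at most $e^{\eps\Delta}$. Multiplying gives a point-mass ratio of at most $e^{2\eps\Delta}$, hence $(2\eps\Delta)$-DP with $\delta=0$.

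For utility, set $q^\star := \max_{c\in\cC} q(c;D)$ and fix a threshold $t>0$. Then
\begin{align*}
\Pr[q(c^\star;D) < q^\star - t] = \frac{\sum_{c:\,q(c;D)<q^\star-t}\exp(\eps\, q(c;D))}{\sum_{c'\in\cC}\exp(\eps\, q(c';D))} \le \frac{|\cC|\cdot\exp(\eps(q^\star-t))}{\exp(\eps\, q^\star)} = |\cC|\cdot e^{-\eps t},
\end{align*}
where the numerator is bounded by replacing each included score with the upper bound $q^\star-t$ and the denominator is bounded below by keeping only a maximizing term. Choosing $t = \tfrac{1}{\eps}\ln(|\cC|/\beta)$ makes the right-hand side equal to $\beta$, which is the claimed $O(\log(|\cC|/\beta)/\eps)$ additive loss with probability $1-\beta$. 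There is no genuine obstacle here; the only point requiring care is orienting the ratio of normalizing constants correctly in the privacy step (numerator from $D'$, denominator from $D$) and noticing that it contributes a \emph{second} factor of $e^{\eps\Delta}$ beyond the single-term factor, which is precisely the origin of the $2$ in $(2\eps\Delta)$-DP.
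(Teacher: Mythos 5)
Your proof is correct: both the $e^{2\eps\Delta}$ bound (one factor of $e^{\eps\Delta}$ from the selected score and one from the normalizing sums) and the $|\cC|e^{-\eps t}$ tail bound are exactly the standard argument of \citet{MT07}, which the paper cites for this theorem rather than reproving it. No gaps.
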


\section{One-Sided DP and Subsampling}

We start by proving our main observation that subsampling a one-sided DP mechanism makes it two-sided DP. We remark that, while there is a large literature on privacy amplification by subsampling, we are not aware of such a connection between one-sided and two-sided DP before.

\begin{lemma} \label{lem:subsample}
For any $p \in [0, 1)$ and $\eps_0 > 0$, if $\cM$ is an $\eps_0$-add-DP mechanism, then $\cM^{\cS_p}$ is $\eps$-DP for $\eps = \ln\left(\max\left\{\frac{1}{1 - p}, 1 + p(e^{\eps_0} - 1)\right\}\right)$.
\end{lemma}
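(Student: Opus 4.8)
The plan is to prove the two required inequalities separately, corresponding to the two terms in the max, and to exploit the asymmetry of one-sided DP carefully. Fix add-remove neighbors $D, D'$; by symmetry of the statement we may assume $D = D' \cup \{z\}$ for some fresh element $z$ (so $D$ has one more element than $D'$). We must bound the ratios $\Pr[\cM^{\cS_p}(D) \in S] / \Pr[\cM^{\cS_p}(D') \in S]$ in \emph{both} directions by $e^\eps$.

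The first, easy direction is $\Pr[\cM^{\cS_p}(D') \in S] \le e^\eps \Pr[\cM^{\cS_p}(D) \in S]$. Here I would couple the two subsampling processes: run $\cS_p$ on $D'$ and, independently, flip a coin that keeps $z$ with probability $p$ to extend this to a sample of $D$. With probability $1-p$ the coin drops $z$, in which case $\cS_p(D)$ and $\cS_p(D')$ have exactly the same distribution. Hence $\Pr[\cM^{\cS_p}(D) \in S] \ge (1-p)\Pr[\cM^{\cS_p}(D') \in S]$, which rearranges to the claim with factor $\frac{1}{1-p}$. This uses nothing about $\cM$ beyond being a (randomized) map; it is pure subsampling.

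The second, main direction is $\Pr[\cM^{\cS_p}(D) \in S] \le (1 + p(e^{\eps_0}-1)) \Pr[\cM^{\cS_p}(D') \in S]$, and this is where the add-DP hypothesis enters. I would again use the same coupling: write $\Pr[\cM^{\cS_p}(D)\in S] = (1-p)\Pr[\cM(\cS_p(D'))\in S] + p\,\Pr[\cM(\cS_p(D') \cup \{z\})\in S]$, conditioning on whether $z$ survives subsampling. The first term is exactly $(1-p)\Pr[\cM^{\cS_p}(D')\in S]$. For the second term, note that $\cS_p(D') \cup \{z\}$ is obtained from $\cS_p(D')$ by \emph{adding} an element, so $\eps_0$-add-DP gives, pointwise over the shared randomness of $\cS_p(D')$, that $\Pr[\cM(\cS_p(D')\cup\{z\})\in S] \le e^{\eps_0}\Pr[\cM(\cS_p(D'))\in S] = e^{\eps_0}\Pr[\cM^{\cS_p}(D')\in S]$. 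Combining, $\Pr[\cM^{\cS_p}(D)\in S] \le (1-p)\Pr[\cM^{\cS_p}(D')\in S] + p e^{\eps_0}\Pr[\cM^{\cS_p}(D')\in S] = (1 + p(e^{\eps_0}-1))\Pr[\cM^{\cS_p}(D')\in S]$, as desired. Taking $\eps$ to be the log of the max of the two bounding factors covers both directions simultaneously.

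The one subtlety I would be careful about — the main (mild) obstacle — is applying add-DP ``pointwise'' over the randomness of $\cS_p(D')$: formally one should condition on the realized subset $R = \cS_p(D')$, apply $\eps_0$-add-DP to the fixed pair $(R \cup \{z\}, R)$ to get $\Pr[\cM(R\cup\{z\})\in S] \le e^{\eps_0}\Pr[\cM(R)\in S]$, and then take expectation over $R \sim \cS_p(D')$; since $z \notin D'$, $z \notin R$ always, so $R \cup \{z\}$ genuinely is an add-neighbor of $R$. One should also double-check that the definition of $\eps$ is well-posed, i.e. that $\ln$ is applied to a quantity $\ge 1$: indeed $\frac{1}{1-p} \ge 1$ for $p \in [0,1)$ and $1 + p(e^{\eps_0}-1) \ge 1$ since $\eps_0 > 0$, so $\eps \ge 0$. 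Everything else is elementary algebra.
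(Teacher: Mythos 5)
Your proof is correct and is essentially the same argument as the paper's: the paper's explicit sums over subsets $D_{\subs} \subseteq D'$ are just the unrolled version of your conditioning/coupling on whether the added element survives subsampling, with the same two-case bound (factor $\frac{1}{1-p}$ from the drop event, factor $1 + p(e^{\eps_0}-1)$ from applying $\eps_0$-add-DP to the pair $(R\cup\{z\}, R)$). No gaps.
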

\begin{proof}
Consider any datasets $D, D'$ such that $D = D' \cup \{x\}$, and any possible output $o$. On the one hand, we have
\begin{align*}
&\Pr[\cM^{\cS_p}(D) = o] \\
&= \sum_{D_{\subs} \subseteq D}  \Pr[\cM(D_{\subs}) = o] \cdot \Pr[\cS_p(D) = D_{\subs}] \\
&\geq \sum_{D_{\subs} \subseteq D'} \Pr[\cM(D_{\subs}) = o] \cdot \Pr[\cS_p(D) = D_{\subs}] \\
&\overset{(\star)}{=} \sum_{D_{\subs} \subseteq D'} \Pr[\cM(D_{\subs}) = o] \cdot(1 - p) \Pr[\cS_p(D') = D_{\subs}] \\
&= (1 - p) \cdot \Pr[\cM^{\cS_p}(D') = o],
\end{align*}
where $(\star)$ uses the fact that $\cS_p(D)$ is the same as $\cS_p(D')$ when conditioned on $x$ not being selected. 

On the other hand, we have\footnote{The following sequence of inequalities is standard in amplification-by-subsampling literature (e.g.,~\cite{LQS12}); we only include it here for completeness.}
\begin{align*}
&\Pr[\cM^{\cS_p}(D) = o] \\
&= \sum_{D_{\subs} \subseteq D'} \bigg(\Pr[\cM(D_{\subs}) = o] \cdot \Pr[\cS_p(D) = D_{\subs}] \\
&\qquad + \Pr[\cM(D_{\subs} \cup \{x\}) = o] \cdot \Pr[\cS_p(D) = D_{\subs} \cup \{x\}]\bigg) \\
&\overset{(\bigstar)}{\leq} \sum_{D_{\subs} \subseteq D'} \bigg(\Pr[\cM(D_{\subs}) = o] \cdot \Pr[\cS_p(D) = D_{\subs}] \\
&\qquad + e^{\eps_0} \cdot \Pr[\cM(D_{\subs}) = o] \cdot \Pr[\cS_p(D) = D_{\subs} \cup \{x\}]\bigg) \\
&\overset{(\blacklozenge)}{=} \sum_{D_{\subs} \subseteq D'} \bigg(\Pr[\cM(D_{\subs}) = o]\cdot (1-p)\cdot\Pr[\cS_p(D') = D_{\subs}] \\
&\qquad + e^{\eps_0} \cdot \Pr[\cM(D_{\subs}) = o] \cdot p\cdot\Pr[\cS_p(D') = D_{\subs}]\bigg) \\
 &= \left(1 + p(e^{\eps_0} - 1)\right) \cdot \Pr[\cM^{\cS_p}(D') = o],
\end{align*}
where $\bigstar$ follows from the $\eps_0$-add-DP property of $\cM$ and $\blacklozenge$ follows from the fact that $x$ is included in $D_{\subs}$ with probability $p$ independently of other items.

Thus, the algorithm is $\eps$-DP as claimed.
\end{proof}

We note that \Cref{lem:subsample} can be extended to approximate-DP or R\'enyi-DP by replacing the second half of the proof by the corresponding amplification-by-subsampling proofs.

The following corollary, which is an immediate consequence of plugging in $\eps_0 = \ln(2)$ and $p = 1 - e^{-\eps}$ into~\Cref{lem:subsample}, will be more convenient to work with throughout the remainder of the paper. It is useful to note that $p = \Theta(\eps)$ for $\eps \leq 1$ while $\eps_0 = \ln(2)$ is independent of $\eps$.

\begin{corollary} \label{cor:subsampling}
For any $\eps > 0$, if $\cM$ is $\ln(2)$-add-DP, then $\cM^{\cS_p}$ is $\eps$-DP for $p = 1 - e^{-\eps}$.
\end{corollary}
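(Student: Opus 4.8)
The plan is to simply instantiate \Cref{lem:subsample} with the stated choice of parameters and then check that the first term inside the maximum dominates. So I would start by setting $\eps_0 = \ln 2$, which makes $e^{\eps_0} - 1 = 1$, and $p = 1 - e^{-\eps}$, which makes $\frac{1}{1-p} = e^{\eps}$ and $1 + p(e^{\eps_0}-1) = 1 + p = 2 - e^{-\eps}$. By \Cref{lem:subsample}, $\cM^{\cS_p}$ is then $\eps'$-DP with $\eps' = \ln\bigl(\max\{e^{\eps},\, 2 - e^{-\eps}\}\bigr)$.

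The only thing left is to verify that $2 - e^{-\eps} \le e^{\eps}$, equivalently $e^{\eps} + e^{-\eps} \ge 2$, so that the maximum equals $e^{\eps}$ and hence $\eps' = \eps$. This inequality is immediate from AM--GM applied to $e^{\eps}$ and $e^{-\eps}$ (their product is $1$), or alternatively from convexity of $t \mapsto e^t$ together with $e^0 = 1$. I would present it as a one-line display:
\[
e^{\eps} + e^{-\eps} \;\ge\; 2\sqrt{e^{\eps}\cdot e^{-\eps}} \;=\; 2.
\]

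There is no real obstacle here; the statement is explicitly advertised as an "immediate consequence." If I wanted to be fully self-contained I could also remark that $p = 1 - e^{-\eps} \in [0,1)$ for all $\eps > 0$, so the hypothesis $p \in [0,1)$ of \Cref{lem:subsample} is satisfied, and note in passing (as the surrounding text does) that $p = 1 - e^{-\eps} = \Theta(\eps)$ for $\eps \le 1$, which is the feature that makes this corollary the convenient form to use later. But the mathematical content is entirely the parameter substitution plus the two-line AM--GM check.
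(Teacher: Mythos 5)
Your proposal is correct and matches the paper, which obtains \Cref{cor:subsampling} exactly by plugging $\eps_0 = \ln 2$ and $p = 1 - e^{-\eps}$ into \Cref{lem:subsample}; your AM--GM check that $2 - e^{-\eps} \le e^{\eps}$ just makes explicit the small verification the paper leaves as "immediate."
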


\section{Algorithm I: Repeated EM}
\label{sec:rem}

In the first setting, the interaction proceeds in $L$ rounds. In round $i$, the algorithm is given a candidate set $\cC_i$ and a scoring function $q_i: \cC_i \times \cX^* \to \R$ (which can depend on the output of previous rounds). The goal is to output $c^*_i \in \cC_i$ which achieves an approximately maximum score $q_i(c; D)$. The algorithm we use (\Cref{alg:repeated-em})---originally from \citet{GLMRT10}---simply applies the exponential mechanism at each step.

To analyze the algorithm, we need a couple of assumptions.
\begin{assumption}[Monotonicity] \label{assum:monotone}
For every $i, c$, adding an element to $D$ does not decrease $q_i(c; D)$.
\end{assumption}

\begin{assumption}[Bounded Realized Sensitivity] \label{assum:marginal-sensitivity}
For every add-remove neighbors $D, D'$ and every possible output $(c^*_1, \dots, c^*_L)$, $\sum_{i \in [L]} |q_i(c^*_i; D) - q_i(c^*_i; D')| \leq \Delta$.
\end{assumption}

\begin{algorithm}[t]
\caption{$\rem_{\eps_0, \cA}$~\cite{GLMRT10}}
\label{alg:repeated-em}
\begin{algorithmic}
\PARAMETERS $\eps_0 > 0$, an algorithm $\cA$ for selecting a candidate set and a scoring function
\REQUIRE Dataset $D \in \cX^*$
\FOR{$i = 1, \dots, L$}
  \STATE $(\cC_i, q_i) \gets \cA(c^*_1, \dots, c^*_{i - 1})$
  \STATE $c^*_i \gets \EM_{\frac{\eps_0}{\Delta}}(\cC_i, q_i; D)$
\ENDFOR
\RETURN $(c^*_1, \dots, c^*_L)$
\end{algorithmic}
\end{algorithm}

Under these assumptions, the algorithm is add-DP:

\begin{theorem} \label{thm:rem-add-dp}
Under Assumptions~\ref{assum:monotone}~and~\ref{assum:marginal-sensitivity},  {\sc Repeated-EM$_{\eps_0}$} (\Cref{alg:repeated-em}) is $\eps_0$-add-DP.
\end{theorem}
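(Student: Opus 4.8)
The plan is to compute the likelihood ratio $\Pr[\rem_{\eps_0,\cA}(D)=o]\,/\,\Pr[\rem_{\eps_0,\cA}(D')=o]$ directly for a single output sequence $o=(o_1,\dots,o_L)$ and neighbors $D=D'\cup\{x\}$ (i.e., $D$ is obtained from $D'$ by adding $x$), and show it is at most $e^{\eps_0}$; the case of a general subset $S\subseteq\cO$ then follows by summing. The structural observation that makes this work is that, \emph{conditioned on the output being exactly $o$}, the pair $(\cC_i,q_i)$ returned by $\cA$ in round $i$ is a deterministic function of the prefix $(o_1,\dots,o_{i-1})$, hence is \emph{identical} whether the input is $D$ or $D'$. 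Therefore the output probability factorizes as
\[
\Pr[\rem_{\eps_0,\cA}(D)=o]\;=\;\prod_{i=1}^{L}\frac{\exp\!\big(\tfrac{\eps_0}{\Delta}q_i(o_i;D)\big)}{Z_i(D)},\qquad Z_i(D):=\sum_{c\in\cC_i}\exp\!\big(\tfrac{\eps_0}{\Delta}q_i(c;D)\big),
\]
where $\cC_i,q_i$ are the same functions of $o_1,\dots,o_{i-1}$ in the two expansions, and likewise for $D'$. (If $o_i\notin\cC_i$ for some $i$ along this prefix, both sides are $0$ and there is nothing to prove.)

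Taking the ratio, it splits into a product over $i\in[L]$ of ``numerator ratios'' $\exp\!\big(\tfrac{\eps_0}{\Delta}(q_i(o_i;D)-q_i(o_i;D'))\big)$ and ``normalizer ratios'' $Z_i(D')/Z_i(D)$. For the normalizers, \Cref{assum:monotone} gives $q_i(c;D)\ge q_i(c;D')$ for every $c$ (adding $x$ cannot decrease a score), so term by term $Z_i(D)\ge Z_i(D')$, i.e. $Z_i(D')/Z_i(D)\le 1$. Hence
\[
\frac{\Pr[\rem_{\eps_0,\cA}(D)=o]}{\Pr[\rem_{\eps_0,\cA}(D')=o]}\;\le\;\exp\!\left(\frac{\eps_0}{\Delta}\sum_{i=1}^{L}\big(q_i(o_i;D)-q_i(o_i;D')\big)\right).
\]
Using \Cref{assum:monotone} once more, each summand is nonnegative, so $\sum_{i\in[L]}\big(q_i(o_i;D)-q_i(o_i;D')\big)=\sum_{i\in[L]}\big|q_i(o_i;D)-q_i(o_i;D')\big|\le\Delta$ by \Cref{assum:marginal-sensitivity} applied to the realized output $o$. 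Substituting yields the bound $\exp\!\big(\tfrac{\eps_0}{\Delta}\cdot\Delta\big)=e^{\eps_0}$, which is exactly $\eps_0$-add-DP.

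I do not expect a genuine obstacle here; the only point requiring care is the ``path-conditioning'' step — that once the target output is fixed, the adaptive choices $(\cC_i,q_i)$ no longer depend on the dataset — since this is what lets the normalizers combine cleanly and, crucially, lets us invoke \Cref{assum:marginal-sensitivity} on the single \emph{realized} path rather than a worst case over all paths. It is also worth flagging in the writeup why only the one-sided (add-DP) conclusion comes out: the comparison $Z_i(D')/Z_i(D)\le 1$ uses the \emph{direction} of monotonicity, and in the reverse direction the product of normalizer ratios can blow up with $L$. This asymmetry is precisely what \Cref{lem:subsample} (equivalently \Cref{cor:subsampling}) is introduced to repair.
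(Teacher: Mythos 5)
Your proposal is correct and follows essentially the same route as the paper's proof: factor the output probability over the $L$ exponential-mechanism calls along the realized prefix, bound the normalizer ratios by $1$ via \Cref{assum:monotone}, and bound the accumulated score differences by $\Delta$ via \Cref{assum:marginal-sensitivity}. Your added remarks on path-conditioning and on why the argument is inherently one-sided are accurate elaborations of the same argument rather than a different approach.
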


\begin{proof}
Consider any $D, D'$ such that $D = D' \cup \{x\}$ for some $x$, and any output $o = (o_1, \dots, o_L)$. We will write $\cM$ as a shorthand for $\rem_{\eps_0, \cA}$. We have
\begin{align*}
\frac{\Pr[\cM(D) = o]}{\Pr[\cM(D') = o]}
&= \prod_{i \in [L]} \frac{\Pr[\EM_{\frac{\eps_0}{\Delta}}(\cC_i, q_i; D) = o_i]}{\Pr[\EM_{\frac{\eps_0}{\Delta}}(\cC_i, q_i; D') = o_i]} \\
&= \prod_{i \in [L]} \frac{\frac{\exp\left(\frac{\eps_0}{\Delta} \cdot q_i(o_i; D)\right)}{\sum_{c \in \cC_i} \exp\left(\frac{\eps_0}{\Delta} \cdot q_i(c; D)\right)}}{\frac{\exp\left(\frac{\eps_0}{\Delta} \cdot q_i(o_i; D')\right)}{\sum_{c \in \cC_i} \exp\left(\frac{\eps_0}{\Delta} \cdot q_i(c; D')\right)}}.
\end{align*}
\Cref{assum:monotone} implies that $\sum_{c \in \cC_i} \exp\left(\frac{\eps_0}{\Delta} \cdot q_i(c; D)\right) \geq \sum_{c \in \cC_i} \exp\left(\frac{\eps_0}{\Delta} \cdot q_i(c; D')\right)$. Thus, we have 
\begin{align*}
\frac{\Pr[\cM(D) = o]}{\Pr[\cM(D') = o]} &\leq \exp\left(\frac{\eps_0}{\Delta} \cdot \sum_{i \in [L]} (q_i(o_i; D) - q_i(o_i; D')) \right) \\ &\leq \exp(\eps_0),
\end{align*}
where the second inequality follows from \Cref{assum:marginal-sensitivity}.

As a result, $\cM$ is $\eps_0$-add-DP, concluding our proof.
\end{proof}

By applying \Cref{cor:subsampling}, we immediately have that its subsampled variant is $\eps$-DP:

\begin{theorem} \label{thm:subsample-dp}
Under Assumptions~\ref{assum:monotone}~and~\ref{assum:marginal-sensitivity}, $\rem_{\ln(2), \cA}^{\cS_p}$ is $\eps$-DP for $p = 1 - e^{-\eps}$.
\end{theorem}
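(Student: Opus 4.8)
The plan is to simply chain together the two results that have already been established, with no new argument required. First I would invoke \Cref{thm:rem-add-dp} with the parameter choice $\eps_0 = \ln 2$: under \Cref{assum:monotone} and \Cref{assum:marginal-sensitivity}, this immediately yields that the (un-subsampled) mechanism $\rem_{\ln(2),\cA}$ is $\ln(2)$-add-DP. Recall that the content of that theorem is exactly what makes this work for pure-DP: \Cref{assum:monotone} forces the exponential-mechanism normalization term to only increase when an element is added, so it drops out of the one-sided ratio, and \Cref{assum:marginal-sensitivity} bounds the product of numerator ratios by $\exp(\eps_0)$.

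Second, I would apply \Cref{cor:subsampling} to the mechanism $\cM := \rem_{\ln(2),\cA}$. Since $\cM$ is $\ln(2)$-add-DP by the previous paragraph, the corollary gives that $\cM^{\cS_p} = \rem_{\ln(2),\cA}^{\cS_p}$ is $\eps$-DP for $p = 1 - e^{-\eps}$, which is precisely the claim. Here one should note that the subsampling operator $\cS_p$ appearing in the statement is the same Poisson subsampling ($\cM^{\cS_p} = \cM \circ \cS_p$) defined in \Cref{sec:prelim} and used in \Cref{lem:subsample}, so the two notations match verbatim.

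As for the main obstacle: honestly there isn't one — the substance has been front-loaded into \Cref{thm:rem-add-dp} (the one-sided DP analysis) and into \Cref{lem:subsample}/\Cref{cor:subsampling} (the one-sided-to-two-sided amplification), and this theorem is just the composition of the two. The only thing worth verifying is the arithmetic behind \Cref{cor:subsampling}: plugging $\eps_0 = \ln 2$ and $p = 1 - e^{-\eps}$ into the bound $\ln\!\left(\max\left\{\tfrac{1}{1-p},\, 1 + p(e^{\eps_0}-1)\right\}\right)$ from \Cref{lem:subsample}, one has $\tfrac{1}{1-p} = e^{\eps}$ and $1 + p(e^{\ln 2}-1) = 1 + p = 2 - e^{-\eps}$, and since $e^{\eps} + e^{-\eps} \ge 2$ we get $2 - e^{-\eps} \le e^{\eps}$ for every $\eps \ge 0$; hence the max equals $e^{\eps}$ and the resulting privacy parameter is exactly $\eps$. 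So \Cref{cor:subsampling} applies with the stated $p$, and the proof is complete.
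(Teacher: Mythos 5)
Your proposal is correct and follows exactly the paper's route: apply \Cref{thm:rem-add-dp} with $\eps_0 = \ln 2$ to get $\ln(2)$-add-DP, then apply \Cref{cor:subsampling} to conclude $\eps$-DP for $p = 1 - e^{-\eps}$. Your additional check that $2 - e^{-\eps} \le e^{\eps}$ (so the max in \Cref{lem:subsample} is $\tfrac{1}{1-p} = e^{\eps}$) is a nice sanity-check of the corollary's arithmetic, but otherwise the argument is identical to the paper's.
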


\subsection{Applications}

\subsubsection{Monotone Submodular Maximization under Cardinality Constraint} \label{sec:smc}

The algorithm in~\citet{GLMRT10} for monotone submodular maximization under cardinality constraint is based on the classic greedy algorithm that runs in $k$ rounds, each round finding an element that maximizes the marginal gain. More precisely, the algorithm---which we  call $\smcalg_{\eps_0, F_D}$---is exactly $\rem_{\eps_0, \cA}$ where $L = k$ and the candidate sets and scoring functions are as follows:
\begin{itemize}[nosep]
\item $\cC_i$ is the set of remaining elements $[m] \setminus \{c^*_1, \dots, c^*_{i - 1}\}$ 
\item $q_i(c; D)$ is the marginal gain $F_D(\{c^*_1, \dots, c^*_{i - 1}, c\}) - F_D(\{c^*_1, \dots, c^*_{i - 1}\})$
\end{itemize}

They proved the following utility guarantee:

\begin{theorem}[\citealt{GLMRT10}] \label{thm:smc-gupta-util}
For any $\eps_0, \beta > 0$, $\smcalg_{\eps_0}$ achieves $(1 - 1/e)$-approximation and $O\left(\frac{k \log (m/\beta)}{\eps_0}\right)$-error with probability $1 - \beta$.
\end{theorem}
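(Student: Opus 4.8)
The plan is to bootstrap from the accuracy half of \Cref{thm:em} (the exponential mechanism) and then run the textbook greedy analysis, carrying an additive error term through the recursion. First I would pin down the parameter with which the exponential mechanism is invoked: in $\smcalg_{\eps_0} = \rem_{\eps_0, \cA}$ each round calls $\EM_{\eps_0/\Delta}$, and for the cardinality-constrained $1$-decomposable setting the realized sensitivity is $\Delta = 1$. Indeed, writing $S_i := \{c^*_1, \dots, c^*_i\}$, for neighbors $D = D' \cup \{x\}$ the scores change by $q_i(c^*_i; D) - q_i(c^*_i; D') = f_x(S_i) - f_x(S_{i-1}) \ge 0$, and these telescope to $f_x(S_k) - f_x(\emptyset) \le 1$. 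So each round runs the exponential mechanism with parameter $\eps_0$, and by \Cref{thm:em} together with a union bound over the $k = L$ rounds at failure probability $\beta/k$ each, with probability at least $1 - \beta$ we have $q_i(c^*_i; D) \ge \max_{c \in \cC_i} q_i(c; D) - \lambda$ for every $i$, where $\lambda = O(\log(mk/\beta)/\eps_0) = O(\log(m/\beta)/\eps_0)$ using $k \le m$. Condition on this event.

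Next I would invoke monotonicity and submodularity of $F_D$ to lower bound the best available marginal: letting $T^*$ be an optimal $k$-set with $F_D(T^*) = \opt$, averaging the marginals of the (at most $k$) elements of $T^* \setminus S_{i-1}$ gives $\max_{c \in \cC_i} q_i(c; D) \ge \tfrac1k\bigl(F_D(S_{i-1} \cup T^*) - F_D(S_{i-1})\bigr) \ge \tfrac1k(\opt - F_D(S_{i-1}))$, the last step by monotonicity. Combined with the accuracy event, $F_D(S_i) - F_D(S_{i-1}) \ge \tfrac1k(\opt - F_D(S_{i-1})) - \lambda$. Setting $a_i := \opt - F_D(S_i)$, this is the recursion $a_i \le (1 - \tfrac1k) a_{i-1} + \lambda$ with $a_0 = \opt$, which unrolls to $a_k \le (1-\tfrac1k)^k \opt + \lambda \sum_{j=0}^{k-1}(1-\tfrac1k)^j \le e^{-1}\opt + \lambda k$. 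Hence $F_D(S_k) = \opt - a_k \ge (1 - \tfrac1e)\opt - O\!\left(\tfrac{k \log(m/\beta)}{\eps_0}\right)$, which is exactly the claimed guarantee.

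This argument is essentially routine, so I would not expect a real obstacle; the only places that need care are (i) splitting the failure budget across the $k$ rounds so that the per-round slack $\lambda$ only picks up an extra $\log k$ term inside the logarithm, which is absorbed by $k \le m$, and (ii) checking that the geometric sum of these per-round slacks telescopes to exactly $\lambda k$ rather than something worse. The content is just ``greedy $+$ noise''.
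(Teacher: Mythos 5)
Your proof is correct, and it matches the standard analysis that underlies the cited result of Gupta et al.~\citep{GLMRT10}, which the paper invokes as a black box rather than reproving. All the pieces are in the right place: the telescoping argument showing the realized sensitivity $\Delta \le 1$ (so the algorithm invokes $\EM_{\eps_0}$), the per-round accuracy of the exponential mechanism with a $\beta/k$ failure budget (and the $\log k$ absorbed via $k \le m$), the submodular averaging bound $\max_c q_i(c;D) \ge \tfrac1k(\opt - F_D(S_{i-1}))$, and the unrolling of $a_i \le (1-\tfrac1k)a_{i-1} + \lambda$ to $a_k \le e^{-1}\opt + \lambda k$. This is the ``greedy $+$ noise'' argument exactly as in the original source; no gap.
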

%
%
\begin{proof}[Proof of \Cref{thm:smc}]
We simply run the subsampled version of \Cref{alg:repeated-em}. More precisely, we use the algorithm $\smcalg_{\ln(2), F_D}^{\cS_p}$ where $p = 1 - e^{-\eps}$. The privacy analysis follows from the straightforward observation that $\cC_i, q_i$ in $\smcalg$ satisfies Assumptions~\ref{assum:monotone} and~\ref{assum:marginal-sensitivity} and \Cref{thm:subsample-dp}. 

For the utility, let $D_{\subs} \sim \cS_p(D)$ denote the subsampled dataset that is fed as an input to $\smcalg_{\ln(2)}$ and let $T^* := \{c^*_1, \dots, c^*_k\}$ denote the output set. 
From \Cref{thm:smc-gupta-util}, w.p. $1 - \beta/2$, we have
\begin{align} 
F_{D_{\subs}}(T^*) &\geq \left(1 - \frac{1}{e}\right) \cdot \max_{T \subseteq \cS, |T| = k} F_{D_{\subs}}(T) & \nonumber \\ &\qquad - O\left(k \log \left(\frac{m}{\beta}\right)\right). \label{eq:smc-util-subsampled}
\end{align}
Furthermore, applying the Chernoff bound (\Cref{thm:chernoff}) with $Z_x := f_x(T) \cdot \ind[x \in D_{\subs}], \mu = p \cdot F_D(T), \alpha = 0.01\eta, \zeta = \frac{2000 k \log(m/\beta)}{\eta}$ and a union bound over all sets $T \in \binom{\cU}{\leq k}$, we can conclude that the following hold simultaneously for all $T \in \binom{\cU}{\leq k}$ w.p. at least $1 - \beta/2$:
\begin{align}
F_{D_{\subs}}(T) &\geq (1 - \alpha)p \cdot F_D(T) - \zeta \label{eq:smc-lb}, \\
F_{D_{\subs}}(T) &\leq (1 + \alpha)p \cdot F_D(T) + \zeta \label{eq:smc-ub}.
\end{align}
When \eqref{eq:smc-util-subsampled}, \eqref{eq:smc-lb}, and \eqref{eq:smc-ub} all hold, we have
\begin{align*}
&F_D(T^*)  \\
&\overset{\eqref{eq:smc-ub}}{\geq} \frac{1}{(1 + \alpha)p} F_{D_{\subs}}(T^*) - \zeta / p \\
&\overset{\eqref{eq:smc-util-subsampled}}{\geq} \frac{1}{(1 + \alpha)p} \cdot \left(1 - \frac{1}{e}\right) \cdot \max_{T \subseteq \cS, |T| = k} F_{D_{\subs}}(T) \\&\qquad- O\left(\frac{k \log (m/\beta)}{p}\right) - \zeta / p \\
&\overset{\eqref{eq:smc-lb}}{\geq} \frac{1 - \alpha}{1 + \alpha} \cdot \left(1 - \frac{1}{e}\right) \cdot \max_{T \subseteq \cS, |T| = k} F_{D}(T) \\&\qquad - O\left(\frac{k \log (m/\beta)}{p}\right) - 2\zeta / p \\
&\geq \left(1 - \frac{1}{e} - \eta\right) \cdot \max_{T \subseteq \cS, |T| = k} F_{D}(T) - O\left(\frac{k \log (m/\beta)}{\eta \eps}\right),
\end{align*}
which concludes our proof.
\end{proof}

\subsubsection{Monotone Submodular Maximization under Matroid Constraint}

For maximization over a matroid, the greedy algorithm is \emph{not} known to achieve $1 - 1/e$ approximation ratio. Instead, one has to resort to the so-called \emph{continuous greedy algorithm} of \citet{CCPV11} (which is in turn based on an earlier algorithmic idea by \citet{Vondrak08}). \citet{CNZ21} followed this route and privatized the continuous greedy algorithm, albeit only achieving approximate-DP. Similarly to the above, we show that this algorithm in fact satisfies one-sided DP and, using the subsampled version of it, we prove \Cref{thm:main-submod-matroid}. Due to space constraints, we defer the full proof to~\Cref{app:submod-matroid}.

\subsubsection{Metric $k$-Means and $k$-Median}

For metric $k$-means/median, we first use the repeated exponential mechanism to select $O(k \log n)$ points that form a \emph{bicriteria} $O(1)$-approximate solution (where ``bicriteria'' refers to the fact that the set has size larger than $k$). To turn a bicriteria solution to an actual solution, we use a standard technique in private clustering~\cite{JNN21,GKM20}: we snap each input point to the closest point in the bicriteria solution and add noise to the counts to create ``private synthetic data''. We can then run any non-private approximation algorithm on this synthetic data to get our ultimate solution. As described, this algorithm only satisfies one-sided DP. To achieve two-sided DP, we again use the subsampled version of it, 
similar to \Cref{thm:smc}. The full proof is deferred to~\Cref{app:clustering}.

\section{Algorithm II: Repeated Above Threshold}

Again, the interaction proceeds in $L$ rounds. In round $i$, the algorithm is given a function $h_i: \cX^* \to \R$ together with a threshold $\tau_i$ (which can depend on the outputs from the previous rounds). The goal is to decide whether $h_i(D) \geq \tau_i$. The algorithm we consider is one that repeatedly applies a variant of the AboveThreshold algorithm \cite{DNRRV09}\footnote{See also Algorithm 1 in \citet{DworkR14}.}, where we only add noise to the function (but not to the threshold as in \cite{DNRRV09}) and the noise is drawn from the exponential distribution\footnote{Recall that $\Exp(\lambda)$ has CDF $1 - e^{-\lambda x}$ for $x \in [0, \infty)$.} (rather than the Laplace distribution); see \Cref{alg:repeated-abovethreshold}. We remark that our algorithm is also different from both previous works of~\citet{KMS21} (who need add another noise term to make their analysis work) and of~\citet{CL23} (who simply use the Laplace mechanism in each round). 

We analyze the DP guarantee of \Cref{alg:repeated-abovethreshold} under the following assumptions (similar to the ones in \Cref{sec:rem}).

\begin{assumption}[Monotonicity] \label{assum:monotone-at}
For every $i$, adding an element to $D$ does not decrease $h_i(D)$.
\end{assumption}

\begin{assumption}[Bounded Realized Sensitivity] \label{assum:marginal-sensitivity-at}
For every add-remove neighbors $D, D'$ and all possible sequences  $(h_1, \dots, h_L)$ of functions  and outputs $(o_1, \dots, o_L)$, $\sum_{i \in [L] \atop o_i = \top} |h_i(D) - h_i(D')| \leq \Delta$.
\end{assumption}

\begin{algorithm}
\caption{$\rat_{\eps_0, \cA}$}
\label{alg:repeated-abovethreshold}
\begin{algorithmic}    
\PARAMETERS $\eps_0 > 0$, an algorithm $\cA$ for selecting candidate and scoring functions
\REQUIRE Dataset $D \in \cX^*$
\FOR{$i = 1, \dots, L$}
\STATE $(h_i, \tau_i) \gets \cA(c^*_1, \dots, c^*_{i - 1})$
\STATE $\theta_i \sim \Exp(\eps_0 / \Delta)$
\IF{$h_i(D) + \theta_i > \tau_i$}
\STATE $c^*_i = \top$
\ELSE
\STATE $c^*_i = \perp$
\ENDIF
\ENDFOR
\RETURN $(c^*_1, \dots, c^*_L)$
\end{algorithmic}
\end{algorithm}

\begin{theorem}
Under Assumptions~\ref{assum:monotone-at}~and~\ref{assum:marginal-sensitivity-at}, $\rat_{\eps_0, \cA}$ (\Cref{alg:repeated-abovethreshold}) is $\eps_0$-add-DP.
\end{theorem}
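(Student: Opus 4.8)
The statement is the one-sided DP analogue of \Cref{thm:rem-add-dp}, now for the repeated AboveThreshold mechanism with exponential noise. The plan is to fix neighbors $D, D'$ with $D = D' \cup \{x\}$ and an output sequence $o = (o_1, \dots, o_L)$, and to bound the ratio $\Pr[\rat_{\eps_0,\cA}(D) = o] / \Pr[\rat_{\eps_0,\cA}(D') = o]$ by $e^{\eps_0}$. Since the functions $h_i$ and thresholds $\tau_i$ are fixed given $(o_1,\dots,o_{i-1})$, and the only randomness is the independent noises $\theta_i \sim \Exp(\eps_0/\Delta)$, I would condition step by step: having fixed that the first $i-1$ outputs match $o$, I want to compare the probability that $\theta_i$ lands in the region producing $o_i$ under $D$ versus under $D'$. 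The key point that replaces the ``partition-function monotonicity'' trick of \Cref{thm:rem-add-dp} is \Cref{assum:monotone-at}: $h_i(D) \geq h_i(D')$.

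\textbf{The per-round comparison.} For a round where $o_i = \bot$, we need $h_i(D) + \theta_i \leq \tau_i$, i.e. $\theta_i \leq \tau_i - h_i(D)$; since $h_i(D) \geq h_i(D')$ by monotonicity, the event under $D$ is \emph{contained} in the event under $D'$, so the conditional probability of outputting $\bot$ under $D$ is at most that under $D'$ — this round contributes a factor $\leq 1$. For a round where $o_i = \top$, we need $\theta_i > \tau_i - h_i(D)$; writing $a := \max\{\tau_i - h_i(D), 0\}$ and $a' := \max\{\tau_i - h_i(D'), 0\}$, the ratio of conditional probabilities is
\[
\frac{\Pr[\theta_i > a]}{\Pr[\theta_i > a']} = \frac{e^{-(\eps_0/\Delta) a}}{e^{-(\eps_0/\Delta) a'}} = e^{(\eps_0/\Delta)(a' - a)} \leq e^{(\eps_0/\Delta)(h_i(D) - h_i(D'))},
\]
using $a' - a \leq h_i(D) - h_i(D')$ (which holds because $t \mapsto \max\{t,0\}$ is $1$-Lipschitz and $\tau_i - h_i(D') \geq \tau_i - h_i(D)$). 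Multiplying across all rounds, the $\bot$-rounds drop out and the $\top$-rounds give
\[
\frac{\Pr[\rat_{\eps_0,\cA}(D) = o]}{\Pr[\rat_{\eps_0,\cA}(D') = o]} \leq \exp\!\left(\frac{\eps_0}{\Delta} \sum_{i : o_i = \top} (h_i(D) - h_i(D'))\right) \leq e^{\eps_0},
\]
where the last step is exactly \Cref{assum:marginal-sensitivity-at} (noting that monotonicity makes each $h_i(D) - h_i(D') \geq 0$, so the sum equals $\sum_{i: o_i = \top} |h_i(D) - h_i(D')| \leq \Delta$).

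\textbf{Making the conditioning rigorous.} The one subtlety is that the thresholds and functions in later rounds depend on the realized outputs, so I should phrase the argument as a telescoping product of conditional probabilities: $\Pr[\rat(D) = o] = \prod_{i=1}^L \Pr[c_i^* = o_i \mid c_1^* = o_1, \dots, c_{i-1}^* = o_{i-1}]$, and conditioned on the prefix matching $o$, the pair $(h_i, \tau_i)$ is a deterministic function of $o_{1:i-1}$ and hence identical under $D$ and $D'$; the only difference is $h_i(D)$ vs.\ $h_i(D')$ fed into the threshold test, whose noise $\theta_i$ is fresh and independent. I expect this bookkeeping to be the only place requiring care — the inequality $a' - a \leq h_i(D) - h_i(D')$ and the memorylessness-style computation of $\Pr[\Exp(\lambda) > a]$ are routine. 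No integral over a continuum of outputs is needed because each round's output is binary; the continuous noise only enters through the closed-form survival function of the exponential.
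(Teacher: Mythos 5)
Your proposal is correct and follows essentially the same route as the paper's proof: decompose the output probability into per-round factors, use monotonicity (\Cref{assum:monotone-at}) to bound the $\bot$-round factors by $1$, use the exponential survival function to bound each $\top$-round factor by $\exp\bigl(\tfrac{\eps_0}{\Delta}(h_i(D) - h_i(D'))\bigr)$, and finish with \Cref{assum:marginal-sensitivity-at}. Your explicit handling of the clipping at $0$ via Lipschitzness of $t \mapsto \max\{t,0\}$ and the telescoping-conditional bookkeeping are just slightly more careful write-ups of what the paper does implicitly.
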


\begin{proof}
Consider any $D, D'$ such that $D = D' \cup \{x\}$ for some $x$, and any output $o = (o_1, \dots, o_L)$. We will write $\cM$ as a shorthand for $\rat_{\eps_0, \cA}$. Let $\cI_{\top} := \{i \in L \mid o_i = \top\}$ and $\cI_{\bot} := \{i \in L \mid o_i = \bot\}$. We have
\begin{align*}
\frac{\Pr[\cM(D) = o]}{\Pr[\cM(D') = o]}
&= \prod_{i \in \cI_{\top}} \frac{\Pr[h_i(D) + \theta_i > \tau_i]}{\Pr[h_i(D') + \theta_i > \tau_i]} \\ &\qquad \cdot \prod_{i \in \cI_{\bot}} \frac{\Pr[h_i(D) + \theta_i \leq \tau_i]}{\Pr[h_i(D') + \theta_i \leq \tau_i]}.
\end{align*}
Since $h_i(D') \leq h_i(D)$ (\Cref{assum:monotone-at}), we have $\frac{\Pr[h_i(D) + \theta_i \leq \tau_i]}{\Pr[h_i(D') + \theta_i \leq \tau_i]} \leq 1$. Meanwhile, from the definition of $\Exp\left(\frac{\eps_0}{\Delta}\right)$, we have $\frac{\Pr[h_i(D) + \theta_i \leq \tau_i]}{\Pr[h_i(D') + \theta_i \leq \tau_i]} \leq \exp\left(\frac{\eps_0}{\Delta}\cdot(h_i(D) - h_i(D'))\right)$. 
Thus, we have
\begin{align*}
\frac{\Pr[\cM(D) = o]}{\Pr[\cM(D') = o]} &\leq \exp\left(\frac{\eps_0}{\Delta} \cdot \sum_{i \in \cI_{\top}} (h_i(D) - h_i(D'))\right) \\ &\leq \exp(\eps_0),
\end{align*}
where the second inequality is from \Cref{assum:marginal-sensitivity-at}.  
\end{proof}

By applying \Cref{cor:subsampling}, we immediately have that its subsampled variant is $\eps$-DP:

\begin{theorem} \label{thm:subsample-dp-rat}
Under Assumptions~\ref{assum:monotone-at}~and~\ref{assum:marginal-sensitivity-at}, $\rat_{\ln(2), \cA}^{\cS_p}$ is $\eps$-DP for $p = 1 - e^{-\eps}$.
\end{theorem}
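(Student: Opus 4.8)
The plan is to obtain this statement by simply chaining the two results already established in this section, exactly as \Cref{thm:subsample-dp} was obtained for the Repeated-EM mechanism. First I would invoke the preceding theorem, which shows that under Assumptions~\ref{assum:monotone-at} and~\ref{assum:marginal-sensitivity-at} the mechanism $\rat_{\eps_0, \cA}$ is $\eps_0$-add-DP for every $\eps_0 > 0$; specializing to $\eps_0 = \ln(2)$ yields that $\rat_{\ln(2), \cA}$ is $\ln(2)$-add-DP.

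Next I would apply \Cref{cor:subsampling} with $\cM = \rat_{\ln(2), \cA}$: since this mechanism is $\ln(2)$-add-DP, its Poisson-subsampled variant $\rat_{\ln(2), \cA}^{\cS_p}$ is $\eps$-DP whenever $p = 1 - e^{-\eps}$. This is precisely the claimed statement, so the two invocations together constitute the entire proof; equivalently one could apply \Cref{lem:subsample} directly, but the $\eps_0 = \ln(2)$ choice (for which $p = \Theta(\eps)$ when $\eps \le 1$) is the convenient one used throughout.

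Since both ingredients are already proved verbatim in exactly the forms needed, there is essentially no obstacle. The only point to verify is that Assumptions~\ref{assum:monotone-at} and~\ref{assum:marginal-sensitivity-at} are precisely the hypotheses under which the add-DP guarantee for $\rat$ was established — which they are — and that the subsampling rate $p = 1 - e^{-\eps}$ matches the one in \Cref{cor:subsampling} with $\eps_0 = \ln(2)$. Hence no further calculation is required.
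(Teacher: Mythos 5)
Your proposal is correct and matches the paper's own (one-line) derivation exactly: the paper also obtains \Cref{thm:subsample-dp-rat} by first noting $\rat_{\ln(2),\cA}$ is $\ln(2)$-add-DP under Assumptions~\ref{assum:monotone-at} and~\ref{assum:marginal-sensitivity-at}, then invoking \Cref{cor:subsampling}. Nothing more is needed.
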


\subsection{Applications}

\subsubsection{Shifting Heavy Hitters}

To formalize our result, we need to first formalize the assumption that ``each user contributes to at most $k$ heavy hitters''. To do this, let us first define $\tau^*(k) := \frac{4000 k \log(T|\cH|/\beta)}{\eps}$ to be (half of) our target error.
The assumption we work with is the following, which is the same as that of \citet{KMS21} (but with different $\tau^*(k)$).

\begin{assumption} \label{assum:shifting-hh}
For every user $i \in [n]$, we have \\ $|\{t \in [T] \mid w_t(x_i) > \tau^*(k)\}| \leq k$.
\end{assumption}

Our theorem can now be formalized as follows.

\begin{theorem} \label{thm:shifting-hh}
For any $0 < \eps \leq 1$, under \Cref{assum:shifting-hh} for $\tau^*(k) = \frac{1000 k \log(T|\cH|/\beta)}{\eps}$, there is a shifting heavy hitters algorithm with error $2\tau^*(k)$ w.p. $1 - \beta$.
\end{theorem}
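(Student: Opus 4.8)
The plan is to instantiate the Repeated-Above-Threshold framework (\Cref{alg:repeated-abovethreshold}) with $\eps_0 = \ln(2)$, apply Poisson subsampling with $p = 1 - e^{-\eps}$, and invoke \Cref{thm:subsample-dp-rat} for privacy. The main work is the utility analysis. First I would set up the queries: at each time step $t \in [T]$ and each bucket $y \in \cH$, define $h_{t,y}(D) := w_t(y) = |\{i : x_{i,t} = y\}|$ (the count of users contributing to bucket $y$ at time $t$), with threshold $\tau_t = \tau^*(k)$ (or perhaps $\tfrac{3}{2}\tau^*(k)$ to split the slack cleanly), and report $y$ at time $t$ iff the noisy count exceeds the threshold. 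So $L = T|\cH|$ rounds. The reporting rule is: report $y$ at time $t$ iff $c^*_{t,y} = \top$.

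The privacy side requires checking Assumptions~\ref{assum:monotone-at} and~\ref{assum:marginal-sensitivity-at}. Monotonicity is immediate since adding a user can only increase counts $w_t(y)$. For the bounded realized sensitivity: when we add/remove one user $i$, the only $(t,y)$ pairs whose count changes are those with $y = x_{i,t}$, and each such count changes by exactly $1$; moreover a round can only be ``charged'' if $o_{t,y} = \top$, which by the utility guarantee of AboveThreshold happens (for the relevant non-private analysis) only when $w_t(x_{i,t})$ is large, i.e.\ when $x_{i,t}$ is (close to) a heavy hitter at time $t$. Here is where \Cref{assum:shifting-hh} enters: user $i$ has at most $k$ time steps $t$ with $w_t(x_i) > \tau^*(k)$. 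This is exactly the individualized accounting point — we need $\sum_{(t,y): o_{t,y}=\top} |h_{t,y}(D) - h_{t,y}(D')| \le \Delta$ with $\Delta = O(k)$. The subtlety is that the assumption bounds the number of heavy-hitter rounds but the algorithm reports based on \emph{noisy} counts, so a user could in principle be charged for a round where the noise pushed a sub-threshold count over. I would handle this by conditioning on a high-probability event that all exponential noise variables $\theta_{t,y}$ are at most $\tau^*(k)/2$ in magnitude (union bound over $L = T|\cH|$ rounds, using the tail $e^{-\lambda z}$ of $\Exp(\lambda)$ with $\lambda = \eps_0/\Delta = \ln(2)/\Theta(k)$), so that any reported $(t, x_{i,t})$ has genuine count $\ge \tau^*(k)/2$ — which must be linked back to a genuinely heavy-hitter round for user $i$. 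Actually the cleanest route: Assumption~\ref{assum:marginal-sensitivity-at} should be argued to hold \emph{deterministically for every output sequence}, so I would instead re-examine what ``realized sensitivity'' means here — for the output sequences that actually have positive probability under both $D$ and $D'$, we argue $\Delta = O(k)$ suffices; and since the assumption as stated quantifies over \emph{all} sequences, I may need to argue the sensitivity bound differently, perhaps noting that the count $w_t(x_{i,t})$ differs by $1$ and summing over at most $k$ heavy rounds plus bounding the contribution from non-heavy rounds (where $w_t(x_{i,t}) \le \tau^*(k)$, but there could be many such rounds). This tension is the main obstacle and I expect the resolution is that $\Delta$ in \Cref{assum:marginal-sensitivity-at} is set to $O(k)$ and the threshold is chosen high enough ($\tau^*(k) = \Theta(k\log(T|\cH|/\beta)/\eps)$) that with the $\Exp(\eps_0/\Delta)$ noise, reported rounds are provably heavy-for-$i$ with overwhelming probability, and the rare failure event is folded into $\beta$.

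For utility: we need that w.p.\ $1 - \beta$, at every time $t$, (i) every reported $y$ has $w_t(y) > 0$, and (ii) every $2\tau^*(k)$-heavy hitter is reported. Two layers of randomness: the subsampling $\cS_p$ and the noise. For subsampling, apply the Chernoff bound (\Cref{thm:chernoff}) to $\sum_i \ind[x_{i,t} = y]\ind[i \in \cS_p(D)]$ with mean $p\cdot w_t(y)$, union-bounding over all $T|\cH|$ pairs $(t,y)$: conditioned on a good event, the subsampled count $\tilde w_t(y)$ satisfies $\tilde w_t(y) \ge \tfrac{1}{2}p\,w_t(y) - O(\log(T|\cH|/\beta))$ and $\tilde w_t(y) \le 2p\,w_t(y) + O(\log(T|\cH|/\beta))$. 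For the noise, condition on all $|\theta_{t,y}|$ being small, as above. Then: if $y$ is reported, $\tilde w_t(y) + \theta_{t,y} > \tau_t$, forcing $\tilde w_t(y)$ large, forcing $w_t(y) > 0$ (in fact $w_t(y) \gtrsim \tau^*(k)$ after undoing subsampling and noise) — giving (i) with room to spare. Conversely, if $w_t(y) \ge 2\tau^*(k)$, then $p\,w_t(y) \approx \eps\cdot 2\tau^*(k)$ is large, so $\tilde w_t(y) \gtrsim \tfrac12 p\cdot 2\tau^*(k) - O(\log(\cdot))$ exceeds $\tau_t$ even after subtracting the noise upper bound, so $y$ is reported — giving (ii). The constants in the definition of $\tau^*(k)$ (the $1000$, the $\eps_0 = \ln 2$, $p = 1 - e^{-\eps} \ge \eps/2$ for $\eps \le 1$) are chosen precisely so all these inequalities close; I would not grind through them but would note the scaling $\tau^*(k) = \Theta(k\log(T|\cH|/\beta)/\eps)$ makes the multiplicative Chernoff error negligible relative to $\tau^*(k)$ and the noise $O(\Delta\log(L/\beta)/\eps_0) = O(k\log(T|\cH|/\beta))$ also negligible relative to $p\cdot\tau^*(k)$.

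One more detail to pin down: the algorithm as run is $\rat_{\ln(2),\cA}^{\cS_p}$ over $L = T|\cH|$ rounds, and we must make sure the rounds are presented in a fixed order independent of the data (e.g.\ lexicographic in $(t,y)$) so that $\cA$'s choice of $(h_i,\tau_i)$ depends only on public information — here it trivially does since the query structure is fixed in advance. With privacy from \Cref{thm:subsample-dp-rat} and utility from the two union bounds above, combining the failure probabilities ($\beta/4$ for subsampling concentration, $\beta/4$ for noise tails, with the constant-factor heavy-hitter guarantee absorbing the rest) yields error $2\tau^*(k)$ w.p.\ $1 - \beta$, completing the proof. The hard part, to reiterate, is cleanly establishing \Cref{assum:marginal-sensitivity-at} with $\Delta = O(k)$ — i.e.\ making rigorous the claim that the ``realized'' number of above-threshold rounds attributable to a single user is at most $k$ — and the right way to do this is to show the sensitivity bound holds for all output sequences of positive probability under the high-probability noise event, or equivalently to show that the threshold is set high enough that a user can never be charged for a round where they are not genuinely heavy.
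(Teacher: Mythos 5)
You correctly identified the crux --- establishing \Cref{assum:marginal-sensitivity-at} with $\Delta = O(k)$ --- but your proposed resolutions do not work, and this is a genuine gap. Instantiating the queries as the raw counts $h_{t,y}(D) = w_t(y)$ gives no deterministic bound on the realized sensitivity: a user's bucket can be reported in arbitrarily many rounds where its true count is below $\tau^*(k)$ (the noise is unbounded, so \emph{every} output sequence has positive probability), and \Cref{assum:shifting-hh} says nothing about such rounds. Your two suggested fixes --- conditioning on a high-probability event that all $\theta_{t,y}$ are small, or restricting to ``output sequences of positive probability under the noise event'' --- cannot yield pure add-DP: the privacy guarantee must hold unconditionally, for every output sequence and for every input dataset (including ones violating \Cref{assum:shifting-hh}, since the assumption may only be used for utility). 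Folding a failure event into $\beta$ would at best give approximate DP, which defeats the purpose of the result.

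The paper resolves this by changing the \emph{algorithm}, not the analysis: $\threshmon$ (\Cref{alg:threshold-monitor}) maintains a per-user counter $C_i$ of how many reported heavy hitters user $i$ has contributed to, and once $C_i = k$ the user is removed from the active set $\cI$, so the queries are $w_t^{\cI}(y;D)$ rather than $w_t(y;D)$. With this capping, the realized sensitivity is at most $k$ \emph{deterministically}, for every output sequence and every dataset, so \Cref{assum:marginal-sensitivity-at} holds with $\Delta = k$ and \Cref{thm:subsample-dp-rat} applies directly. \Cref{assum:shifting-hh} then enters only in the utility analysis: under the good event (Chernoff bounds on the subsampled counts and noise tails), any reported bucket has true count exceeding $\tau^*(k)$, so by the assumption each user contributes to at most $k$ reports and hence no user is ever dropped, i.e.\ $w_t^{\cI} = w_t$ on the subsample; the rest of your utility argument (threshold $\tau = 1.5\,p\,\tau^*(k)$, Chernoff plus union bound over $(t,y)$, nonnegativity of the exponential noise for the heavy-hitter direction) matches the paper. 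Without the capping mechanism, however, the privacy claim in your write-up does not go through.
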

Note that \Cref{assum:shifting-hh} is required only for utility; privacy guarantee holds for all input datasets, as is standard in DP.

Our $\threshmon_{\eps_0, \tau, k}$ algorithm (which is a simplification of the algorithm from \citet{KMS21}) is presented in \Cref{alg:threshold-monitor}. Here, we keep the counter $C_i$ for the number of times the user $i$ has contributed to the heavy hitters. When this hits $k$, we simply drop this user and never include this user in the counts in the subsequent rounds.

\begin{algorithm}
\caption{$\threshmon_{\eps_0, \tau, k}$}
\label{alg:threshold-monitor}
\begin{algorithmic}
\PARAMETERS $\eps_0 > 0$, $\tau$ the desired heavy hitter threshold, $k \in \N$ limit on the number of contributions
\REQUIRE Input data stream $D \in (\cH^T)^n$
\STATE $\cI \gets [n]$
\FOR{$i = 1, \dots, n$}
\STATE $C_i \gets 0$
\ENDFOR
\FOR{$t = 1, \dots, T$}
\FOR{$y \in \cH$}
\STATE $w_t^{\cI}(y; D) \gets |\{i \in \cI \mid x_{i, t} = y\}|$
\STATE $\theta_{t, y} \sim \Exp(\eps_0 / k)$
\IF{$w_t^{\cI}(y; D) + \theta_{t, y} > \tau_i$}
\STATE Report $y$ for time step $t$
\FOR{$i \in \cI$ such that $x_{i, t} = y$}
\STATE $C_i \gets C_i + 1$
\IF{$C_i = k$}
\STATE $\cI \gets \cI \setminus \{i\}$
\ENDIF
\ENDFOR
\ENDIF
\ENDFOR
\ENDFOR
\end{algorithmic}
\end{algorithm}

We are now ready to prove \Cref{thm:shifting-hh}.

\begin{proof}[Proof of \Cref{thm:shifting-hh}]
We use the subsampled version of $\threshmon$, i.e.,  $\threshmon_{\ln(2), \tau, k}^{\cS_p}$ for $p = 1 - e^{-\eps}$ and $\tau = 1.5 p \cdot \tau^*(k)$.
It is not hard to see that $\threshmon$ is an instantiation of \Cref{alg:repeated-abovethreshold} with $h_{ t, x}^{\cI}(D) := w_{t}^{\cI}(x)$ that satisfies \Cref{assum:monotone-at} and \Cref{assum:marginal-sensitivity-at} with $\Delta = k$.
 Thus, \Cref{thm:subsample-dp-rat} implies that $\threshmon_{\ln(2), \tau, k}^{\cS_p}$ is $\eps$-DP as desired.

To see the utility guarantee, let $D_{\subs}$ denote the subsampled dataset used as the input to $\threshmon_{\ln(2), \tau, k}$. Note also that $\tau^*(k) \cdot p \geq 0.5 \tau^*(k) \cdot \eps = 2000 k \ln(T|\cH|/\beta)$.  We first apply the Chernoff bound (\Cref{thm:chernoff}) with $Z_i := \ind\left[i \in D_{\subs}\right] \cdot \ind[x_{i,t} = y], \mu = p \cdot w_t(y; D), \alpha = 0.1, \zeta = 100 \cdot \ln(2T|\cH|/\beta) \leq 0.1p \cdot \tau^*(k)$ together with a union bound over all $t \in [T], y \in \cH$, we can conclude that the following hold simultaneously for all  $t \in [T], y \in \cH$ with probability at least $1 - \beta/2$:
\begin{align} 
w_t(y; D_{\subs}) &\geq 0.9 p \cdot w_t(y; D) - 0.1p \cdot \tau^*(k), \label{eq:hh-lb} \\
w_t(y; D_{\subs}) &\leq 1.1 p \cdot w_t(y; D) + 0.1p \cdot \tau^*(k). \label{eq:hh-ub}
\end{align}

Furthermore, by tail bounds for exponential noise, the following holds for all $t \in [T], x \in \cX$ with probability $1 - \beta / 2$:
\begin{align} \label{eq:noise-bound-hh}
\theta_{t,y} \leq k \log(2T|\cH|/\beta) \leq 0.01\tau^*(k).
\end{align}

We will continue the remainder of the analysis assuming that  \eqref{eq:hh-lb}, \eqref{eq:hh-ub}, and \eqref{eq:noise-bound-hh} hold for all $t \in [T], y \in \cH$; by a union bound, this occurs with probability at least $1 - \beta$. 

By \eqref{eq:hh-ub} and \eqref{eq:noise-bound-hh}, if $y$ is reported at time  $t$, then we must have
\begin{align*}
w_t(y; D) \geq \frac{\tau - 0.1p \cdot \tau^*(k)}{1.1 p} > \tau^*(k).
\end{align*}
This satisfies the first part of the accuracy requirement. Furthermore, this and \Cref{assum:shifting-hh} imply that each data record $i$ in $D_{\subs}$ contributes to at most $k$ reported heavy hitters. Thus, $\cI$ remains the entire dataset for the entire run. As a result, for each $(2\tau^*(k))$-heavy hitter $y$ at time step $t$,
\begin{align*}
w_t^{\cI}(y; D) \overset{\eqref{eq:hh-lb}}{\geq} 0.9 p \cdot 2\tau^*(k) - 0.1 p \cdot \tau^*(k) > \tau.
\end{align*}
Since $\theta_{t, y} \geq 0$, this implies that $y$ is reported at time $t$. Thus, the algorithm satisfies the claimed accuracy bounds.
\end{proof}

\subsubsection{Set Cover}
\label{sec:set-cov}

We use the so-called \emph{greedy scaling} algorithm from~\citet{KMVV15}. Their algorithm works in $O(\log n)$ rounds. In each round, we iteratively keep all sets that cover at least a certain number of elements; this threshold is geometrically decreased across different rounds until every element is covered. We adapt this algorithm to the DP setting using our $\rat$ algorithm (\Cref{alg:repeated-abovethreshold}) to find the sets to be picked in each round. Note that we also resample the dataset $D_{\subs}$ in each round to avoid having to do a union bound over too large a number of events. This requires us to carefully split the privacy budget in each round (which we assigns in geometrically increasing manners). The complete description is presented in \Cref{alg:set-cov}. Due to space constraints, we defer the full proof to \Cref{app:set-cov}.

\begin{algorithm}
\caption{$\setcovalg_{\eps}$}
\label{alg:set-cov}
\begin{algorithmic}
\PARAMETERS $\eps > 0$
\REQUIRE Input universe $D \in \cX^*$, subsets $S_1, \dots, S_m \subseteq D$
\STATE $\cI \gets [m]$
\STATE $n_{\min} \gets 100 \log m$
\STATE $\tn \gets \max\left\{n + \Lap(0.5/\eps), n_{\min}\right\}$
\STATE $j \gets 1$
\STATE $R \gets \lfloor \log(\tn / n_{\min}) \rfloor$
\FOR{$r = 1, \dots, R$}
\STATE $\tau_r \gets 1000 \cdot \frac{\tn}{2^r}$
\STATE $\eps_r \gets \frac{\eps}{4 \cdot 2^{R-r}}$
\STATE $p_r \gets 1 - e^{-\eps_r}$
\STATE $D_{\subs, r} \sim \cS^{p_r}(D)$
\FOR{$i \in \cI$}
\STATE $\theta_{r, i} \sim \Exp(\ln(2))$
\IF{$\left|(S_i \cap D_{\subs, r}) \setminus \left(\bigcup_{j' < j} S_{\pi(j')} \right)\right| + \theta_{r, i} > p_r \cdot \tau_r$}
\STATE $\pi(j) \gets i$
\STATE $\cI \gets \cI \setminus \{i\}$
\STATE $j \gets j + 1$
\ENDIF
\ENDFOR
\ENDFOR
\FOR{$i \in \cI$}
\STATE $\pi(j) \gets i$
\STATE $j \gets j + 1$
\ENDFOR
\RETURN $(\pi(1), \dots, \pi(m))$
\end{algorithmic}
\end{algorithm}

\section{Conclusion and Open Questions}
\label{sec:conc}

In this work, we make a simple observation that subsampling a one-sided DP mechanism makes it two-sided DP. Applying this observation to the repeated exponential mechanism and the repeated above threshold mechanism, we obtain novel pure-DP algorithms for several combinatorial  optimization problems and for the shifting heavy hitters problem. It remains interesting to explore the applications of this framework further. One clear barrier of the current approach is that it requires monotonicity (Assumptions \ref{assum:monotone} and~\ref{assum:monotone-at}). This prevents us from applying this to the non-monotone submodular maximization problems; meanwhile \cite{CNZ21} show that a Gupta et al.-like analysis still works for approximate-DP. In particular, they achieve $\left(\frac{1}{e} - \eta\right)$-approximation and $O\left(\frac{k \log\left(\frac{m}{\eta\beta}\right)\log(1/\delta)}{\eta \eps}\right)$-error for non-monotone submodular maximization under matroid constraint. A concrete question here is whether we can achieve a similar guarantee under pure-DP. 

\ificml
\subsection*{Impact Statement}
This work advances the area of optimization and data analytics with privacy.  There might be potential societal consequences of our work, none which we feel is significant enough to be highlighted.
\fi 

\bibliographystyle{icml2024}
\bibliography{ref}

\newpage
\onecolumn
\appendix 

\section{Additional Related Work}

Below we provide additional discussion on related work.

\paragraph{AboveThreshold with Individualized Privacy Loss.}
As stated earlier, \citet{KMS21} provide a modification of the sparse vector technique (SVT) \cite{DNRRV09} that can be used for individualized privacy loss. Their algorithm is similar to \Cref{alg:repeated-abovethreshold} presented in our work except that (i) they use Laplace noise (since they want two-sided DP), (ii) they add noise to the threshold (similar to standard SVT) and (iii) they also add another ``noise of noise'' term to the query value. The last one is to help with the analysis but results in an increase of $O(\log(\log(1/\delta)/\eps))$ in their error bound. Very recently,~\citet{CL23} gave a different algorithm for the task that gets rid of this bound. Their algorithm is essentially the same as our \Cref{alg:repeated-abovethreshold} with the exception that they use Laplace noise instead of Exponential noise. The framework of \citet{CL23} is extremely generic and individualized privacy loss accounting is only one of the applications of their framework. However, their results only apply for approximate-DP. It remains interesting to see if our framework can be used for any of the applications in their paper.

\paragraph{Private Submodular Maximization.} 
Although our paper assumes that the function is decomposable, DP submodular maximization has also been studied under other assumptions. For example,~\citet{MBKK17} also study the setting where the function $F_D$ is only assumed to have low-sensitivity, i.e., $|F_D(S) - F_{D'}(S)| \leq \Delta$ for all neighboring datasets $D, D'$. This is a more relaxed assumption than decomposability. Our techniques do not seem to apply here, both because (i) it is not clear how to related the subsampled function value to the function value on the entire dataset and (ii) the individualized privacy accounting is not known to be applicable here even for approximate-DP. It remains an interesting question whether one can get improved bound under this weaker assumption. Note that, for 1-sensitive monotone submodular functions, the best $\eps$-DP algorithm under cardinality constraint is still from \cite{MBKK17} and achieves approximation ratio $\left(1 - \frac{1}{e}\right)$ and error $O\left(\frac{k^2\log n}{\eps}\right)$. For matroid constraint, the best algorithm is that of \citet{RY20}, which achieves the same approximation ratio but with error $O\left(\frac{k^7 \log n}{\eps^3}\right)$. 



Another related work on the topic is by~\citet{SF21}, who gave an improved approximation algorithm if the \emph{total curvature} of the function is smaller than one and also proposed an algorithm for the online setting. A recent work of~\citet{CNN23} also studied private submodular maximization in the streaming setting. Finally, we note that the aforementioned work also considered other settings including non-monotone functions. Our techniques do not apply in this case; we provide more discussion regarding this in~\Cref{sec:conc}.

\paragraph{Partial Set Cover.} \citet{LNV23} initiated a study of private partial set cover, where it suffices to cover a fraction (not all) of the elements, and apply it as a subroutine to the DP $k$-supplier with outlier problem. Since they also use Gupta et al.'s algorithm, our technique can be applied to their setting to achieve improved bounds as well.

\paragraph{Private Clustering.} We remark that in \cite{GLMRT10,JNN21}, the metric $k$-median problem is defined in so that each point in the metric can appear in the dataset only once. This implies $n \leq m$ and their bound thus only depends on $m$. We choose to state the more general formulation and bounds here, which is why we also have the dependency on $n$.

While we focus our attention on discrete finite metric, DP clusterings have also been studied in other infinite metrics, such as the $\ell_p$-metric~\cite{BDMN05,FeldmanFKN09,NissimS18,StemmerK18,GKM20,JNN21,NCX21,ChangGKM21}. Some of these works, e.g.,~\cite{JNN21,NCX21}, uses the repeated exponential mechanism as a subroutine. Therefore, our techniques can be applied to reduce the errors in these bounds.

\paragraph{Amplification by Subsampling.} There is, by now, a large body of literature on DP amplification by subsampling (e.g.,~\cite{LQS12,BBG18,WBK20,dpsgd}). However, we are not aware of any result that allows us to goes from an approximate-DP guarantee to a pure-DP guarantee, which is what \Cref{thm:subsample-dp} achieves (albeit with a one-sided DP requirement). It remains an intriguing question to further explore the power of amplification by subsampling.

\section{Lower Bound for Shifting Heavy Hitters}
\label{app:hh-lb}

We prove the following lower bound for the shifting heavy hitters problem, which shows that the bound on $\tau^*(k)$ we achieved in \Cref{thm:shifting-hh} is essentially tight.

\begin{theorem}
For any sufficiently large $k, T \in \N$ such that $T \geq k, \eps \in (0, 1)$ and $|\cH| \geq 5$, there is no $\eps$-DP algorithm that can achieve $2\tau^*(k)$ error w.p. 0.5 under \Cref{assum:shifting-hh} for $\tau^*(k) = 0.001k \log\left(\frac{|\cH| T}{k}\right)$.
\end{theorem}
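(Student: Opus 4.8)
The plan is a packing (group-privacy) argument, where the hard instances must be crafted to respect \Cref{assum:shifting-hh} with the sharp contribution bound $k$ while still forcing near-unique correct outputs. Fix a partition $\cH = \cH_{\mathrm{str}} \sqcup \cH_{\mathrm{jnk}}$ with $|\cH_{\mathrm{jnk}}| = 3$, so $|\cH_{\mathrm{str}}| = |\cH| - 3 \geq 2$; set $\tau := \lceil 2\tau^*(k) \rceil$; and suppose toward a contradiction that $\cM$ is an $\eps$-DP algorithm achieving error $2\tau^*(k)$ with probability $\geq \tfrac12$ on every input satisfying \Cref{assum:shifting-hh}. For each $G \in \binom{[T]}{k}$ and each $g \colon G \to \cH_{\mathrm{str}}$ I use the dataset $D_{G,g}$ consisting of exactly $\tau$ users, where each user contributes $g(t)$ at every time $t \in G$ and, at each time $t \notin G$, the $\tau$ users are spread as evenly as possible over the three buckets of $\cH_{\mathrm{jnk}}$ (so every junk bucket has count at most $\lceil \tau/3 \rceil \leq \lfloor \tau^*(k) \rfloor$, which holds once $k$ is large enough that $\tau^*(k) \geq 7$). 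I also take $D_0 := \emptyset$. Each user of $D_{G,g}$ lies in a bucket of count $> \tau^*(k)$ at exactly the $k$ times of $G$, so \Cref{assum:shifting-hh} holds for $D_{G,g}$ (and trivially for $D_0$), and $D_{G,g}$ is reachable from $D_0$ by $\tau$ additions.

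The crucial observation is that a correct output on $D_{G,g}$ is essentially determined: at each $t \in G$ the only present bucket is $g(t)$, with count $\tau \geq 2\tau^*(k)$, so it must be reported and nothing else can be; at each $t \notin G$ every present bucket lies in $\cH_{\mathrm{jnk}}$ and none is $2\tau^*(k)$-heavy. Accordingly, let $O_{G,g} := \{(R_1,\dots,R_T) : R_t = \{g(t)\}\ \forall t\in G \text{ and } R_t \cap \cH_{\mathrm{str}} = \emptyset\ \forall t\notin G\}$, so the accuracy hypothesis gives $\Pr[\cM(D_{G,g}) \in O_{G,g}] \geq \tfrac12$. These events are pairwise disjoint: if $G=G'$ and $g\ne g'$ they demand different singletons at some $t\in G$; if $G\ne G'$, picking $t^* \in G\setminus G'$, $O_{G,g}$ forces $R_{t^*}\cap\cH_{\mathrm{str}} = \{g(t^*)\}\ne\emptyset$ while $O_{G',g'}$ forces $R_{t^*}\cap\cH_{\mathrm{str}} = \emptyset$.

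Now group privacy for pure DP applied to $D_0$ and $D_{G,g}$ (at add-remove distance $\tau$) gives $\Pr[\cM(D_0)\in O_{G,g}] \geq e^{-\tau\eps}\Pr[\cM(D_{G,g})\in O_{G,g}] \geq \tfrac12 e^{-\tau\eps}$. Summing over all $N := \binom{T}{k}(|\cH|-3)^k \geq (|\cH|T/(3k))^k$ pairs $(G,g)$ and using disjointness yields $1 \geq \tfrac{N}{2}e^{-\tau\eps}$, i.e. $\tau \geq \tfrac1\eps\big(k\ln(|\cH|T/(3k)) - \ln 2\big)$. Since $\eps < 1$ and $\tau \leq 2\tau^*(k)+1 = 0.002\,k\log(|\cH|T/k)+1$, this would force $0.998\,k\log(|\cH|T/k) \leq k\ln 3 + \ln 2 + 1$, which fails for $k$ sufficiently large because $|\cH|T/k \geq 5$; contradiction, completing the proof.

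I expect the main difficulty to be the construction rather than the bookkeeping: one must simultaneously (i) keep each user in at most $k$ heavy buckets with the sharp bound, which forces the ``$k$ times'' to be exactly the support $G$; (ii) engineer counts so the correct output is essentially forced — hence the design where $g(t)$ has count exactly $\tau$ and the leftover mass is spread over a few junk buckets staying below $\tau^*(k)$, which is where $|\cH|\geq 5$ is used; and (iii) keep the add-remove distance to the common center $D_0$ equal to $\tau$ rather than $\Theta(k\tau)$, which is precisely what makes the packing size $\binom{T}{k}|\cH_{\mathrm{str}}|^k$ translate into a $\tfrac{k\log(|\cH|T/k)}{\eps}$ lower bound instead of merely $\tfrac{k\log|\cH|}{\eps}$. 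Checking that the constant $0.001$ in $\tau^*(k)$ closes the final inequality is the only remaining routine step.
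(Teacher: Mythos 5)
Your proof is correct and takes essentially the same approach as the paper: a packing argument via pure-DP group privacy, constructing datasets of $\Theta(\tau^*(k))$ users around a common center $D_0 = \emptyset$, with $k$ ``active'' signal times (indexed here by $(G,g)$ rather than by Hamming-weight-bounded vectors as in the paper) and the remaining mass spread over three junk buckets kept below $\tau^*(k)$ so that \Cref{assum:shifting-hh} holds sharply. The two write-ups differ only in bookkeeping of the packing count (you use $\binom{T}{k}(|\cH|-3)^k$; the paper lower-bounds its count by restricting to weight $\lceil 0.1k\rceil$ vectors), and both close the final inequality with the given constant in $\tau^*(k)$.
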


\begin{proof}
Assume w.l.o.g. that $\cH = [r + 3]$ for $r \in \N$.
Suppose for the sake of contradiction that there exists an $\eps$-DP algorithm $\cM$ that achieve $2\tau^*(k)$ error w.p. 0.5 under \Cref{assum:shifting-hh} for $\tau^*(k) = 0.001k \log\left(\frac{|\cH| T}{k}\right)$. Let $\cX'$ denote the set of all vectors in $\{0, \dots, r\}^T$ whose Hamming weight is at most $k$. For each $\bx = (x_1, \dots, x_T) \in \cX'$, let $D_{\bx}$ denote dataset where there are $n = 2\tau^*(k) + 1$ users and, for each $i \in [n]$, user $i$'s input $\bx_i = (x_{i, 1}, \dots, x_{i, T})$ is defined as follows:
\begin{align*}
x_{i, t} =
\begin{cases}
x_t &\text{ if } x_t \ne 0 \\
r + 1 + \lceil 3(i - 1)/n \rceil &\text{ otherwise.}
\end{cases}
\end{align*}
for all $t \in [T]$.
Notice here that $r + 1, r+ 2, r+3$ are never $\tau^*(k)$-heavy hitters. Thus, since the Hamming norm of $\bx$ is at most $k$, \Cref{assum:shifting-hh} is satisfied.

Now, let $O_{\bx}$ be the set of outcomes where $x_t$ is reported at time $t$ for all $t \in [T]$ such that $x_t \ne 0$ and no element in $[r] \setminus \{x_t\}$ is reported. 
From the utility guarantee of $\cM$, we have $\Pr[\cM(D_\bx) \in O_{\bx}] \geq 1/2$. From $\eps$-DP, we have $\Pr[\cM(\emptyset) \in O_\bx] \geq e^{-\eps \cdot n} / 2$. Meanwhile, since $O_{\bx}$ are disjoint for all $\bx \in \cX'$, we have
\begin{align*}
1 &\geq \sum_{\bx \in \cX'} \Pr[\cM(\emptyset) \in O_\bx] \\
&\geq |\cX'| \cdot e^{-\eps \cdot n} / 2 \\
&\geq r^{\lceil 0.1k \rceil} \binom{T}{\lceil 0.1k \rceil} \frac{e^{-\eps \cdot n}}{2} \\
&\geq \left(\frac{rT}{k}\right)^{0.01k} \frac{e^{-\eps \cdot n}}{2} \\
&> 1,
\end{align*}
where the last inequality follows from our choice of $n = 2\tau^*(k) + 1$. This completes our proof.
\end{proof}

\section{Counterexample for Pure-DP without Subsampling}
\label{app:counterexample}

Recall that, if we apply the (basic) composition theorem to the $L$ calls of exponential mechanisms in $\rem_{\eps, \cA}$ (\Cref{alg:repeated-em}), we get that the algorithm is $2L\eps$-DP. (Note that this is \emph{two-sided} DP.) The lemma below shows that we cannot hope to do much better than this bound. This lemma also gives a justification to our subsampling framework since we can achieve $\eps$-DP with subsampling (\Cref{thm:subsample-dp}). 

\begin{lemma}
There exists $\cA$ that satisfies \Cref{assum:monotone} and \Cref{assum:marginal-sensitivity} but that $\rem_{\eps, \cA}$ is not $\eps'$-DP for any $\eps' < L\eps$. 
\end{lemma}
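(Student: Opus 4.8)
The plan is to exhibit a single algorithm $\cA$ --- a deliberately adversarial instantiation of the repeated exponential mechanism --- for which the $L$-fold composition is essentially tight. Take $\Delta = 1$ (so each round runs $\EM_{\eps}$), let the domain be $\cX = \N \cup \{x\}$ for a distinguished symbol $x$, and allow countably infinite candidate sets. Fix a ``spine'' of candidates $o_1, o_2, \dots$; when the previous outputs are exactly $o_1, \dots, o_{i-1}$, round $i$ has candidate set $\cC_i = \{o_i\} \cup \{d_{i,1}, d_{i,2}, \dots\}$ and scoring function
\[
q_i(o_i; D) = 0, \qquad q_i(d_{i,l}; D) = -\max\{0,\, l - |D \cap \N|\} + \ind[x \in D].
\]
Off the spine --- as soon as some output differs from the corresponding $o_{i'}$ --- the algorithm switches to trivial rounds with $q_i \equiv 0$, so that no further privacy budget can be incurred.

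I would first verify the two assumptions. Monotonicity (\Cref{assum:monotone}) is clear: $q_i(o_i;\cdot)$ is constant, and enlarging $D$ only increases $|D\cap\N|$ and can only flip $\ind[x\in D]$ from $0$ to $1$, so each $q_i(d_{i,l};\cdot)$ is nondecreasing. For bounded realized sensitivity (\Cref{assum:marginal-sensitivity}) with $\Delta = 1$, fix neighbors $D \subseteq D' = D\cup\{y\}$ and an output $(c^*_1,\dots,c^*_L)$, and let $i^\star$ be the first round with $c^*_{i^\star}\ne o_{i^\star}$ (or $i^\star = L+1$ if there is none). Rounds before $i^\star$ contribute $0$ since $q_i(o_i;\cdot)=0$, rounds after $i^\star$ contribute $0$ since the algorithm is off the spine, and at round $i^\star$ the term $|q_{i^\star}(c^*_{i^\star};D)-q_{i^\star}(c^*_{i^\star};D')|$ is at most $1$ because inserting one element $y$ changes $|D\cap\N|$ by at most $1$ or flips $\ind[x\in D]$ by $1$, but never both. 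Thus the sum is at most $1$. It is also worth checking that $\sum_{c\in\cC_i} e^{\eps q_i(c;D)} = 1 + e^{\eps\ind[x\in D]}\sum_{l\ge 1} e^{-\eps\max\{0,l-|D\cap\N|\}} < \infty$, so the exponential mechanism is well defined on every dataset.

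I would then compute the privacy loss on the output $o = (o_1,\dots,o_L)$ and the neighbor pair $D_j \subseteq D'_j := D_j\cup\{x\}$, where $D_j\subseteq\N$ has size $j$. On $D_j$ we have $\ind[x\in D_j]=0$ and $|D_j\cap\N| = j$, so $\sum_{l\ge 1} e^{-\eps\max\{0,l-j\}} = j + Z$ with $Z := \sum_{m\ge 1} e^{-\eps m} = (e^\eps-1)^{-1}$, hence $\Pr[\EM_\eps(\cC_i,q_i;D_j)=o_i] = (1+j+Z)^{-1}$. On $D'_j$ the extra $\ind[x\in D'_j]=1$ multiplies every decoy weight by $e^\eps$, so $\Pr[\EM_\eps(\cC_i,q_i;D'_j)=o_i] = (1+e^\eps(j+Z))^{-1}$. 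Multiplying over the $L$ spine rounds,
\[
\frac{\Pr[\rem_{\eps,\cA}(D_j)=o]}{\Pr[\rem_{\eps,\cA}(D'_j)=o]} \;=\; \left(\frac{1+e^{\eps}(j+Z)}{1+j+Z}\right)^{L},
\]
which tends to $e^{L\eps}$ as $j\to\infty$. Hence for any $\eps' < L\eps$, taking $j$ large enough makes this ratio exceed $e^{\eps'}$, witnessing (with $S = \{o\}$) that $\rem_{\eps,\cA}$ is not $\eps'$-DP; as $\eps'<L\eps$ was arbitrary, the lemma follows.

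The subtle point --- and the part I expect to require the most care --- is making \Cref{assum:marginal-sensitivity} hold with the intended $\Delta$ while still driving each round's exponential-mechanism ratio all the way up to $e^\eps$. This forces the two design choices above: the spine/off-spine tree structure ensures that along any realized output at most one round (the one that leaves the spine) ever incurs nonzero realized sensitivity, so the budget $\Delta$ is never ``spent'' twice; and the additive indicator $\ind[x\in D]$ paired with the unbounded family of decoys ensures that inserting $x$ multiplies the entire non-$o_i$ mass by $e^\eps$ rather than shifting only a bounded amount of it, so that letting $j = |D_j\cap\N|\to\infty$ (which makes the decoy mass $j+Z$ swamp the $o_i$ weight) pushes each round's loss to $e^\eps$ and the overall loss to $e^{L\eps}$.
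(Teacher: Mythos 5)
Your construction is correct, and at its core it runs on the same counterexample mechanism as the paper's own proof: the reported candidates score $0$ on both datasets while a large mass of decoy candidates has its weight multiplied by $e^{\eps}$ once the distinguished element $x$ is present, so each of the $L$ rounds loses a factor approaching $e^{\eps}$ in the remove direction and the product approaches $e^{L\eps}$. The packaging differs, though. The paper does not design a new $\cA$: it reuses the submodular instantiation $\smcalg$ from \Cref{sec:smc} with $f_x(S)=\min\{1,|S\cap[m-L]|\}$, compares $D'=\emptyset$ against $D=\{x\}$, takes the output to be the $L$ zero-gain elements $(m-L+1,\dots,m)$, and chooses a finite universe size $m=m(\eps')$ so that each round's ratio exceeds $e^{\eps'/L}$; Assumptions~\ref{assum:monotone} and~\ref{assum:marginal-sensitivity} then come for free from decomposability (the realized marginal gains telescope to $f_x(\cdot)\le 1$, so $\Delta=1$), and the lemma is witnessed by the very algorithm used in the applications. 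Your bespoke spine/off-spine scorer with the indicator term and the family $D_j$ versus $D_j\cup\{x\}$ with $j\to\infty$ buys a slightly stronger quantifier order---one fixed $\cA$ fails $\eps'$-DP for every $\eps'<L\eps$ simultaneously, whereas the paper's $\cA$ depends on $\eps'$---at the price of a countably infinite candidate set, which sits slightly outside the paper's exponential mechanism as stated (a finite $\cC$ is implicit, e.g.\ in the $\log|\cC|$ utility bound); you correctly verify that the normalizer is finite, and truncating your decoy family would recover the per-$\eps'$ statement with finite candidate sets. Your sensitivity check at the deviation round (a single insertion changes $|D\cap\N|$ or the indicator, never both, and later rounds are frozen) is exactly the step that needed care, and it is right.
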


\begin{proof}
In fact, we will use the instantiation for 1-decomposable submodular function as in \Cref{sec:smc}. Let $m = \left\lceil 1 
+ \frac{(e^\eps - 1)L}{e^{\eps} - e^{\eps'/L}}\right\rceil, D' = \emptyset$, and $D = \{x\}$, where $f_x$ is defined as $f_x(S) = \min\{1, |S \cap [m - L]|\}$ for $S \subseteq [m]$. Let $\cM$ denote the mechanism $\smcalg_{\eps, F_D}$ from \Cref{sec:smc} (which is an instantiation of $\rem_{\eps, \cA}$). Now, consider the output $o = (m - L + 1, \dots, m)$. Then, we have
\begin{align*}
\frac{\Pr[\cM(D') = o]}{\Pr[\cM(D) = o]}
&= \prod_{i \in [L]} \frac{\Pr[\EM_{\frac{\eps_0}{\Delta}}(\cC_i, q_i; D') = o_i]}{\Pr[\EM_{\frac{\eps_0}{\Delta}}(\cC_i, q_i; D) = o_i]} \\
&= \prod_{i \in [L]} \frac{\frac{1}{m-i+1}}{\frac{1}{e^\eps(m - L) + L - i + 1}} \\
&< \prod_{i \in [L]} e^{\eps'/L} \\
&= e^{\eps'},
\end{align*}
where the inequality is due to our choice of parameter $m$. Thus, the algorithm is not $\eps'$-DP.
\end{proof}

\section{Monotone Submodular Maximization over Matroid Constraint}
\label{app:submod-matroid}

As mentioned earlier, we will use the private version of the continuous greedy algorithm due to \citet{CNZ21}. 
To describe the algorithm, we need several additional definitions.

\begin{definition}[Multilinear Extension]
For a given set function $F: 2^{[m]} \to \R$, its multilinear extension $F^{\mult}: [0, 1]^m \to \R$ defined by
$$F^{\mult}(\by) = \sum_{S \subseteq [m]} F(S) \prod_{i \in S} y_i \prod_{i \in ([m] \setminus S)} (1 - y_i).$$
\end{definition}

For a given matroid $\cM = ([m], \cI)$, we write $\cP(\cM)$ to denote the \emph{matroid polytope},  which is the convex hull of all characteristics of all independent sets in $\cM$. The continuous greedy algorithm finds a vector in $\cP(\cM)$ with a large multilinear extension value. Since computing the multilinear extension directly is inefficient, it is only computed approximately. In~\cite{CNZ21}, this was done by randomly sampling $z^1, \dots, z^s$ uniformly and independently from $[0, 1]^m$, and then computing $$G^{\bz}(y) := \frac{1}{s} \sum_{j \in [s]} F(\{u \in [m] \mid z^j_u < y_u\}),$$ and using it as a proxy for the multilinear extension. The full algorithm is presented in \Cref{alg:contgreedy}.

\begin{algorithm}
\caption{$\contgreedy_{\eps_0, \eta, s}$~\cite{CNZ21}}
\label{alg:contgreedy}
\begin{algorithmic}
\PARAMETERS $\eps_0 > 0$, step size $\eta$, number of draws $s$
\REQUIRE Dataset $D \in \cX^*$
\STATE $T \gets \lceil 1/\eta \rceil$
\STATE $k \gets \rank(\cM)$
\STATE $z^1, \dots, z^s \sim [0, 1]^m$
\FOR{$t = 1, \dots, T$}
\STATE $B^{t, 0} \gets \emptyset$
\FOR{$i = 1, \dots, k$}
\STATE $\cC^{t, i} \gets \{u \in [m] \setminus B^{t, i - 1} \mid B^{t, i - 1} \cup \{u\} \in \cI\}$
\IF{$\cC^{t, i} = \emptyset$}
\STATE  $y^{t, i} \gets y^{t, i - 1}$
\ELSE
\STATE $q^{t, i}(u; D) \gets G^{\bz}_D(y^{t,i-1} + \eta \cdot \ind_u) - G^{\bz}_D(y^{t, i - 1})$ for all $u \in \cC^{t, i}$
\STATE $u^{t, i} \gets \EM_{\eps_0}(\cC^{t, i}, q^{t, i}; D)$
\STATE $y^{t, i} \gets y^{t, i - 1} + \eta \cdot \ind_{u^{t, i}}$
\STATE $B^{t, i} \gets B^{t, i - 1} \cup \{u^{t, i}\}$
\ENDIF
\ENDFOR
\STATE $y^{t+1, 0} \gets y^{t, k}$
\ENDFOR
\RETURN $y^{T, k}$
\end{algorithmic}
\end{algorithm}

Let $g^{\bz}_x := \frac{1}{s} \sum_{j \in [s]} f_x(\{u \in [m] \mid z^j_u < y_u\})$.
Notice that $G^{\bz}_D = \sum_{x \in D} g^{\bz}_x$. The utility guarantee of \Cref{alg:contgreedy} was shown in \cite{CNZ21}; we state this below.

\begin{theorem}[\citealt{CNZ21}] \label{thm:cont-greedy-util-cnn}
For any $\beta > 0$, let $s = 6 k^2 T^4 \cdot \log(m / \beta)$, then with probability $1 - \beta$, the output $y = y^{T, k}$ satisfies
\begin{align*}
F_D^{\mult}(y) \geq \left(1 - \frac{1}{e} - \eta\right) \max_{S \in \cI} \{ F_D(S) \} - O\left(\frac{k}{\eta \eps} \log\frac{m}{\eta \beta}\right).
\end{align*}
\end{theorem}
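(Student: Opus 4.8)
The statement concerns only the utility of \Cref{alg:contgreedy}, so the plan is to run the classical continuous-greedy-over-a-matroid analysis (\citealt{CCPV11,Vondrak08}) while tracking the two perturbations present in the private algorithm: the exponential-mechanism noise in each inner step, and the error from using the sampled surrogate $G^{\bz}_D$ in place of the true multilinear extension $F^{\mult}_D$. Throughout, fix $S^\star \in \argmax_{S\in\cI} F_D(S)$, write $\opt := F_D(S^\star)$, and set $v_t := F^{\mult}_D(y^{t,k}) = F^{\mult}_D(y^{t+1,0})$ with $v_0 = F^{\mult}_D(\bzero) = 0$ (WLOG $f_x(\emptyset)=0$). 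Two exact facts are convenient: (a) $G^{\bz}_D$ is monotone coordinatewise since each $f_x$ is monotone; and (b) at any hypercube vertex $G^{\bz}_D(\ind_S) = F_D(S)$ exactly, because $z^j_v < 1$ almost surely, so $\max_{S\in\cI} G^{\bz}_D(\ind_S) = \opt$ with no error.

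First I would bound the two perturbations. \emph{Exponential mechanism.} By \Cref{thm:em}, each call $\EM_{\eps_0}(\cC^{t,i}, q^{t,i}; D)$ returns $u^{t,i}$ with $q^{t,i}(u^{t,i}; D) \ge \max_{u\in\cC^{t,i}} q^{t,i}(u;D) - \gamma_{\mathrm{EM}}$, and after a union bound over all $\le Tk = O(k/\eta)$ inner calls with per-call failure probability $\beta/(2Tk)$, one may take $\gamma_{\mathrm{EM}} = O\big(\tfrac{1}{\eps_0}\log\tfrac{m}{\eta\beta}\big)$ simultaneously w.p.\ $\ge 1-\beta/2$. \emph{Sampling.} For a fixed $\by$, $\tfrac1n G^{\bz}_D(\by)$ is an average of $s$ i.i.d.\ $[0,1]$-valued terms with mean $\tfrac1n F^{\mult}_D(\by)$; applying \Cref{thm:chernoff} with $\alpha = \Theta(\eta)$ and a union bound over the set of fractional points reachable by \Cref{alg:contgreedy} (each has at most $Tk+1$ nonzero coordinates, all integer multiples of $\eta$, so this set has log-cardinality $O(Tk\log(mT)) = O(\tfrac{k}{\eta}\log\tfrac m\eta)$), the choice $s = 6k^2T^4\log(m/\beta)$ forces $G^{\bz}_D(\by)$ to approximate $F^{\mult}_D(\by)$ — to within a $(1\pm O(\eta))$ factor plus a negligible additive term — for all such $\by$, w.p.\ $\ge 1-\beta/2$. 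In particular each probed increment $q^{t,i}(u;D) = G^{\bz}_D(y^{t,i-1}+\eta\ind_u) - G^{\bz}_D(y^{t,i-1})$ agrees with the true increment $F^{\mult}_D(y^{t,i-1}+\eta\ind_u) - F^{\mult}_D(y^{t,i-1})$ up to a slack $\gamma_{\mathrm{samp}}$ that, summed over all $Tk$ steps and rescaled by the unrolling factor $1/\eta$ below, is absorbed into the $O(\eta)\cdot\opt$ term.

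Next I would plug these into the continuous-greedy recurrence. Within outer round $t$, the matroid exchange property gives an ordering $S^\star = \{o^t_1,\dots,o^t_k\}$ with $B^{t,i-1}\cup\{o^t_i\} \in \cI$, i.e.\ $o^t_i \in \cC^{t,i}$, for every $i$. Using coordinatewise concavity/monotonicity of the multilinear extension, $\sum_{i=1}^k\big(F^{\mult}_D(y^{t,i-1}+\eta\ind_{o^t_i}) - F^{\mult}_D(y^{t,i-1})\big) \ge \eta\sum_{u\in S^\star}\partial_u F^{\mult}_D(y^{t,k}) \ge \eta\big(F^{\mult}_D(y^{t,k}\vee\ind_{S^\star}) - v_t\big) \ge \eta(\opt - v_t)$. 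Since the mechanism picks at each step a candidate whose $G^{\bz}_D$-increment is within $\gamma_{\mathrm{EM}}$ of the maximum $G^{\bz}_D$-increment, and $G^{\bz}_D$-increments track true increments within $\gamma_{\mathrm{samp}}$, we obtain $v_t - v_{t-1} = \sum_{i=1}^k\big(F^{\mult}_D(y^{t,i}) - F^{\mult}_D(y^{t,i-1})\big) \ge \eta(\opt - v_t) - k(\gamma_{\mathrm{EM}} + 2\gamma_{\mathrm{samp}})$. Unrolling over $t = 1,\dots,T = \lceil 1/\eta\rceil$ with $v_0 = 0$ gives $v_T \ge \big(1 - (1+\eta)^{-T}\big)\opt - \tfrac{k(\gamma_{\mathrm{EM}} + 2\gamma_{\mathrm{samp}})}{\eta}$, and since $(1+\eta)^{-\lceil 1/\eta\rceil} \le e^{-1}(1+\eta) \le \tfrac1e + \eta$ this is $\ge \big(1-\tfrac1e-\eta\big)\opt - O\big(\tfrac{k}{\eta\eps_0}\log\tfrac{m}{\eta\beta}\big)$ (absorbing $\gamma_{\mathrm{samp}}$ via the calibration of $s$, up to rescaling $\eta$ by a constant). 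Identifying the algorithm's $\eps_0$ with the $\eps$ of the statement yields the claim.

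The step I expect to be the real obstacle is the sampling bound: one must argue that $s = \tO(k^2/\eta^4)$ — crucially independent of $|D|$ — suffices, which relies on each increment already living at the $\eta$-scale so that the sampling error can be charged to the $O(\eta)$-multiplicative loss rather than the additive privacy term, and on controlling the union bound even though the continuous-greedy trajectory is chosen adaptively through the exponential-mechanism randomness (handled by union-bounding over the entire grid of feasible fractional points, or by re-sampling $\bz$ per round as in \Cref{alg:set-cov}). A secondary subtlety is that $G^{\bz}_D$ is a sum of monotone threshold functions, not itself a multilinear extension, so the continuous-greedy argument cannot be run directly on $G^{\bz}_D$; it must be run on $F^{\mult}_D$, with $G^{\bz}_D$ entering only through the near-optimality of the per-step choices.
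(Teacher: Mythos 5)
First, a framing point: the paper does not prove this statement at all. \Cref{thm:cont-greedy-util-cnn} is imported verbatim from \citet{CNZ21}; the paper only verifies the monotonicity and realized-sensitivity conditions (\Cref{obs:marginal-sensitivity-contgreedy}) needed for its own add-DP claim, and uses the utility bound as a black box. So there is no internal proof to compare against, and what you are attempting is a reconstruction of the cited result. Your skeleton is the natural one and presumably the same as in \citet{CNZ21}: matroid-exchange recurrence for $F^{\mult}_D$ along the continuous-greedy trajectory, per-step exponential-mechanism utility (\Cref{thm:em}) with a union bound over the $O(k/\eta)$ inner calls, and concentration of $G^{\bz}_D$ around $F^{\mult}_D$. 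The recurrence and the EM half are fine and do yield the $O\bigl(\tfrac{k}{\eta\eps}\log\tfrac{m}{\eta\beta}\bigr)$ term after unrolling.

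The genuine gap is exactly the step you flag as "the real obstacle": nothing in your sketch actually establishes that $s=6k^2T^4\log(m/\beta)$, independent of $|D|=n$, suffices. The route you indicate does not work as stated. A $(1\pm O(\eta))$ multiplicative approximation of the \emph{values} $G^{\bz}_D(y)\approx F^{\mult}_D(y)$ does not control the \emph{increments} $G^{\bz}_D(y+\eta\ind_u)-G^{\bz}_D(y)$, which are differences of nearly equal quantities that can be of order $n$. And applying \Cref{thm:chernoff} as you propose requires normalizing by $n$ (the per-sample terms $F_D(\cdot)$ lie in $[0,n]$, not $[0,1]$), so after un-normalizing the additive tail term is of order $n\zeta/s$ per probed point; with the union bound over the full $\eta$-grid of reachable points you need $\zeta\gtrsim\alpha^{-1}Tk\log(mT)$, and plugging in $\alpha=\Theta(\eta)$, $s=\Theta(k^2\log(m/\beta)/\eta^4)$ and propagating through the recurrence gives a sampling contribution of order $\eta n$ rather than $\eta\cdot\opt$ plus an $n$-free additive term. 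These coincide only when $\opt=\Omega(n)$, which is not guaranteed, so the claimed error bound (whose additive part must not depend on $n$) does not follow from this accounting. Closing the gap is the actual content of the cited proof and needs a finer argument: e.g., concentration applied to the per-step increments themselves in relative terms (so that the large base value cancels between the two candidates compared at the same point $y^{t,i-1}$), together with a treatment of the adaptivity of the trajectory through the EM randomness that does not pay the full grid union bound at the additive scale you assumed. As written, the key quantitative claim about $\gamma_{\mathrm{samp}}$ is asserted rather than proved, so the proposal is an outline of the right strategy but not a proof of the theorem.
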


It is simple to see that $q^{t, i}$ is monotone.
Moreover, as observed in \cite{CNZ21}, the realized marginal sensitivity of the score $G_D^{\bz}$ is bounded:
\begin{observation}[\citealt{CNZ21}] \label{obs:marginal-sensitivity-contgreedy}
For any $z^1, \dots, z^s \in [0, 1]^m$, any neighboring datasets $D'$ and $D = D' \cup \{x\}$ and any selection of $(u^{t, i})_{t \in [T], i \in [k]}$, we have
\begin{align*}
&\sum_{t \in [T]} \sum_{i \in [k]} \left|q^{t, i}(u^{t, i}; D) - q^{t, i}(u^{t, i}; D')\right| \\
&= \sum_{t \in [T]} \sum_{i \in [k]} \left[\left(G^{\bz}_D(y^{t, i}) - G^{\bz}_{D}(y^{t, i - 1})\right) - \left(G^{\bz}_{D'}(y^{t, i}) - G^{\bz}_{D'}(y^{t, i - 1})\right)\right] \\
&= \sum_{t \in [T]} \sum_{i \in [k]} \left(g^{\bz}_x(y^{t, i}) - g^{\bz}_x(y^{t, i-1})\right) \\
&= g^{\bz}_x(y^{T, k}) - g^{\bz}_x(y^{1, 0}) \leq 1.
\end{align*}
\end{observation}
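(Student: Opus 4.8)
The plan is a straightforward telescoping argument that justifies each equality in the displayed chain and then the final inequality. First I would unfold the definition of the score at the selected element: since $u^{t,i}$ is precisely the element added at step $(t,i)$, we have $y^{t,i} = y^{t,i-1} + \eta \cdot \ind_{u^{t,i}}$, and hence $q^{t,i}(u^{t,i}; D) = G^{\bz}_D(y^{t,i}) - G^{\bz}_D(y^{t,i-1})$, and likewise with $D'$ in place of $D$. (In the rounds where $\cC^{t,i} = \emptyset$ no element is selected and $y^{t,i} = y^{t,i-1}$; these steps contribute $0$ to every sum, so we may freely include them as zero terms, which is what makes the index set $\{(t,i) : t \in [T], i \in [k]\}$ legitimate.) Subtracting the two expressions and using $G^{\bz}_D = \sum_{x' \in D} g^{\bz}_{x'} = G^{\bz}_{D'} + g^{\bz}_x$, valid since $D = D' \cup \{x\}$, the $D'$-contributions cancel termwise and the $(t,i)$-th summand reduces to $g^{\bz}_x(y^{t,i}) - g^{\bz}_x(y^{t,i-1})$; this is the first equality.

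Next I would drop the absolute values. The point to check is that the fractional proxy $g^{\bz}_x$ is monotone in its argument: if $y \le y'$ coordinatewise then $\{u \in [m] : z^j_u < y_u\} \subseteq \{u \in [m] : z^j_u < y'_u\}$ for each $j \in [s]$, so monotonicity of $f_x$ gives $g^{\bz}_x(y) \le g^{\bz}_x(y')$. Since each continuous-greedy step only adds a nonnegative multiple of a coordinate indicator, $y^{t,i-1} \le y^{t,i}$, so every summand $g^{\bz}_x(y^{t,i}) - g^{\bz}_x(y^{t,i-1})$ is nonnegative and the absolute values are vacuous; this is the second equality. Then I would telescope: within a fixed round $t$, $\sum_{i \in [k]} \big(g^{\bz}_x(y^{t,i}) - g^{\bz}_x(y^{t,i-1})\big) = g^{\bz}_x(y^{t,k}) - g^{\bz}_x(y^{t,0})$, and summing over $t$ while using the update $y^{t+1,0} = y^{t,k}$ collapses the outer sum to $g^{\bz}_x(y^{T,k}) - g^{\bz}_x(y^{1,0})$, the third equality. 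Finally, since $f_x$ takes values in $[0,1]$, so does $g^{\bz}_x$ (being an average of values of $f_x$), whence $g^{\bz}_x(y^{T,k}) - g^{\bz}_x(y^{1,0}) \le 1 - 0 = 1$.

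There is no genuinely hard step here; the only places that need a word of care are (i) that the monotonicity being invoked is that of the \emph{fractional} proxy $g^{\bz}_x$ rather than of $f_x$ directly, which follows because the level sets $\{u : z^j_u < y_u\}$ grow monotonically with $y$ and $f_x$ is monotone, and (ii) that the steps with $\cC^{t,i} = \emptyset$ are vacuous, so that the double sum over all $(t,i)$ simultaneously equals the sum over the actually-selected steps and telescopes cleanly through the round boundaries.
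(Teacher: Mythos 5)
Your proposal is correct and matches the paper's treatment: the paper (following \citealt{CNZ21}) states exactly this chain — unfold $q^{t,i}(u^{t,i};\cdot)$ as increments of $G^{\bz}$, use $G^{\bz}_D = G^{\bz}_{D'} + g^{\bz}_x$, drop absolute values via monotonicity of $g^{\bz}_x$, telescope across steps and rounds, and bound by $1$ since $g^{\bz}_x$ takes values in $[0,1]$ — and your write-up simply supplies the same justifications (including the vacuous $\cC^{t,i}=\emptyset$ steps) in full.
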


Combining these observations together with \Cref{thm:rem-add-dp} immediately yields the following:

\begin{lemma} \label{lem:fractional-submod}
For any $0 < \eps, \eta, \beta < 1$, there is an $\eps$-add-DP algorithm that for matroid submodular maximization that with probability $1 - \beta$ outputs $y$ such that
\begin{align*}
F_D^{\mult}(y) \geq \left(1 - \frac{1}{e} - \eta\right) \max_{S \in \cI} \{ F_D(S) \} - O\left(\frac{k}{\eta \eps} \log\frac{m}{\eta \beta}\right).
\end{align*}
\end{lemma}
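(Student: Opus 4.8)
The plan is to recognize that $\contgreedy_{\eps_0,\eta,s}$ (\Cref{alg:contgreedy}) is, up to trivial bookkeeping, an instantiation of $\rem_{\eps_0,\cA}$ (\Cref{alg:repeated-em}) with realized sensitivity $\Delta = 1$, and then to invoke \Cref{thm:rem-add-dp} for privacy and \Cref{thm:cont-greedy-util-cnn} for utility. Concretely, I would set $\eps_0 = \eps$, index the rounds of $\rem$ by the pairs $(t,i) \in [T] \times [k]$ in lexicographic order (with $T = \lceil 1/\eta \rceil$), let $\cA$ produce the candidate set $\cC^{t,i}$ together with the scoring function $q^{t,i}(u;D) = G^{\bz}_D(y^{t,i-1} + \eta\ind_u) - G^{\bz}_D(y^{t,i-1})$, and treat the draw of $z^1,\dots,z^s$ as internal randomness of $\cA$: it is sampled once, independently of $D$, so we may condition on it throughout. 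Since $\Delta = 1$, the call $\EM_{\eps_0}$ appearing in \Cref{alg:contgreedy} is literally the call $\EM_{\eps_0/\Delta}$ of \Cref{alg:repeated-em}. Rounds with $\cC^{t,i} = \emptyset$ touch neither $D$ nor any randomness and merely copy $y^{t,i} \gets y^{t,i-1}$; since whether a round is skipped depends only on the history (the $B^{t,i-1}$'s), not on $D$, we may either fold such rounds in as degenerate exponential-mechanism calls on a singleton candidate set with constant score, or simply condition on the ($D$-independent) skip pattern and work with the resulting fixed-length run.

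Next I would check the two hypotheses of \Cref{thm:rem-add-dp}. For \Cref{assum:monotone}: writing $G^{\bz}_D = \sum_{x \in D} g^{\bz}_x$ and using that each $f_x$ is monotone submodular, hence each $g^{\bz}_x$ is monotone in its argument, adding an element $x$ to $D$ changes $q^{t,i}(u;D)$ by exactly $g^{\bz}_x(y^{t,i-1}+\eta\ind_u) - g^{\bz}_x(y^{t,i-1}) \ge 0$, so $q^{t,i}$ is monotone in $D$. For \Cref{assum:marginal-sensitivity}: this is precisely \Cref{obs:marginal-sensitivity-contgreedy}, whose telescoping collapses $\sum_{t,i}(g^{\bz}_x(y^{t,i}) - g^{\bz}_x(y^{t,i-1}))$ along the trajectory to $g^{\bz}_x(y^{T,k}) - g^{\bz}_x(y^{1,0}) \in [0,1]$, giving $\Delta = 1$. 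Thus \Cref{thm:rem-add-dp} implies that, for every fixed $\bz$, the mechanism producing the tuple $(u^{t,i})_{t,i}$ is $\eps$-add-DP; averaging over the $D$-independent choice of $\bz$ and then post-processing the tuple into the returned vector $y^{T,k}$ both preserve $\eps$-add-DP.

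For utility I would simply instantiate \Cref{thm:cont-greedy-util-cnn} with the parameters already fixed in \Cref{alg:contgreedy}, namely $T = \lceil 1/\eta \rceil$ and $s = 6k^2 T^4 \log(m/\beta)$, which yields, with probability $1 - \beta$, $F^{\mult}_D(y) \ge \bigl(1 - \tfrac{1}{e} - \eta\bigr)\max_{S \in \cI} F_D(S) - O\!\left(\tfrac{k}{\eta\eps}\log\tfrac{m}{\eta\beta}\right)$, which is exactly the claimed bound.

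I do not expect a genuine obstacle: the lemma is essentially an assembly of components already in place (the add-DP template \Cref{thm:rem-add-dp}, the realized-sensitivity bound \Cref{obs:marginal-sensitivity-contgreedy}, and the \citet{CNZ21} utility bound \Cref{thm:cont-greedy-util-cnn}). The one place requiring care is the reduction of $\contgreedy$ to the $\rem$ format — confirming that the shared randomness $\bz$ and the empty-candidate rounds do not violate the hypotheses of \Cref{thm:rem-add-dp}, and that the $\Delta = 1$ normalization makes the two exponential-mechanism invocations coincide. This is routine but should be written out explicitly.
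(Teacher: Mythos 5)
Your proposal is correct and matches the paper's own (very terse) proof: the paper likewise obtains the lemma by noting that $q^{t,i}$ is monotone, invoking \Cref{obs:marginal-sensitivity-contgreedy} for realized sensitivity $\Delta=1$, applying \Cref{thm:rem-add-dp} to get $\eps$-add-DP of $\contgreedy$, and quoting \Cref{thm:cont-greedy-util-cnn} for utility. Your additional care about the shared randomness $\bz$ and the empty-candidate rounds just makes explicit details the paper leaves implicit.
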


\paragraph{Rounding.} To get from a fractional solution to an integral solution (i.e., a set), we recall the following rounding algorithm due to~\citet{CVZ10}. It should be noted that this rounding algorithm only depends on $y$ (and the matroid $\cM$) and does \emph{not} depend on the function $F$. 

\begin{definition}[\citealt{CVZ10}]
Let $\cM = ([m], \cI)$ be any matroid.
There exists a randomized rounding algorithm $\swapr$ that takes in $y \in \cP(\cM)$ and output a set $S \in \cI$ such that, for any submodular function $F$ and any $y \in [0, 1]^m$, we have
\begin{align*}
\E_{S \sim \swapr(\by)}[F(S)] \geq F^{\mult}(y).
\end{align*}
\end{definition}

We use a slightly different rounding procedure compared to \cite{CNZ21}: while they apply $\swapr$ once, we apply it multiple times and use the exponential mechanism to pick the best of them. This allows us to get high probability bound (compared to expected bound) on the approximation ratio and error. This is summarized below.

\begin{lemma} \label{lem:integral-submod}
For any $0 < \eps, \eta, \beta < 1$, there is an $\eps$-add-DP algorithm that for matroid submodular maximization that with probability $1 - \beta$ outputs $S^* \in \cI$ such that
\begin{align*}
F_D(S^*) \geq \left(1 - \frac{1}{e} - \eta\right) \max_{S \in \cI} \{ F_D(S) \} - O\left(\frac{k}{\eta \eps} \log\frac{m}{\eta \beta}\right).
\end{align*}
\end{lemma}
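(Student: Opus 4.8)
The plan is to combine the fractional solution of \Cref{lem:fractional-submod} with the rounding procedure $\swapr$; since $\swapr$ only guarantees its bound \emph{in expectation}, I would boost to a high-probability statement by running it $N$ times and selecting the best rounded set with an exponential mechanism. Concretely: let $\cA_1$ be the $(\eps/2)$-add-DP algorithm of \Cref{lem:fractional-submod} run with parameters $\eta' = \eta/3$, $\beta' = \beta/3$, producing $y \in \cP(\cM)$ with $F_D^{\mult}(y) \geq (1 - 1/e - \eta/3)\,\opt - E$ with probability $1 - \beta/3$, where $\opt := \max_{S \in \cI} F_D(S)$ and $E = O\!\left(\tfrac{k}{\eta\eps}\log\tfrac{m}{\eta\beta}\right)$. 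Set $N := \lceil 3\ln(3/\beta)/\eta\rceil$, draw $S_1, \dots, S_N \sim \swapr(y)$ independently (this only post-processes $y$; it never touches $D$), and output $S^* := \EM_{\eps/2}(\{S_1, \dots, S_N\},\, F_D(\cdot);\, D)$. Since $\swapr$ outputs members of $\cI$, so does $S^*$.

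For privacy, the tuple $(y, S_1, \dots, S_N)$ is $(\eps/2)$-add-DP by post-processing of $\cA_1$. For any fixed candidate set, the exponential mechanism with score $q(S) = F_D(S)$ is $(\eps/2)$-add-DP: $q$ is monotone in $D$ and its add-sensitivity is $|f_x(S)| \leq 1$, so the argument in the proof of \Cref{thm:rem-add-dp} with $L = 1$ applies verbatim. Adaptive composition of add-DP then gives that the overall algorithm is $\eps$-add-DP.

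For utility, fix $y$, write $\mu := F_D^{\mult}(y)$, and let $S \sim \swapr(y)$. Since $S \in \cI$ we have $F_D(S) \leq \opt$, so $Y := \opt - F_D(S) \geq 0$ with $\E[Y] = \opt - \E[F_D(S)] \leq \opt - \mu$, using $\E[F_D(S)] \geq \mu$ from the $\swapr$ guarantee (which also gives $\mu \leq \opt$). By Markov's inequality, with $c := e\eta/3$, with probability at least $\tfrac{c}{1+c} = \Omega(\eta)$ we have $Y \leq (1+c)(\opt - \mu)$, i.e. $F_D(S) \geq (1+c)\mu - c\,\opt$. Substituting $\mu \geq (1 - 1/e - \eta/3)\opt - E$ and simplifying — the coefficient of $\opt$ becomes $(1 - 1/e - \eta/3) - c(1/e + \eta/3) \geq 1 - 1/e - \eta$ for $\eta \leq 1$, and $(1+c)E \leq 2E$ — yields $F_D(S) \geq (1 - 1/e - \eta)\opt - 2E$ with probability $\Omega(\eta)$. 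Hence over the $N = \Theta(\log(1/\beta)/\eta)$ independent draws, with probability $1 - \beta/3$ at least one $S_j$ satisfies this bound. Finally, by \Cref{thm:em}, with probability $1 - \beta/3$ the exponential mechanism returns $S^*$ with $F_D(S^*) \geq \max_j F_D(S_j) - O\!\left(\tfrac{\log(N/\beta)}{\eps}\right) = \max_j F_D(S_j) - O\!\left(\tfrac{1}{\eps}\log\tfrac{1}{\eta\beta}\right)$. A union bound over the three events and combining the inequalities gives $F_D(S^*) \geq (1 - 1/e - \eta)\,\opt - O\!\left(\tfrac{k}{\eta\eps}\log\tfrac{m}{\eta\beta}\right)$ with probability $1 - \beta$.

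The main obstacle is that $\swapr$ comes with only an expectation bound and no concentration, and an unconditional single-trial concentration is genuinely impossible when $\opt$ is small; the Markov-plus-amplification argument above is the workaround. The subtlety there is that preserving the $(1 - 1/e - \eta)$ approximation ratio forces the Markov slack $c$ to be $\Theta(\eta)$ rather than a constant, so the one-trial success probability — and hence the number of repetitions $N$ — scales like $1/\eta$; one must then check that the resulting $\log N$ loss in the exponential-mechanism step, together with the $\Theta(\eta)$ rescaling of the parameters of \Cref{lem:fractional-submod}, are all absorbed into the target error $O\!\left(\tfrac{k}{\eta\eps}\log\tfrac{m}{\eta\beta}\right)$.
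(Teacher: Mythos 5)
Your proposal is correct and matches the paper's proof essentially step for step: run the fractional (continuous greedy) algorithm of \Cref{lem:fractional-submod} with rescaled parameters, draw $\Theta(\log(1/\beta)/\eta)$ independent samples from $\swapr(y)$ as post-processing, and pick the best via an $(\eps/2)$ exponential mechanism, using Markov's inequality to get an $\Omega(\eta)$ per-trial success probability. The only differences are cosmetic (a multiplicative-slack Markov bound and an $\eta/3$ split instead of the paper's additive $\eta\cdot\opt/2$ slack and $\eta/2$ split), so no further comparison is needed.
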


\begin{proof}
The algorithm works as follows:
\begin{itemize}[nosep]
\item Run the algorithm from \Cref{lem:fractional-submod} with parameters $\eps/2, \eta/2, \beta/3$ to get $y \in \cP(\cM)$.
\item Run $\swapr(y)$ $h = \lceil 10\log(3/\beta) / \eta \rceil$ times to arrive at sets $S^*_1, \dots, S^*_h$.
\item Use $(\eps/2)$-DP exponential mechanism based on the score $q(S^*_i; D) := F_D(S^*_i)$ to select $S^*$ from $S^*_1, \dots, S^*_h$ to (approximately) maximize $F_D(S^*)$.
\end{itemize}
Since the second step is just a post-processing of the result from the first step, we can apply composition across the two add-DP mechanisms in the first and last steps to conclude that this is $\eps$-add-DP.

As for the utility, recall from \Cref{lem:fractional-submod} that with probability $1 - \beta/3$, we have
\begin{align} \label{eq:fractional-submod-util}
F_D^{\mult}(y) \geq \left(1 - \frac{1}{e} - \frac{\eta}{2}\right) \max_{S \in \cI} \{ F_D(S) \} - O\left(\frac{k}{\eta \eps} \log\frac{m}{\eta \beta}\right).
\end{align}
For fixed $y$, let $\theta := F^{\mult}_D(y)$ and $\opt := \max_{S \in \cI} F_D(S)$. Since $\E_{S \sim \swapr(y)}[F_D(S)] = F_D^{\mult}(y) = \theta$ and $F_D(S) \leq \opt$ for all $S \in \cI$, Markov inequality implies that 
\begin{align*}
\Pr_{S \sim \swapr(y)}[F_D(S) \geq \theta - \eta \cdot \opt / 2] &= 1 - \Pr_{S \sim \swapr(y)}[\opt - F_D(S) > \opt + \eta \cdot \opt / 2 - \theta] \\
&\geq 1 - \frac{\opt - \theta}{\opt + \eta \cdot \opt / 2 - \theta} \\
&= \frac{\eta \cdot \opt / 2}{\opt + \eta \cdot \opt / 2 - \theta} \\
&\geq \eta / 4.
\end{align*}
Thus, by our choice of $h$, the following holds with probability at least $1 - \beta/3$:
\begin{align} \label{eq:repeated-candidates-util}
\max\{F_D(S^*_1), \dots, F_D(S^*_h)\} \geq F_D^{\mult}(y) - \eta \cdot \opt/2.
\end{align}
Finally, by the utility of the exponential mechanism (\Cref{thm:em}), with probability at least $1 - \beta / 3$, the following holds:
\begin{align} \label{eq:exp-mech-candidate-selection}
F_D(S^*) \geq \max\{F_D(S^*_1), \dots, F_D(S^*_h)\} - O\left(\frac{\log(h/\beta)}{\eps}\right).
\end{align}
When \eqref{eq:fractional-submod-util},\eqref{eq:repeated-candidates-util}, and \eqref{eq:exp-mech-candidate-selection} all hold (which happens with probability at least $1 - \beta$ due to the union bound), we have
\begin{align*}
F_D(S^*) \geq \left(1 - \frac{1}{e} - \eta\right) \max_{S \in \cI} F_D(S) - O\left(\frac{k}{\eta \eps} \log\frac{m}{\eta \beta}\right),
\end{align*}
which concludes our proof.
\end{proof}

We are now ready to prove the main theorem here (\Cref{thm:main-submod-matroid}) by appealing to concentration bounds (similar to the proof of \Cref{thm:smc}).

\begin{proof}[Proof of \Cref{thm:main-submod-matroid}]
Let $\cA_{\eps, \eta, \beta}$ denote the algorithm from \Cref{lem:integral-submod}. 
We simply run the subsampled version of this algorithm. More precisely, we use the algorithm $\cA_{\ln(2), \eta/2, \beta/2}^{\cS_p}$ where $p = 1 - e^{-\eps}$. The privacy guarantee immediately follows from \Cref{thm:rem-add-dp}.

To analyze the utility, let $D_{\subs} \sim \cS_p(D)$ denote the subsampled dataset that is fed as an input to $\cA_{\ln(2), \eta/2, \beta/2}$ and let $S^*$ denote the output set. By the utility guarantee of $\cA$ (\Cref{lem:integral-submod}), with probability $1 - \beta/2$, we have
\begin{align} \label{eq:integral-submod}
F_{D_{\subs}}(S^*) \geq \left(1 - \frac{1}{e} - \frac{\eta}{2}\right) \cdot \max_{S \subseteq I} F_{D_{\subs}}(S) - O\left(\frac{k}{\eta} \log\frac{m}{\eta \beta}\right).
\end{align}
Furthermore, applying the Chernoff bound (\Cref{thm:chernoff}) with $Z_x := f_x(S) \cdot \ind[x \in D_{\subs}], \mu = p \cdot F_D(S), \alpha = 0.1\eta, \zeta = \frac{100 k \log(m/\beta)}{\eta}$ together with a union bound over all sets $S \in \binom{\cU}{\leq k}$, we can conclude that the following holds simultaneously for all $S \in \binom{\cU}{\leq k}$ with probability at least $1 - \beta/2$:
\begin{align}
F_{D_{\subs}}(S) &\geq (1 - \alpha)p \cdot F_D(S) - \zeta, \label{eq:submod-lb} \\
F_{D_{\subs}}(S) &\leq (1 + \alpha)p \cdot F_D(S) + \zeta. \label{eq:submod-ub}
\end{align}
When \eqref{eq:integral-submod}, \eqref{eq:submod-lb}, and \eqref{eq:submod-ub} all hold, we have
\begin{align*}
F_D(S^*)  
&\overset{\eqref{eq:submod-ub}}{\geq} \frac{1}{(1 + \alpha)p} F_{D_{\subs}}(S^*) - \zeta / p \\
&\overset{\eqref{eq:integral-submod}}{\geq} \frac{1}{(1 + \alpha)p} \cdot \left(1 - \frac{1}{e} - \frac{\eta}{2}\right) \cdot \max_{S \in \cI} F_{D_{\subs}}(S) - O\left(\frac{k \log \left(\frac{m}{\eta\beta}\right)}{\eta p}\right) - \zeta / p \\
&\overset{\eqref{eq:submod-lb}}{\geq} \frac{1 - \alpha}{1 + \alpha} \cdot \left(1 - \frac{1}{e} - \frac{\eta}{2}\right) \cdot \max_{S \in \cI} F_{D}(S) - O\left(\frac{k \log \left(\frac{m}{\eta\beta}\right)}{\eta p}\right) - 2\zeta / p  \\
&\geq \left(1 - \frac{1}{e} - \eta\right) \cdot \max_{S \in \cI} F_{D}(S) - O\left(\frac{k \log \left(\frac{m}{\eta\beta}\right)}{\eta \eps}\right), 
\end{align*}
where the last inequality follows from our choice of parameters.
\end{proof}

\section{Set Cover: Proof of \Cref{thm:set-cover-main}}
\label{app:set-cov}

\begin{proof}[Proof of \Cref{thm:set-cover-main}]
We use $\setcovalg_{\eps_0}$. To see the privacy guarantee of the algorithm, note that the computation of $\tn$ is $0.5\eps$-DP. Meanwhile, the $r$th round is (a post-processing of) an instantiation of $\rat_{\ln(2), \cA}^{S_{p_r}}$ with $h_{r, i}(D) = \left|(S_i \cap D) \setminus \left(\bigcup_{j' < j} S_{\pi(j')} \right)\right|$. It is simple to verify that this satisfies Assumptions~\ref{assum:monotone-at} and~\ref{assum:marginal-sensitivity-at}. Thus, \Cref{thm:subsample-dp-rat} implies that the $r$th round is $\eps_r$-DP. Applying (basic) composition theorem, we get that the entire algorithm is $\eps'$-DP for
\begin{align*}
\eps' = 0.5\eps + \sum_{r \in \N} \eps_r \leq \eps, 
\end{align*}
as desired.

Next, we analyze the approximation guarantee. If $n \leq  \frac{1000\log(m/\beta)}{\eps}$, then any output will yield an approximation ratio at most $n \leq O\left(\frac{\log(m/\beta)}{\eps}\right)$. Hence, we may assume w.l.o.g. that $n > \frac{1000\log(m/\beta)}{\eps}$. In this case, standard tail bounds for Laplace noise shows that with probability $1 - \beta/3$, we have
\begin{align} \label{eq:n-est}
0.5\tn \leq n \leq 2\tn. 
\end{align}
We will condition on this event happening for the remainder of the analysis.

We write $S_{\pi(<j)}$ as a shorthand for $\bigcup_{j' < j} S_{\pi(j')}$. For all $r = \{0, \dots, R\}$, let $\beta_r = \beta / 2^{R-r}$ and let $j_r$ denote the value of $j$ at the end of round $r$. Furthermore, let $q = 100 \cdot \setcov(\cC, \cS)$. 

Let us fix $r \in [R]$. By tail bounds for exponential noise, the following holds for all $i \in \cI$ with probability $1 - \beta_r/6$:
\begin{align} \label{eq:noise-bound-sc}
\theta_{r, i} \leq \log(6m/\beta_r).
\end{align}
Furthermore, applying the Chernoff bound (\Cref{thm:chernoff}) with $Z_x := \ind\left[x \in (D \setminus S_{T \cup \pi(<j_{r-1})})\right] \cdot \ind[x \in D_{\subs, r}], \mu = p \cdot |D \setminus S_{T \cup \pi(<j_{r-1})}|, \alpha = 0.3, \zeta_r = 10 \cdot \ln(12 m^q/\beta_r)$ together with a union bound over all sets $T \in \binom{[m]}{\leq q}$, we can conclude that the following hold simultaneously for all $T \in \binom{[m]}{\leq q}$ with probability at least $1 - \beta_r/6$:
\begin{align} 
|D_{\subs, r} \setminus S_{T \cup \pi(<j_{r-1})}| &\geq 0.7p_r |D \setminus S_{T \cup \pi(<j_{r-1})}| - \zeta_r, \label{eq:sc-lb-opt} \\
|D_{\subs, r} \setminus S_{T \cup \pi(<j_{r-1})}| &\leq 1.3p_r |D \setminus S_{T \cup \pi(<j_{r-1})}| + \zeta_r. \label{eq:sc-ub}
\end{align}

Note that, by a union bound, the probability that \eqref{eq:n-est} holds and \eqref{eq:noise-bound-sc}, \eqref{eq:sc-lb-opt}, and \eqref{eq:sc-ub} hold for all $r \in [R]$ is at least
\begin{align*}
1 - \frac{\beta}{3} - \sum_{r \in [R]} \frac{2\beta_r}{6} \geq 1 - \beta.
\end{align*}

For the remainder of the analysis, we will assume that \eqref{eq:n-est} holds and \eqref{eq:noise-bound-sc}, \eqref{eq:sc-lb-opt}, and \eqref{eq:sc-ub} hold for all $r \in [R]$. A crucial claim used to bound the approximation ratio is stated below. 
\begin{claim} \label{clm:inductive-sc}
For all $r \in [R]$, we have
\begin{align} \label{eq:selected-set-per-bound}
j_r - j_{r - 1} \leq q,
\end{align}
and
\begin{align} \label{eq:num-remaining-elements}
|D \setminus S_{\pi(<j_r)}| \leq \frac{0.1 q \cdot \tau_r}{p_r}.
\end{align}
\end{claim}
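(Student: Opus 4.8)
The plan is to prove \Cref{clm:inductive-sc} by a single induction on $r$, establishing \eqref{eq:selected-set-per-bound} and \eqref{eq:num-remaining-elements} in tandem; it is convenient to read \eqref{eq:num-remaining-elements} at ``$r=0$'' as $|D\setminus S_{\pi(<j_0)}| = n \le 2\tn$, which is exactly \eqref{eq:n-est} (and \eqref{eq:selected-set-per-bound} at $r=0$ is vacuous). Assuming both statements hold through round $r-1$, I would first derive \eqref{eq:selected-set-per-bound} at round $r$, and then, using it, derive \eqref{eq:num-remaining-elements} at round $r$.

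For \eqref{eq:selected-set-per-bound}: each index $i$ added to $\pi$ in round $r$ passed the test, so by the exponential-noise bound \eqref{eq:noise-bound-sc} the set $S_i$ covers more than $p_r\tau_r - \log(6m/\beta_r)$ elements of $D_{\subs,r}$ lying outside $\bigcup_{j'<j}S_{\pi(j')}$, and the parameter choices are tuned so that (once we are in the nontrivial case $n > 1000\log(m/\beta)/\eps$) $p_r\tau_r$ dominates $\log(6m/\beta_r)$, making this at least $\Omega(p_r\tau_r)$. Since $\bigcup_{j'<j}S_{\pi(j')}$ only grows as the round proceeds, these ``freshly covered'' blocks are pairwise disjoint and contained in $D_{\subs,r}\setminus S_{\pi(<j_{r-1})}$, so it remains to bound the size of the latter. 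I would rewrite it in terms of the residual universe $D\setminus S_{\pi(<j_{r-1})}$ via \eqref{eq:sc-ub} with $T=\emptyset$, and bound $D\setminus S_{\pi(<j_{r-1})}$ from the state at the end of round $r-1$ (every still-unselected set covers at most $p_{r-1}\tau_{r-1}$ of $D_{\subs,r-1}$ outside $S_{\pi(<j_{r-1})}$, an optimum cover of size $\setcov = q/100$ covers the residual, and \eqref{eq:sc-lb-opt} at round $r-1$, applied with $T$ the $\le q$ indices chosen in round $r-1$, translates this back to $D$), finally simplifying using $\tau_{r-1}=2\tau_r$ and $p_{r-1}\ge p_r/2$. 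The constants are arranged so that this forces $j_r - j_{r-1}\le q$.

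For \eqref{eq:num-remaining-elements}: once round $r$ has swept through $\cI$, every $i$ still in $\cI$ satisfies $|(S_i\cap D_{\subs,r})\setminus S_{\pi(<j_r)}|\le p_r\tau_r$, since this ``new-element'' count only decreased after $i$ was tested and $\theta_{r,i}\ge 0$. Letting $\cC^*$ be an optimum cover, $|\cC^*|=\setcov$, and noting that $\cC^*$ covers $D$ hence $D_{\subs,r}$ while its already-selected members contribute nothing, we get $|D_{\subs,r}\setminus S_{\pi(<j_r)}| \le \sum_{i\in\cC^*}|(S_i\cap D_{\subs,r})\setminus S_{\pi(<j_r)}| \le \setcov\,p_r\tau_r$. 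Applying \eqref{eq:sc-lb-opt} with $T$ equal to the set of indices chosen during round $r$ — legitimate precisely because $|T| = j_r - j_{r-1}\le q$ by the step just proved, so $T$ lies in the range of that Chernoff union bound — yields $0.7\,p_r\,|D\setminus S_{\pi(<j_r)}| - \zeta_r \le \setcov\,p_r\tau_r$, i.e.\ $|D\setminus S_{\pi(<j_r)}| \le \tfrac{1}{0.7}\bigl(\setcov\tau_r + \zeta_r/p_r\bigr)$, which is at most $0.1\,q\tau_r/p_r$ because $q = 100\,\setcov$ and $\tau_r \ge 1000\,n_{\min}$ swamps the additive $\zeta_r$.

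The main obstacle is the circular structure forced by the union-bounded Chernoff inequalities: \eqref{eq:num-remaining-elements} at round $r$ is proved through \eqref{eq:sc-lb-opt}, whose union bound ranges only over sets $T$ with $|T|\le q$, so it can be invoked only after \eqref{eq:selected-set-per-bound} at round $r$ has been established, while \eqref{eq:selected-set-per-bound} at round $r$ itself rests on the guarantees for round $r-1$. Closing the induction is then mostly bookkeeping, but one must keep every multiplicative loss ($\tau_{r-1}/\tau_r = 2$, $p_r/p_{r-1}\le 2$, the $1.3$ and $0.7$ Chernoff constants) and every additive $\zeta_r$ term comfortably inside the slack designed into $q = 100\,\setcov$ and $\tau_r = 1000\,\tn/2^r$; the one genuinely quantitative point is the inequality $p_r\tau_r \gtrsim \log(6m/\beta_r)$ (together with bounding $\zeta_r/p_r$), which is exactly where the reduction to the case $n > 1000\log(m/\beta)/\eps$ is used.
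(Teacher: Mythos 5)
Your proposal matches the paper's proof in structure and in every essential step: the same induction with base case $r=0$ seeded by \eqref{eq:n-est}; the same ordering within the inductive step (first \eqref{eq:selected-set-per-bound}, then \eqref{eq:num-remaining-elements}); the same use of \eqref{eq:sc-ub} with $T=\emptyset$ together with the inductive hypothesis to bound $|D_{\subs,r}\setminus S_{\pi(<j_{r-1})}|$; the same disjoint-block counting for the number of picks in round $r$; and the same argument via an optimum cover plus \eqref{eq:sc-lb-opt} with $T$ the round-$r$ picks (validated exactly because \eqref{eq:selected-set-per-bound} has already been established, which you correctly flag as the key ordering constraint). The only stylistic wrinkle is that inside the proof of \eqref{eq:selected-set-per-bound} you describe re-deriving the bound on $|D\setminus S_{\pi(<j_{r-1})}|$ from round $r-1$'s threshold and Chernoff facts rather than simply quoting the inductive hypothesis \eqref{eq:num-remaining-elements} at $r-1$ — that is redundant but not incorrect, since it reproduces exactly what the hypothesis asserts.
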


Before we prove \Cref{clm:inductive-sc}, let us show how to use this to finish the proof of the approximation guarantee. The number of total sets chosen is at most
\begin{align*}
j_R + |D \setminus S_{\pi(< j_R)}| &\leq R \cdot q + \frac{0.1 q \cdot \tau_R}{p_R} \\
&\leq O\left(\left(\log n + \frac{\log m}{\eps}\right)\right) \cdot \optsc(\cC, \cS),
\end{align*}
where the first inequality is due to \Cref{clm:inductive-sc} and the second inequality is due to our choice of parameters. Thus, we can conclude that the approximation ratio is $O\left(\log n + \frac{\log m}{\eps}\right)$ as desired.
\end{proof}

We now prove \Cref{clm:inductive-sc}.

\begin{proof}[Proof of \Cref{clm:inductive-sc}]
Before we proceed, let us state a few  inequalities that will be subsequently useful. First, note that $\tau_r = \frac{\tn}{2^r} \geq n_{\min} \cdot 2^{R - r}$. From this, we can derive
\begin{align} \label{eq:noise-bound-sc-concrete}
\theta_{r,i} \overset{\eqref{eq:noise-bound-sc}}{\leq} \log(6m/\beta_r) = \log(m/\beta) + (R - r + 3) \leq 0.005 \tau_r,
\end{align}
and
\begin{align} \label{eq:concen-add-error-concrete}
\zeta_r = 10 \cdot \ln(12 m^q/\beta_r) \leq 10 q \ln(m / \beta_r) \leq 0.05 q \tau_r.
\end{align}

For brevity, we call the two statements in the claim $P(r)$. We will prove $P(r)$ by induction. For convenience, we also define $P(0)$ where we let $j_{-1} = j_0, \tau_0 = \tn, \eps_0 = \frac{\eps}{4 \cdot 2^R}$ and $p_0 = 1 - e^{-\eps_0}$.

\paragraph{Base case.} For $r = 0$, \eqref{eq:selected-set-per-bound} trivially holds. Meanwhile, \eqref{eq:num-remaining-elements} follows immediately from \eqref{eq:n-est}.

\paragraph{Inductive Step.} Suppose that $P(r - 1)$ holds for some $r \in \N$. To see that \eqref{eq:selected-set-per-bound} holds, note that \eqref{eq:num-remaining-elements} from $P(r - 1)$ and \eqref{eq:sc-ub} with $T = \emptyset$ implies that
\begin{align*}
|D_{\subs, r} \setminus S_{\pi(<j_{r - 1})}| &\overset{\eqref{eq:sc-ub}}{\leq} 1.3 p_r |D \setminus S_{\pi(<j_{r - 1})}| + \zeta_r \\
&\overset{\eqref{eq:num-remaining-elements}}{\leq} 1.3 p_r \cdot \frac{0.1 q \tau_{r - 1}}{p_{r - 1}} + \zeta_r.
\end{align*}
Note that $\frac{p_r}{p_{r - 1}} = \frac{1 - e^{-\eps_r}}{1 - e^{-\eps_r/2}} = 1 + e^{-\eps_{r}/2} \leq 2$ and that $\tau_{r - 1} = 2 \tau_r$. Plugging these together with \eqref{eq:concen-add-error-concrete} into the above, we get
\begin{align*}
|D_{\subs, r} \setminus S_{\pi(<j_{r - 1})}| \leq 0.52 q \tau_r + 0.05 q \tau_r \leq 0.6 \tau_r.
\end{align*}

From \eqref{eq:noise-bound-sc-concrete}, every set chosen in round $r$ covers at least $0.995\tau_r$ additional uncovered elements in $D_{\subs, r}$. As a result, the number of sets chosen in round $r$ (which is equal to $j_r - j_{r - 1}$) is at most
\begin{align*}
\frac{|D_{\subs, r} \setminus S_{\pi(<j_{r - 1})}|}{0.995\tau_r} \leq \frac{0.6 q \tau_r}{0.995\tau_r} < q,
\end{align*}
proving \eqref{eq:selected-set-per-bound}.

Next, we will prove \eqref{eq:num-remaining-elements}. First, observe that at the end of the rounds, since $\theta_{r, i}$ are all non-negatives, the remaining sets $i \notin \cI$ must satisfy $|D_{\subs, r} \cap S_i| < \tau_r$. This implies that the number of remaining elements $|D_{\subs, r} \cap S_{\pi(<j_r)}|$ is at most $\tau_r \cdot \optsc(\cC, \cS) = 0.01 \tau_r \cdot q$. From \eqref{eq:selected-set-per-bound}, we have $j_r - j_{r - 1} \leq q$. Thus, we may apply \eqref{eq:sc-lb-opt} with $T = \{\pi(j_{r - 1}), \dots, \pi(j_r - 1)\}$ to arrive at
\begin{align*}
|D \setminus S_{\pi(<j_r)}| < \frac{|D_{\subs, r} \setminus S_{\pi(<j_r)}| + \zeta_r}{0.7p_r}
\leq \frac{0.01 \tau_r \cdot q + 0.05 q\tau_r}{0.7p_r} < \frac{0.1 q \cdot \tau_r}{p_r},
\end{align*}
proving \eqref{eq:num-remaining-elements}.

Thus, $P(r)$ holds for all $r \in [R]$. This completes our proof.
\end{proof}

\section{On Set Cover via Subsampled Repeated EM}
\label{app:set-cov-non-opt}

The algorithm of~\citet{GLMRT10} for Set Cover is exactly the same as their algorithm for submodular maximization ($\smcalg_{\eps_0}$) except with $k = m$ and $F^{\setcov}_D(T) := |\bigcup_{i \in T} S_i|$.
Finally, the output permutation is just $\pi = (c^*_1, \dots, c^*_m)$. Similar to \Cref{sec:smc}, it is possible to use the subsampled version of this algorithm for Set Cover. Unfortunately, here we can only show an approximation ratio of $O\left(\frac{\log n \log m}{\eps}\right)$ instead of the optimal $O\left(\log n + \frac{\log m}{\eps}\right)$ that we presented in \Cref{sec:set-cov}:

\begin{theorem} \label{thm:set-cov-non-opt}
For any $0 < \eps, \beta, \eta \leq 1$, $\smcalg_{\ln(2), F_D^{\setcov}}^{\cS_p}$ where $p = 1 - e^{-\eps}$ is a polynomial-time $\eps$-DP algorithm for Set Cover that achieves $O\left(\frac{\log n \log(m/\beta)}{\eps}\right)$-approximation with probability $1 - \beta$.
\end{theorem}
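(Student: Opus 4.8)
The plan is to handle privacy in one line and then spend the effort on the utility analysis, which runs the greedy (repeated exponential) mechanism on the \emph{subsampled} universe and transfers its cost back to the full universe. For privacy, the scoring functions here are $q_i(c;D) = |S_c \cap D \setminus (S_{c^*_1}\cup\cdots\cup S_{c^*_{i-1}})|$, which satisfy Monotonicity (\Cref{assum:monotone}: adding an element never decreases a coverage count) and Bounded Realized Sensitivity (\Cref{assum:marginal-sensitivity} with $\Delta=1$: a single element $x\in D$ contributes to the marginal gain of exactly one set in any fixed permutation, namely the first one containing it). Hence $\eps$-DP is immediate from \Cref{thm:subsample-dp} with $p = 1-e^{-\eps}$; note $p = \Theta(\eps)$ for $\eps\le 1$.

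For utility, write $\optsc := \optsc(\cU,\cS)$, $n := |\cU|$, let $D_{\subs}\sim\cS_p(D)$ be the subsampled universe, and let $\pi = (c^*_1,\dots,c^*_m)$ be the output of $\smcalg_{\ln 2}$ run on $D_{\subs}$ with objective $F^{\setcov}_{D_{\subs}}$. Define $U_j := D_{\subs}\setminus(S_{c^*_1}\cup\cdots\cup S_{c^*_j})$. I would condition on two events, each holding with probability $\ge 1-\beta/3$. Event (i): by \Cref{thm:em} with $\eps_0=\ln 2$ and a union bound over all $m$ rounds, every round's chosen set covers within $\lambda := O(\log(m/\beta))$ of the best achievable coverage of $U_j$. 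Since any set cover of $\cU$ is also a set cover of $D_{\subs}\supseteq U_j$, one has $\optsc(U_j)\le\optsc$, so while $|U_j|\ge 2\lambda\optsc$ some unpicked set covers $\ge |U_j|/\optsc$ fresh elements and the mechanism picks one covering $\ge |U_j|/(2\optsc)$ fresh elements, giving $|U_{j+1}|\le |U_j|(1-1/(2\optsc))$. Since $|U_0|\le n$, after $j^* = O(\optsc\log n)$ rounds we reach $|U_{j^*}|\le 2\lambda\optsc$. Event (ii): applying the Chernoff bound (\Cref{thm:chernoff}) to $|D_{\subs}\cap A|$ for $A = \cU\setminus\bigcup_{i\in T}S_i$, with a union bound over all $T\subseteq[m]$ of size $\le q := O(\optsc\log n)$, we get $|D_{\subs}\setminus\bigcup_{i\in T}S_i| \ge \tfrac12 p\,|\cU\setminus\bigcup_{i\in T}S_i| - \zeta$ for all such $T$, where $\zeta = O(q\log m + \log(1/\beta)) = O(\optsc\log n\,\log(m/\beta))$ is what the $m^{O(\optsc\log n)}$-term union bound forces.

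To conclude, decompose the cost: elements of $\cU$ first covered within the first $j^*$ sets of $\pi$ contribute at most $j^*$ distinct positions, and the remaining elements $R := |\cU\setminus(S_{c^*_1}\cup\cdots\cup S_{c^*_{j^*}})|$ contribute at most $R$ more, so $\costsc_{\cU,\cS}(\pi)\le j^* + R$. Applying event (ii) with $T = \{c^*_1,\dots,c^*_{j^*}\}$ (size $j^*\le q$) gives $|U_{j^*}|\ge \tfrac12 pR - \zeta$, hence $R \le 2(|U_{j^*}|+\zeta)/p \le 2(2\lambda\optsc+\zeta)/p = O(\optsc\log n\,\log(m/\beta)/\eps)$ using $p=\Theta(\eps)$. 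Therefore $\costsc_{\cU,\cS}(\pi) \le j^* + R = O(\optsc\log n\,\log(m/\beta)/\eps)$ with probability $\ge 1-\beta$, and dividing by $\optsc$ yields the stated approximation ratio (the hidden constant is $\ge 1$, so the degenerate small-$n$ cases are covered by the trivial bound).

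I expect the main obstacle to be event (ii): because the $j^*=\Theta(\optsc\log n)$ sets picked on $D_{\subs}$ are only guaranteed to cover $D_{\subs}$ (not $\cU$) well, one must translate ``few uncovered elements in $D_{\subs}$'' into ``few uncovered elements in $\cU$'' uniformly over the random identity of the picked sets; the required union bound over $\binom{[m]}{\le j^*}=m^{\Theta(\optsc\log n)}$ configurations inflates $\zeta$ by a $\log n$ factor, which is exactly the source of the extra $\log n$ and why this approach cannot match the $O(\log n + \log(m/\beta)/\eps)$ bound of \Cref{thm:set-cover-main} (which instead draws fresh subsampling randomness in each of $O(\log n)$ rounds, so each round's union bound is only over $m^{O(\optsc)}$ events). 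A minor point to get right is that $j^*$, $q$, and $\optsc$ can be taken deterministic since they depend on $n$ and the input instance, not on the random draw $D_{\subs}$ (using $|D_{\subs}|\le n$ with certainty).
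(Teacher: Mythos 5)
Your proof is correct and takes essentially the same route as the paper's: privacy via \Cref{thm:subsample-dp}, the greedy/EM guarantee on the subsampled universe (which the paper imports as \Cref{thm:sc-gupta-util} from \citet{GLMRT10}, whereas you re-derive it inline via the standard geometric-decay argument), then the same Chernoff bound with a union bound over $\binom{[m]}{\leq O(\optsc \log n)}$ to transfer the uncovered count from $D_{\subs}$ back to $\cU$, and the same cost decomposition into the $O(\optsc \log n)$ prefix plus the remaining uncovered elements. Your closing diagnosis of the union bound over $m^{\Theta(\optsc \log n)}$ families as the source of the extra $\log n$ factor also matches the paper's explanation of why this approach falls short of \Cref{thm:set-cover-main}.
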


We state the utility guarantee of $\smcalg$ from \cite{GLMRT10} in a fine-grained fashion below.

\begin{theorem}[\citealt{GLMRT10}] \label{thm:sc-gupta-util}
With probability $1 - \beta$, the output $\pi$ from $\smcalg_{\eps_0, F_D^{\setcov}}$ satisfies the following: there exists $r = O(\optsc(\cU, \cS) \cdot \log n)$ such that $\left|S_i \setminus \left(\bigcup_{j \in [r]} S_{\pi(j)}|\right)\right| \leq O\left(\frac{\log(m/\beta)}{\eps_0}\right)$ for all $i \in [m]$.
\end{theorem}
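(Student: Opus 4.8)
Recall that $\smcalg_{\eps_0, F^{\setcov}_D}$ is the instantiation of the repeated exponential mechanism with $L = m$, candidate set $\cC_i = [m]\setminus\{c^*_1,\dots,c^*_{i-1}\}$, and score $q_i(c;D) = F^{\setcov}_D(\{c^*_1,\dots,c^*_{i-1},c\}) - F^{\setcov}_D(\{c^*_1,\dots,c^*_{i-1}\})$; writing $U_j := \cU \setminus \bigcup_{j'\le j}S_{c^*_{j'}}$ for the set of still-uncovered universe elements after $j$ picks ($U_0 = \cU$, $n = |\cU|$), we have $q_{j+1}(c;D) = |S_c \cap U_j|$ for every unpicked $c$, while already-picked indices contribute $0$; hence $\max_{c\in\cC_{j+1}}q_{j+1}(c;D) = \max_{i\in[m]}|S_i\cap U_j|$. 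Since for $F^{\setcov}$ the realized-sensitivity bound of \Cref{assum:marginal-sensitivity} is $\Delta = 1$ and each $q_i$ has sensitivity $\le 1$, the algorithm invokes $\EM_{\eps_0}$ in every round. The first step is therefore to apply the utility half of \Cref{thm:em} to each of the $m$ rounds with failure probability $\beta/m$ and take a union bound: with probability at least $1-\beta$, in \emph{every} round $j+1$ the picked set satisfies $q_{j+1}(c^*_{j+1};D) \ge \max_{i}|S_i\cap U_j| - \theta$, where $\theta = O(\log(m/\beta)/\eps_0)$ (the extra $\log m$ from the union bound is absorbed). Condition on this event for the rest of the argument, which is then purely deterministic.

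\textbf{Deterministic progress argument.} Write $K := \optsc(\cU,\cS)$ and fix an optimal cover $\cO$ with $|\cO| = K$ and $\bigcup_{i\in\cO}S_i = \cU$. For any $j$ with $U_j \ne \emptyset$, the unpicked optimal sets $\cO \setminus \{c^*_1,\dots,c^*_j\}$ still cover $U_j$ (the picked sets cover $\cU\setminus U_j$, disjoint from $U_j$), and there are at most $K$ of them, so by pigeonhole some unpicked set covers $\ge |U_j|/K$ elements of $U_j$; thus $\max_{c\in\cC_{j+1}}q_{j+1}(c;D) = \max_i |S_i\cap U_j| \ge |U_j|/K$. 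Now let $r$ be the smallest index $j$ with $\max_i |S_i\cap U_j| \le 2\theta$; such an index exists since $U_n = \emptyset$. For every $j < r$ we have $\max_i|S_i\cap U_j| > 2\theta$, so by the conditioned event the picked set covers at least $\max_i|S_i\cap U_j| - \theta > \tfrac12 \max_i|S_i\cap U_j| \ge |U_j|/(2K)$ new elements, i.e. $|U_{j+1}| \le |U_j|\bigl(1 - \tfrac{1}{2K}\bigr)$. Chaining from $U_0$ gives $|U_{r-1}| \le n\bigl(1-\tfrac1{2K}\bigr)^{r-1} \le n\,e^{-(r-1)/(2K)}$. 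If $r - 1 > 2K\ln n$ this forces $|U_{r-1}| < 1$, i.e. $U_{r-1} = \emptyset$, contradicting the minimality of $r$; hence $r \le 2K\ln n + 1 = O(K\log n) = O(\optsc(\cU,\cS)\cdot\log n)$. Finally, for every $i$, $\bigl|S_i \setminus \bigcup_{j\in[r]}S_{\pi(j)}\bigr| = |S_i\cap U_r| \le \max_{i'}|S_{i'}\cap U_r| \le 2\theta = O(\log(m/\beta)/\eps_0)$, which is exactly the claim.

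\textbf{Where the care is needed.} This is the textbook analysis of greedy set cover with noisy scores, and the only delicate point is choosing the stopping threshold consistently. The factor $2$ in ``$\max_i|S_i\cap U_j| > 2\theta$'' is what guarantees that, before the threshold is reached, the additive error $\theta$ of the exponential mechanism still leaves the greedy step making geometric progress (the picked set covers at least half of the best possible); and phrasing the stopping point as the \emph{first} round reaching the threshold, combined with the integrality of $|U_{r-1}|$, is what cleanly handles the regime where $|U_j|$ drops to a small constant (or $0$) without needing a separate case analysis. Two routine checks worth stating explicitly are that already-picked indices never beat the best unpicked score (so $\max_{c\in\cC_{j+1}}q_{j+1} = \max_i|S_i\cap U_j|$), and that coverage is monotone in $j$ so that $\max_i|S_i\cap U_j|$ never increases (which is implicit in taking $r$ to be the first good round). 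I do not foresee any genuine obstacle beyond this bookkeeping.
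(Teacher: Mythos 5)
Your proof is correct, but note that the paper itself offers no proof of \Cref{thm:sc-gupta-util} to compare against: it is imported as a black box from \citet{GLMRT10}, and your argument is essentially the standard analysis from that paper (greedy set cover where each step loses an additive $O(\log(m/\beta)/\eps_0)$ to the exponential mechanism, a per-round union bound, and the usual $(1-\tfrac{1}{2K})$ geometric decay of the uncovered set until the best marginal coverage drops below $2\theta$). Two cosmetic points: the index at which the stopping rule is guaranteed to trigger should be $U_m=\emptyset$ (after all $m$ sets are placed), not $U_n$; and the per-round application of \Cref{thm:em} is to an adaptively chosen score, so the union bound should be read as conditioning on the previous picks in each round --- both are routine and do not affect correctness.
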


\begin{proof}[Proof of \Cref{thm:set-cov-non-opt}]
The privacy guarantee follows in a similar manner as in the proof of \Cref{thm:smc}.

To analyze the utility, let $D_{\subs} \sim \cS_p(D)$ denote the subsampled dataset that is fed as an input to $\smcalg_{\ln(2), F_D^{\setcov}}$; note that we view $D_{\subs}$ as the universe $\cU$ and let $S'_i := D_{\subs} \cap S_i$ for all $i \in [m]$. Furthermore, let us abbreviate $\left(\bigcup_{j \in T} S_j\right)$ as $S_{T}$,  $\left(\bigcup_{j \in T} S'_j\right)$ as $S'_{T}$; furthermore, we abbreviate $S_{\{\pi(1), \dots, \pi(r)\}}$ as $S_{\pi(\leq r)}$ and similarly define $S'_{\pi(\leq r)}$.
By the utility guarantee of $\smcalg$ (\Cref{thm:sc-gupta-util}), with probability $1 - \beta/2$, there exists $r = O(\optsc(D_{\subs}, \cS') \cdot \log n) \leq O(\optsc(D, \cS) \cdot \log n)$ where
\begin{align} 
|S'_i \setminus S'_{\pi(\leq r)}| \leq O(\log(m/\beta)) &&\forall i \in [m].
\end{align}
This means that each $S'_i$ can cover at most $O(\log(m/\beta))$ elements from $D_{\subs} \setminus S'_{\pi(\leq r)}$, which implies
\begin{align} \label{eq:sc-util-subsampled}
\Omega\left(\frac{|D_{\subs} \setminus S'_{\pi(\leq r)}|}{\log(m/\beta)}\right) \leq \optsc(D_{\subs}, \cS') \leq \optsc(D, \cS).
\end{align}

Furthermore, applying the Chernoff bound (\Cref{thm:chernoff}) with $Z_x := \ind\left[x \in (D \setminus S_{T})\right] \cdot \ind[x \in D_{\subs}], \mu = p \cdot |D \setminus S_{T}|, \alpha = 0.1, \zeta = 2000 r \log(m/\beta)$ together with a union bound over all sets $T \in \binom{\cU}{\leq r}$, we can conclude that the following holds simultaneously for all $T \in \binom{\cU}{\leq r}$ with probability at least $1 - \beta/2$:
\begin{align} 
|D_{\subs} \setminus S'_T| &\geq (1 - \alpha)p |D \setminus S_T| - \zeta. \label{eq:sc-lb}
\end{align}
When \eqref{eq:sc-util-subsampled} and \eqref{eq:sc-lb} both hold, we have
\begin{align*}
|D \setminus S_{\pi(\leq r)}| &\overset{\eqref{eq:sc-lb}}{\leq} \frac{1}{(1 - \alpha)p} \left(\zeta + |D_{\subs} \setminus S'_{\pi(\leq r)}|\right) \\
&\overset{\eqref{eq:sc-util-subsampled}}{\leq} \frac{1}{(1 - \alpha)p} \left(\zeta + \log(m/\beta) \cdot \optsc(D, \cS)\right) \\
&\leq O\left(\frac{\log n \log(m/\beta)}{\eps} \cdot \optsc(D, \cS)\right),
\end{align*}
where the last inequality follows from our choice of $r, p, \alpha$.

Finally, observe that the number of sets that are chosen after $r$ is at most $|D \setminus S_{\pi(\leq r)}|$. Thus, in total the number of sets chosen is at most $r + |D \setminus S_{\pi(\leq r)}| \leq O\left(\frac{\log n \log(m/\beta)}{\eps} \cdot \optsc(D, \cS)\right)$.
\end{proof}

\section{\boldmath Metric $k$-Means and $k$-Median}
\label{app:clustering}

We write $\opt^q_k(D)$ to denote $\min_{S \subseteq [m] \atop |S| = k} \cost^q(S; D)$. Furthermore, for notational convenience, we let $\cost^q(\emptyset; D) = n$ (i.e., we think of $\min_{c \in S} d(c, x)^q$ as being 1 when $S$ is empty).

\subsection{One-Sided DP Algorithms}

\subsubsection{Bicriteria Approximation}

It will be more convenient to start from a one-sided DP algorithm. In fact, we will first give a bicriteria approximation algorithm where the output set size is $O(k \log n)$, as stated below. Note that the algorithm is fairly similar to that of \citet{JNN21}. However, ours is simpler since we can apply the repeated exponential mechanism (\Cref{alg:repeated-em}) directly, while their algorithm uses maximum $k$-coverage algorithm as a black-box (and thus they have to handle different distance scales explicitly).

\begin{theorem} \label{thm:clustering-bicriteria-one-sided}
For any $0 < \eps, \beta < 1$, there is an $\eps$-add-DP algorithm that, with probability $1 - \beta$, outputs a set $T 
\subseteq [m]$ of size at most $O(k \log n)$ such that $\cost^q(T; D) \leq \opt^q_k(D) + O\left(\frac{k \cdot \log(m/\beta)}{\eps}\right)$.
\end{theorem}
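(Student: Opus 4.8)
The plan is to instantiate the Repeated-EM framework (\Cref{alg:repeated-em}) with candidate sets and scoring functions tailored to producing a bicriteria clustering solution, verify that the resulting instance satisfies Assumptions~\ref{assum:monotone} and~\ref{assum:marginal-sensitivity} so that \Cref{thm:rem-add-dp} gives $\eps_0$-add-DP, and then carry out the utility analysis using \Cref{thm:em}. Concretely, we run $L = O(k \log n)$ rounds; in round $i$, having already selected centers $c^*_1, \dots, c^*_{i-1}$, we let $\cC_i = [m]$ and define the score $q_i(c; D)$ to be the \emph{decrease in clustering cost} obtained by adding $c$ to the current center set, i.e., $q_i(c; D) := \cost^q(\{c^*_1,\dots,c^*_{i-1}\}; D) - \cost^q(\{c^*_1,\dots,c^*_{i-1}, c\}; D)$ (with the convention $\cost^q(\emptyset;D)=n$ as set up in the appendix). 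Since $\cost^q(\cdot;D) = \sum_{x\in D}\min_{c\in S}d(c,x)^q$ is additive over users and each user contributes at most $1$ to any cost (diameter $\le 1$), each term $q_i(c;x)$ lies in $[0,1]$ and is monotone in $D$, giving Assumption~\ref{assum:monotone}. For Assumption~\ref{assum:marginal-sensitivity}, the key telescoping observation is that for a fixed output sequence $(c^*_1,\dots,c^*_L)$, the sum $\sum_i |q_i(c^*_i;D) - q_i(c^*_i;D')|$ collapses: adding one user $x$ changes each $q_i(c^*_i;\cdot)$ by exactly the contribution of $x$, namely $\min_{c \in \{c^*_1,\dots,c^*_{i-1}\}} d(c,x)^q - \min_{c \in \{c^*_1,\dots,c^*_i\}} d(c,x)^q \ge 0$, and these telescope to $\min_\emptyset d(\cdot,x)^q - \min_{c\in\{c^*_1,\dots,c^*_L\}}d(c,x)^q \le 1$. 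Hence $\Delta = 1$.

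Next comes the utility analysis, which is the standard greedy-covering argument made robust to the exponential mechanism's noise. I would argue that after each round the residual cost $\cost^q$ contracts by a constant factor, up to additive slack. Let $\Phi_i := \cost^q(\{c^*_1,\dots,c^*_i\};D)$. The optimal $k$ centers $S^* = \{o_1,\dots,o_k\}$ satisfy $\cost^q(S^*;D) = \opt^q_k(D)$, and by submodularity of the cost-reduction function there is always some center $o_j$ whose marginal gain is at least $(\Phi_{i-1} - \opt^q_k(D))/k$. Applying \Cref{thm:em} with $|\cC_i| = m$ and failure probability $\beta/L$, the exponential mechanism (run at scale $\eps_0/\Delta = \eps_0$) picks $c^*_i$ with $q_i(c^*_i;D) \ge (\Phi_{i-1} - \opt^q_k(D))/k - O(\log(mL/\beta)/\eps_0)$, except with probability $\beta/L$. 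Writing $E := O(\log(m/\beta)/\eps_0)$ (absorbing the $\log L = O(\log(k\log n))$ into the logarithm at the cost of constants — or more carefully, $E = O(\log(mL/\beta)/\eps_0)$), a union bound over all $L$ rounds gives that, with probability $\ge 1-\beta$, every round makes progress
\[
\Phi_i - \opt^q_k(D) \;\le\; \left(1 - \tfrac1k\right)\bigl(\Phi_{i-1} - \opt^q_k(D)\bigr) + E .
\]
Unrolling this recursion for $L = \lceil k \ln n \rceil$ rounds: the multiplicative factor drives $(1-1/k)^L \cdot \Phi_0 = (1-1/k)^{k\ln n}\cdot n \le e^{-\ln n}\cdot n = 1$, and the geometric sum of the additive terms is at most $k E = O(k\log(m/\beta)/\eps_0)$. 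Thus $\Phi_L \le \opt^q_k(D) + 1 + O(kE)$, and setting $T := \{c^*_1,\dots,c^*_L\}$ (size $\le L = O(k\log n)$) with $\eps_0 = \eps$ yields $\cost^q(T;D) \le \opt^q_k(D) + O(k\log(m/\beta)/\eps)$ as claimed.

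The main obstacle I anticipate is the utility analysis in the regime where $\opt^q_k(D)$ is small or zero — then the "$+1$" slack from $(1-1/k)^{k\ln n}\cdot n \le 1$ is not literally additive-error-free, so one must be a bit careful: either observe that $O(k\log(m/\beta)/\eps) \ge 1$ always (since $k,\log(m/\beta)\ge 1$ and $\eps < 1$), so the stray constant is absorbed, or alternatively run $L$ slightly larger (e.g. $L = O(k\log(n/\beta))$ or $O(k\log(nm))$) to kill the residual multiplicatively — this is the choice hidden in the $O(k\log n)$ bound on the output size. A secondary subtlety is whether submodularity of $S \mapsto \cost^q(\emptyset;D) - \cost^q(S;D)$ holds for general $q$; this is classical for $k$-median/$k$-means-style facility-location functions (it is a coverage-type / monotone submodular function since for each user $x$, $S\mapsto d(\emptyset,x)^q - \min_{c\in S}d(c,x)^q$ is monotone submodular), so the greedy bound applies uniformly in $q$, and I would simply cite this.
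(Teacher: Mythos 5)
Your proposal is correct and follows essentially the same route as the paper: instantiate \Cref{alg:repeated-em} with $L = O(k\log n)$ rounds, score by marginal cost reduction, verify \Cref{assum:monotone,assum:marginal-sensitivity} by the telescoping argument (so $\Delta=1$ and \Cref{thm:rem-add-dp} gives add-DP), and then argue via the submodular averaging bound that the residual excess cost $\Psi_i := \cost^q(T_i;D) - \opt^q_k(D)$ shrinks geometrically up to the exponential-mechanism additive slack. The only cosmetic difference is in how the endgame is handled: the paper proves the $(1 - 1/(2k))$ contraction only while $\Psi_i$ exceeds a threshold $\Theta(k\log(m/\beta)/\eps)$ and then uses an ``either it drops below the threshold at some point, or it contracts all the way'' dichotomy, whereas you directly unroll the recursion $\Psi_i \le (1-1/k)\Psi_{i-1} + E$ and bound the geometric series of additive terms by $kE$ — both are equivalent standard noisy-greedy analyses. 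Your remark about the $\log L$ factor from the union bound is a fair observation (the paper silently absorbs it), and your handling of the stray ``$+1$'' by noting $O(k\log(m/\beta)/\eps) \ge 1$ matches what the paper does implicitly.
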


\begin{proof}
If $2k \ln n \geq m$, then the bound is trivial by outputting $T = [m]$. Otherwise, we use the $\rem_{\eps, \cA}$ algorithm with $L = \lceil 2k \ln n \rceil$ and the candidate sets and scoring functions selected as follows:
\begin{itemize}[nosep]
\item $\cC_i$ is the set of remaining elements $[m] \setminus \{c^*_1, \dots, c^*_{i - 1}\}$ 
\item $q_i(c; D)$ is the reduction in the cost after adding $c$, i.e., $\cost^q(\{c^*_1, \dots, c^*_{i - 1}\}; D) - \cost^q(\{c^*_1, \dots, c^*_{i - 1}, c\}; D)$. 
\end{itemize}
It is simple to see that \Cref{assum:monotone} and \Cref{assum:marginal-sensitivity} are satisfied and thus the algorithm is $\eps$-add-DP by \Cref{thm:rem-add-dp}.

To analyze its utility, let $T_i = \{c^*_1, \dots, c^*_i\}$ denote the solution set maintained by the algorithm at time step $i$. Invoking \Cref{thm:em} together with a union bound, we can conclude that, with probability $1 - \beta$, the following holds for all $i \in [L]$.
\begin{align} \label{eq:em-clustering-cost-reduction}
\left(\cost^q(T_i; D) - \cost^q(T_{i + 1}; D)\right) \geq \max_{c \in \cC_i} \left(\cost^q(T_i; D) - \cost^q(T_i \cup \{c\}; D)\right) - \kappa \cdot \frac{\log(m/\beta)}{\eps},
\end{align}
where $\kappa > 0$ is a constant. For the remainder of the proof, we will assume that this event holds.

Let $\Psi_i = \cost^q(T_i; d) - \opt^q_k(D)$ denote the difference between the cost of the solution at the $i$th step and the optimal solution (among $k$-size subsets). Let $S^*$ denote the optimal solution, i.e., $|S^*| = k$ such that $\cost^q(S^*; D) = \opt^q_k(D)$. If $\Psi_i \geq 0$, then it is simple to see that $\sum_{c \in S^*} \left(\cost^q(T_i; D) - \cost^q(T_i \cup \{c\}; D)\right) \geq \Psi_i$. This implies that $\max_{c \in \cC_i} \left(\cost^q(T_i; D)-\cost^q(T_{i} \cup \{c\}; D)\right) \geq \frac{\Psi_i}{k}$. Thus, if $\Psi_i \geq k \cdot \left(2\kappa \cdot \frac{\log(m/\beta)}{\eps}\right)$, then we can apply \eqref{eq:em-clustering-cost-reduction} to conclude that $\left(\cost^q(T_{i}; D) - \cost^q(T_{i+1}; D)\right) \geq \frac{\Psi_i}{2k}$. Rearranging, this gives
\begin{align} \label{eq:gap-exp-dec-clustering}
\Psi_{i+1} \leq \left(1 - \frac{1}{2k}\right) \Psi_i,
\end{align}

Thus, either we have $\Psi_i < k \cdot \left(2\kappa \cdot \frac{\log(m/\beta)}{\eps}\right)$ for some $i \in [L]$ which immediately satisfies the desired accuracy bound, or we can repeatedly apply \eqref{eq:gap-exp-dec-clustering} to arrive at
\begin{align*}
\Psi_L \leq \left(1 - \frac{1}{2k}\right)^{L} \Psi_0 \leq e^{-0.5L/k} \cdot n \leq 1,
\end{align*}
where the last inequality is due to our choice of $L$. As such, the accuracy guarantee also holds in this case.
\end{proof}

\subsubsection{From Bicriteria to True Approximation}

We can then go from bicriteria approximation to true approximation (i.e., output set size $k$) via standard techniques.

\begin{theorem}  \label{thm:clustering-approx-one-sided}
For any $0 < \eps, \beta < 1$, there is an $\eps$-add-DP algorithm for metric $k$-median and metric $k$-means that achieves $O(1)$-approximation and $O\left(\frac{k \log(mn/\beta)}{\eps}\right)$-error w.p. $1 - \beta$.
\end{theorem}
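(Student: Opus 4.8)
The plan is to use the standard reduction from clustering-with-side-information to private synthetic data, instantiated so that every step is \emph{one-sided} add-DP, and to split the privacy budget as $\eps/2 + \eps/2$. First, run the bicriteria algorithm of \Cref{thm:clustering-bicriteria-one-sided} with parameters $\eps/2$ and $\beta/3$ to obtain a set $T \subseteq [m]$ with $|T| \le O(k \log n)$ and, with probability $1-\beta/3$, $\cost^q(T; D) \le \opt^q_k(D) + O\!\left(\frac{k \log(m/\beta)}{\eps}\right)$. Second, let $\phi_T : [m] \to T$ be the map sending each $x$ to a nearest point of $T$, with ties broken in a $D$-independent way; for each $c \in T$ compute the count $n_c := |\{x \in D : \phi_T(x) = c\}|$, draw $\theta_c \sim \Exp(\eps/2)$, and release the noisy weight $\tn_c := n_c + \theta_c$. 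Third, feed the weighted instance with clients $T$, weights $(\tn_c)_{c\in T}$, over the ambient metric $([m],d)$, into any non-private $O(1)$-approximation algorithm for weighted $k$-median / $k$-means (e.g.\ \cite{CGTS99}), and output the $k$ centers $S$ that it returns.

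For privacy, the key point is that $\phi_T$ depends only on $T$ and the public metric, not on $D$, so adding a single element to $D$ increments exactly one coordinate $n_c$ by $1$ and leaves all others unchanged. Since $\Exp(\eps/2)$ noise is supported on $[0,\infty)$, the density of $n_c + 1 + \theta_c$ at any value $v$ is either $0$ (when $v < n_c+1$) or exactly $e^{\eps/2}$ times the density of $n_c + \theta_c$ at $v$; hence, conditioned on the output $T$ of the first step, the map $D \mapsto (\tn_c)_{c \in T}$ is $(\eps/2)$-add-DP. Composing the two add-DP mechanisms (add-DP composes exactly as DP does, as already used elsewhere in the paper) and noting that the final solver is post-processing, the whole algorithm is $\eps$-add-DP.

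For utility, write $\widetilde{\cost}^q(S'; \phi_T(D)) := \sum_{c \in T} \tn_c \cdot \min_{s \in S'} d(c,s)^q$ for the cost on the \emph{noisy} instance. Three ingredients chain together: (i) since the diameter is at most $1$ and the noise is non-negative, $\cost^q(S'; \phi_T(D)) \le \widetilde{\cost}^q(S'; \phi_T(D)) \le \cost^q(S'; \phi_T(D)) + \sum_{c \in T} \theta_c$ for every size-$k$ set $S'$; (ii) for $q=1$ the triangle inequality, and for $q=2$ the approximate triangle inequality $d(a,b)^2 \le (1+\gamma)d(a,c)^2 + (1+\tfrac1\gamma)d(c,b)^2$ with a constant $\gamma$, show that for every $S'$ the quantities $\cost^q(S'; \phi_T(D))$ and $\cost^q(S'; D)$ agree up to a constant multiplicative factor and an additive $O(\cost^q(T; D))$; and (iii) the returned $S$ is within an $O(1)$ factor of $\min_{|S'|=k}\widetilde{\cost}^q(S'; \phi_T(D))$. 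Chaining (iii), (ii) applied to the true optimum $S^\star_D$, and (i) yields $\cost^q(S; D) \le O(1)\cdot \opt^q_k(D) + O(1)\cdot \cost^q(T; D) + O(1)\cdot \sum_{c \in T}\theta_c$. Finally, $\cost^q(T;D)$ is bounded by the first step, and $\sum_{c \in T}\theta_c$ is a sum of $|T| = O(k\log n)$ i.i.d.\ $\Exp(\eps/2)$ variables, so by a Gamma-tail bound it is $O\!\left(\frac{k\log n + \log(1/\beta)}{\eps}\right)$ with probability $1-\beta/3$. A union bound over the three failure events, together with the identity $\log(m/\beta)+\log n = \log(mn/\beta)$, turns the additive terms into $O\!\left(\frac{k\log(mn/\beta)}{\eps}\right)$, proving the theorem.

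The two points that require care are (a) the one-sidedness of the count-release step — the crucial observation being that $\phi_T$ is $D$-independent, so one added element moves a single count by one, and exponential (hence one-sided) noise exactly absorbs that unit shift in the right direction; and (b) keeping the noise contribution at $O(k\log n/\eps)$ rather than $O(k\log n \cdot \log(k\log n/\beta)/\eps)$, which is why we bound the sum $\sum_{c\in T}\theta_c$ directly by concentration instead of union-bounding the $|T|$ coordinates individually. The approximate-triangle-inequality bookkeeping for the $k$-means case, and the conversion between facility sets $T$ and $[m]$, are routine.
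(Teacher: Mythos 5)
Your proof is correct and mirrors the paper's own argument almost exactly: a bicriteria $\eps/2$-add-DP first stage, an exponential-noise histogram over the $O(k\log n)$ snapped centers for the second $\eps/2$-add-DP stage (the key one-sided point being that $\phi_T$ is $D$-independent so an added record shifts exactly one count by one, which exponential noise absorbs with density ratio $e^{\eps/2}$), a non-private $O(1)$-approximation run on the noisy synthetic instance as post-processing, and a chain of the $q$-th power triangle inequality with a concentration bound on $\sum_{c\in T}\theta_c$ to get the $O(k\log(mn/\beta)/\eps)$ error. Your explicit insistence that ties in $\phi_T$ be broken in a $D$-independent way is a small but welcome sharpening of the paper's ``ties are broken arbitrarily,'' but otherwise the two proofs are the same.
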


\begin{proof}
The algorithm works as follows.
\begin{itemize}[nosep]
\item Run the $\eps/2$-add-DP algorithm from \Cref{thm:clustering-bicriteria-one-sided} to obtain $T \subseteq [m]$ of size $O(k \log n)$.
\item Create a histogram $(\tih_t)_{t \in T}$ as follows. First, let $\hath_t = |\{x \in D \mid t = \argmin_{t' \in T} d(t', x)\}|$ (where ties are broken arbitrarily in $\argmin$). Then, sample $\theta_t \sim \Exp(\eps/2)$ independently and let $\tih_t = \hath_t + \theta_t$.
\item Let $\tD$ be the dataset that, for each $t \in T$, contains $\tih_t$ copies of $t$. Run any (non-private) $O(1)$-approximation algorithm (e.g., \cite{AryaGKMMP04} for $k$-median and \cite{KanungoMNPSW04} for $k$-means) on $\tD$ to produce a solution $S$. Then, output $S$.
\end{itemize}
Since the second step is an application of the exponential-noise mechanism, it satisfies $(\eps/2)$-add-DP. As a result, by applying the basic composition theorem and viewing the last step as a post-processing, we can conclude that the entire algorithm is $\eps$-add-DP.

By \Cref{thm:clustering-bicriteria-one-sided}, with probability $1 - \beta/2$, we have
\begin{align} \label{eq:bicri}
\cost^q(T; D) \leq \opt^q_k(D) + O\left(\frac{k \cdot \log(m/\beta)}{\eps}\right)
\end{align}
Furthermore, by standard concentration of sum of exponential random variables (see e.g.,~\cite{brown2011wasted}), the following holds with probability $1 - \beta/2$:
\begin{align} \label{eq:sum-exp-noise}
\sum_{t \in T} \theta_t \leq O\left(\frac{k \log(n/\beta)}{\eps}\right).
\end{align}
Henceforth, we will assume that \eqref{eq:bicri} and \eqref{eq:sum-exp-noise} hold. 

Let $S^*$ denote the optimal solution in the original dataset $D$, i.e., $|S^*| = k$ such that $\cost^q(S^*; D) = \opt^q_k(D)$. Furthermore, let $\hD$ be the dataset that, for each $t \in T$, contains $\hath_t$ copies of $t$. By the guarantee of the non-private approximation algorithm, we have 
\begin{align}
\cost^q(S; \tD) &\leq O(1) \cdot \opt^q(\tD) \nonumber \\
&\leq O(1) \cdot \cost^q(S^*; \tD) \nonumber \\
&\overset{\eqref{eq:sum-exp-noise}}{\leq} O(1) \cdot \cost^q(S^*; \hD) + O\left(\frac{k \log(n/\beta)}{\eps}\right) \nonumber \\
&\overset{(\clubsuit)}{\leq} O(1) \cdot \left(\cost^q(S^*; D) + \cost^q(T; D)\right) + O\left(\frac{k \log(n/\beta)}{\eps}\right) \nonumber \\
&\overset{\eqref{eq:bicri}}{\leq} O(1) \cdot \opt^q_k(D) + O\left(\frac{k \log(mn/\beta)}{\eps}\right), \label{eq:cost-on-synth}
\end{align}
where $(\clubsuit)$ follows from the $q$th power triangle inequality, i.e., $d(u, v)^q \leq 2^q (d(u, w)^q + d(w, v)^q)$ for all $u, v, w \in [m]$.

Finally, we have
\begin{align*}
\cost^q(S; D)
&\overset{(\spadesuit)}{\leq} O(1) \cdot \left(\cost^q(S; \hD) + \cost^q(T; D)\right) \\
&\overset{\eqref{eq:sum-exp-noise},\eqref{eq:bicri}}{\leq}  O(1) \cdot \left(\cost^q(S; \tD) + \frac{k \log(mn/\beta)}{\eps}\right) \\
&\overset{\eqref{eq:cost-on-synth}}{\leq} O(1) \cdot \left(\opt^q_k(D) + \frac{k \log(mn/\beta)}{\eps}\right),
\end{align*}
where $(\spadesuit)$ again follows from the $q$th power triangle inequality.
\end{proof}

\subsection{From One-Sided to Two-Sided DP}

Finally, we go from one-sided to two-sided DP using \Cref{thm:subsample-dp} similar to the previous analyses.

\begin{proof}[Proof of \Cref{thm:clustering}]
Let $\cA$ denote a $\ln(2)$-add-DP algorithm from \Cref{thm:clustering-approx-one-sided}. We use the algorithm subsampled version of this algorithm, i.e., $\cA^{\cS_p}$ for $p = 1 - e^{-\eps}$. \Cref{thm:subsample-dp} immediately implies that this algorithm is $\eps$-DP as desired.

For the utility, let $D_{\subs} \sim \cS_p(D)$ denote the subsampled dataset that is fed as an input to $\cA$ and let $T^*$ denote the output set. 
From \Cref{thm:clustering-approx-one-sided}, w.p. $1 - \beta/2$, we have
\begin{align} 
\cost^q(T^*; D_{\subs}) &\leq O(1) \cdot \opt^q_k(D_{\subs})  + O\left(k \log \left(\frac{mn}{\beta}\right)\right). \label{eq:clustering-util-subsampled}
\end{align}
Furthermore, applying the Chernoff bound (\Cref{thm:chernoff}) with $Z_x := \left(\min_{c \in T} d(c, x)^q \right) \cdot \ind[x \in D_{\subs}], \mu = p \cdot \cost^q(T; D), \alpha = 0.1, \zeta = 20000 k \log(m/\beta)$ together with a union bound over all sets $T \in \binom{[m]}{\leq k}$, we can conclude that the following hold simultaneously for all $T \in \binom{[m]}{\leq k}$ with probability at least $1 - \beta/2$:
\begin{align}
\cost^q(T; D_{\subs}) &\geq (1 - \alpha)p \cdot \cost^q(T; D) - \zeta \label{eq:clustering-lb}, \\
\cost^q(T; D_{\subs}) &\leq (1 + \alpha)p \cdot \cost^q(T; D) + \zeta \label{eq:clustering-ub}.
\end{align}
When \eqref{eq:clustering-util-subsampled}, \eqref{eq:clustering-lb}, and \eqref{eq:clustering-ub} all hold, we have
\begin{align*}
\cost^q(T^*; D) 
&\overset{\eqref{eq:clustering-lb}}{\leq} \frac{1}{(1 - \alpha)p} \cost^q(T^*; D_{\subs}) + \frac{\zeta}{(1 - \alpha)p} \\
&\overset{\eqref{eq:clustering-util-subsampled}}{\leq} O\left(\frac{1}{(1 - \alpha)p}\right) \cdot \opt^q_k(D_{\subs})  + O\left(\frac{k \log \left(\frac{mn}{\beta}\right)}{p(1 - \alpha)}\right) + \frac{\zeta}{(1 - \alpha)p} \\
&\overset{\eqref{eq:clustering-ub}}{\leq} O\left(\frac{1 + \alpha}{1 - \alpha}\right) \cdot \opt^q_k(D)  + O\left(\frac{k \log \left(\frac{mn}{\beta}\right)}{p(1 - \alpha)}\right) + O\left(\frac{\zeta}{(1 - \alpha)p}\right) \\
&\leq O\left(1\right) \cdot \opt^q_k(D)  + O\left(\frac{k \log \left(\frac{mn}{\beta}\right)}{\eps}\right),
\end{align*}
which concludes our proof.
\end{proof}

\end{document}